%% file: main.tex
\documentclass[11pt]{article}

\input{preamble}

\begin{document}
\maketitle
\begin{abstract}
Longitudinal observational data are often subject to informative missingness. For example, in healthcare research using electronic health records, covariates and outcomes are measured only when patients interact with the healthcare system. Whether these data are observed usually depends on the unobserved variable values themselves, introducing a common form of data missing not at random. We develop identification and estimation methods for the effects of longitudinal modified treatment policies under a self-censoring model which explicitly models this dependence. We consider effects of treatments that can be modified only at measurement times (e.g., interactions with the heathcare system). We study two settings: one in which the observed history suffices for treatment confounding control, and another in which confounder control requires covariate values that were not measured at missed visits. In both settings, we combine longitudinal causal identification with shadow-variable conditional moment restrictions to obtain recursive identification formulas for the counterfactual outcome mean. We derive gradients of the pathwise derivatives and second-order remainder expansions, and use these results to construct estimators that accommodate machine learning for nuisance-function estimation. Under suitable product-rate conditions and cross-fitting, the estimators are asymptotically linear and permit root-n inference, including simultaneous inference across outcome times. We also describe conditional-moment learning and ensemble approaches for estimating the nuisance functions required by our approach, together with a pool-and-smooth strategy that borrows information across sparsely measured outcomes.
\end{abstract}

\input{observed_main}
\input{full_main}
\input{estimation_main}
\input{conclusion}

\bibliographystyle{plainnat}
\bibliography{refs}

\appendix
\input{observed_proofs}
\input{full_proofs}

\end{document}

%% file: preamble.tex
\usepackage[margin=1in]{geometry}
\usepackage{tikz}
\usepackage{here}
\usepackage{amsmath}
\usepackage{amsthm}
\usepackage{amssymb}
\usepackage{xcolor}
\usepackage{authblk}
\usepackage[round]{natbib}
\usepackage[colorlinks,citecolor=blue,urlcolor=blue]{hyperref}
\usepackage{cleveref}
\usepackage[expansion=false]{microtype}

\renewcommand{\underbar}[1]{\underline{#1}}
\usetikzlibrary{arrows.meta}

\renewcommand{\P}{\mathsf{P}}
\newcommand{\p}{\mathsf{p}}
 
\newcommand{\E}{\mathsf{E}}
\newtheorem{example}{Example}

\crefname{example}{Example}{Examples}
\Crefname{example}{Example}{Examples}
\newtheorem{assumption}{Assumption} 
\newtheorem{proposition}{Proposition}
\newtheorem{corollary}{Corollary} 
\newtheorem{theorem}{Theorem} 
\newtheorem{lemma}{Lemma}
\newcommand{\indep}{\mathrel{\perp\!\!\!\perp}}

\crefname{assumption}{Assumption}{Assumptions} 
\Crefname{assumption}{Assumption}{Assumptions}

\DeclareMathOperator{\supp}{supp}

\author[1]{Iv\'an D\'iaz}
\affil[1]{\small Division of Biostatistics, Department of Population
      Health, New York University Grossman School of Medicine, New York, NY, USA}

\title{Longitudinal causal inference under informative missingness using the self-censoring model.}

\date{October 6, 2026}

%% file: observed_main.tex
\section{Introduction}
Longitudinal observational data are increasingly used to study the effects of interventions in settings where randomized trials are unavailable or infeasible. Examples include prospective cohort studies, electronic health records, claims databases, registries, and other forms of real-world data. For instance, electronic health records are increasingly regarded as an important source of evidence on treatment effectiveness and safety in routine clinical practice, with their use receiving growing attention from clinical researchers and regulatory agencies \citep{abbasi2025promise,fda2024rwd}. 

Since \cite{robins1986new} laid the foundation for the identification and estimation of effects of time-varying treatments, a large literature has developed general methods for estimating causal effects from longitudinal data. Much of this literature, however, is developed for data measured on a fixed time grid under the assumption that the relevant treatment, covariate, and outcome histories are fully observed, or that missingness is ignorable conditional on the observed history. These assumptions are often violated in the longitudinal data sources that motivate modern causal analyses, leading to important biases. In electronic health records, for example, covariates and outcomes are typically observed only when patients interact with the health system. The reasons for these interactions are themselves often due to underlying health status or treatment needs, creating a problematic form of nonignorable data missing.  A general framework that simultaneously accommodates time-varying treatments, longitudinal causal estimands, and visit or measurement processes that depend directly on the variables of interest is missing from the literature. 

A solution posed in the literature addresses informative measurement by intervening on the visit
process itself by defining joint regimes on the treatment and visit process under a
no-direct-effect assumption \citep[e.g.,][]{neugebauer2017identification,
neugebauer2020exploiting, caniglia2019emulating}. These approaches address the missing-data problem by design of the causal question: under the intervened regime
the relevant variables are observed by construction. Encounters with the healthcare system, however, do not only represent measurement events: interventions that alter
healthcare utilization also alter downstream outcomes through treatment pathways that need not be of substantive causal interest. Furthermore, the interventions of interest (e.g., prescription medications, surgeries, etc.) often can only occur at such visits, making the visit process central to the definition of the treatment regimens. We therefore target the effect of interventions on
the treatments/exposures of interest alone, leaving the visit process at its natural value.
This choice defines an estimand of direct scientific interest, but it means that missingness must be addressed through means other than interventions on the visit process. Other related works do not explicitly describe an intervention on the monitoring process, but assume missingness is completely explained by measured data \citep{lin2004analysis, buzkova2009semiparametric,
pullenayegum2016longitudinal, coulombe2021weighted}. 

We address this gap in the literature by bringing together two strands of the causal inference literature. The first is the literature on longitudinal causal inference, beginning with the longitudinal g-computation formula for time-varying treatments \citep{robins1986new} and continuing through  developments enabling the use of machine learning regression to estimate nuisance parameters \citep{van2011targeted, luedtke2017sequential, rotnitzky2017multiply} and estimation of general effects based on longitudinal modified treatment policies \citep{young2014identification, richardson2013single, diaz2023nonparametric, hoffman2024studying}. The second is the literature on self-censoring and shadow-variable models for nonignorable missingness \citep{miao2015identification, miao2016varieties, miao2024identification, sun2018semiparametric, li2023non, li2023self}, which allow identification in situations where the missingness indicator depends directly on the unobserved time-varying variables. Instead of blocking all backdoor paths, identification is achieved through exclusion restrictions on a so-called \textit{shadow variable}, assumed to affect the missingness indicator only through its effect on the covariates and outcomes. 

Our results on the self-censoring  model are most closely related to the work of \citet{miao2024identification}, \citet{chen2023causal}, \citet{zhang2023proximal}, and \citet{li2023non}. \citet{miao2024identification} study identification and efficient estimation of the distribution of an outcome using a shadow variable, giving nonparametric identification conditions for the full-data law, semiparametric and doubly robust estimators, and the corresponding efficiency theory. \citet{chen2023causal} study causal effect identification when the outcome may cause its own missingness, or self-censoring, and show that, under conditions, the treatment can serve as a shadow variable leading to an inverse-probability-weighted estimator. \citet{zhang2023proximal} develop proximal methods for nonignorable missing outcomes without requiring uniqueness of the bridge function, whereas \citet{li2023non} derive conditions under which a mean functional is identified using a shadow variable even when the full-data distribution is not identified.

The present paper differs from this prior work in several respects. First, our target is a longitudinal causal parameter under a time-varying modified treatment policy, rather than a marginal mean or a single-time-point causal effect as in previous self-censoring work.  Second, we  leverage recent pool-and-smooth methods \citep{susmann2025curve} to address high-sparsity in the measurement process. Third, prior shadow-variable and self-censoring work often uses odds-ratio or missingness-odds parameterizations of bridge functions. In contrast, we define the true bridge as an inverse missingness probability and characterize arbitrary bridge solutions directly through a conditional moment equation. This avoids the need for preliminary estimation of missingness probabilities and the subsequent construction of potentially unstable odds, and places bridge estimation in the standard conditional-moment framework. Fourth, we avoid parametric modeling of the bridge function, instead working with nonparametric linear operators and their adjoints. Finally, we characterize pathwise differentiability and semi-parametric estimation for causal effects under a non-parametric longitudinal model that is restricted only by the shadow variable assumptions. 

We consider the effects of treatments that can be \textit{assigned or modified} only during encounters. The leading example is the effect of prescription medications in electronic health records, where prescriptions can generally be initiated, discontinued, or changed only when a patient interacts with the healthcare system. With biomedical applications in mind, we develop methods for two distinct types of treatments. In the first setting, we consider treatments for which adjustment for the observed history, including the prior visit pattern, is sufficient for treatment confounding control. An example is a medication prescription choice driven primarily by clinical information available to the provider at the encounter, such as selection of an antibiotic based on the documented diagnosis, culture results, allergies, and laboratory measurements. In the second setting, we consider treatments that require adjustment for the full data history, including variables that may not have been measured. Leading examples include settings in which patient preferences play an important role in treatment selection, for example when a patient is deciding between gastric bypass surgery and GLP-1 medications. For these exposures, adequate confounding control may require adjustment for the full latent covariate history, including covariate values that were not measured at missed visits.

Our approach proceeds in three steps. First, we formulate a longitudinal causal model with a covariate-dependent visit process reflecting our non-ignorable missing data structure. Second, we give identification results for the counterfactual mean under a longitudinal modified treatment policy by combining standard longitudinal causal identification arguments with a shadow-variable bridge representation for the missing outcome/covariate process. Third, we derive the corresponding semiparametric estimation theory. We characterize the tangent space induced by the bridge equation, obtain gradients of the pathwise derivatives, and use these gradients to construct targeted/double machine learning estimators. The construction of these estimators relies on standard theory, including a second-order a von Mises expansion, that provides the conditions under which they satisfy desirable properties such as asymptotic normality. Estimation requires solving Fredholm integral equations for the bridge and adjoint functions, for which we use modern conditional moment and minimax learning approaches \citep{muandet2020kernel, lewis2018adversarial, kallus2021causal, kompa2022deep}. We propose to estimate the causal density ratios induced by the modified treatment policy using probabilistic classification, Riesz learning, or more general Bregman density-ratio losses \citep{qin1998inferences, chernozhukov2018double, hines2025learning}. 

\section{Observed data and causal model}\label{sec:model}
Assume that at each time \(t\in\{1,\ldots,\tau+1\}\), there is a vector \(C_t\) containing the time-varying variables of interest, including biomarkers, comorbidities, potential confounders, and outcomes. These variables are defined continuously at every time point, but are observed at discrete times, only when a measurement or clinical encounter occurs. At each time point, there is also a vector \(A_t\) containing all treatment variables, such as medication prescriptions, therapies, and other clinical interventions. We are interested in the effects of interventions on selected components of the longitudinal process on an outcome at time \(t+1\in\{2,\ldots,\tau+1\}\), defined as \(Y_{t+1}:=h(C_{t+1})\) for some function \(h\), such as a coordinate projection.

We assume that the baseline variables \(C_1\) are always observed. This assumption is natural in our motivating applications, where study entry or cohort eligibility is defined by an observation, such as an encounter with the health system, and baseline demographic, clinical, and treatment-history variables are recorded at or before that encounter. By contrast, follow-up covariates and outcomes may be observed only when patients subsequently self select for a visit. Let \(R_t\) denote an indicator that a visit or measurement occurs at time \(t>1\), so that \(C_t\) is observed only if \(R_t=1\), and set \(R_1\equiv1\). Importantly, \(C_t\) exists regardless of whether \(R_t=1\); the visit process determines whether it is observed. We assume the following structural causal model:
\begin{align*}
    C_t & = f_{C,t}(\bar A_{t-1}, \bar R_{t-1}, \bar C_{t-1}, U_{C,t}),\\
    R_t & = f_{R,t}(\bar A_{t-1}, \bar R_{t-1}, \bar C_t, U_{R,t}),\\
    A_t & = f_{A,t}(\bar A_{t-1}, \bar R_t, \bar C_t, U_{A,t}),
\end{align*}
where $\bar X_t:=(X_1,\ldots, X_t)$ is the history of the variable $X$ up to time $t$, with $\bar X_0$ defined as the empty history. Here $U_{X,t}$ denotes all unobserved variables that, together with the observed parents of $X_t$, determines its value through the function $f_{X,t}$. Exclusion restrictions and conditional independence assumptions on this model will be clarified in each section below. We will use $\underline X_t:=(X_t, \ldots, X_{\tau+1})$ to denote the future of a random variable including $t$. Define $L_1=C_1$, and for $t>1$ define
\begin{equation*}
    L_t=\begin{cases}
        C_t&\text{ if } R_t=1\\
        \zeta &\text{ if } R_t=0,
    \end{cases}
\end{equation*}
where $\zeta\not\in \mathcal C_t$ is an arbitrary constant used to indicate that the random variable $C_t$ is unobserved in the event $R_t=0$.  

We develop methods for two distinct classes of causal questions. In \S\ref{sec:visit-dependent}, we consider interventions on the treatment history \((A_1,\ldots,A_t)\) in settings where adjustment for the observed history, including the prior visit pattern, is sufficient for confounding control. These methods are intended for treatments whose values can be initiated or modified only when a visit occurs, such as prescription medications in pharmacoepidemiologic research. In \S\ref{sec:full-history}, we consider the same class of interventions on \((A_1,\ldots,A_t)\), but in settings where adequate confounding control requires adjustment for the full data history, including covariate values that may not have been measured at missed visits. These methods are intended for treatment decisions that may depend on information not fully captured in the observed record, such as patient preferences or evolving latent health status.

We will use \(H_t=(L_1,A_1,\ldots,L_{t-1},A_{t-1},L_t)\) to denote the history of observed data up to right before \(A_t\), and \(F_t=(C_1,A_1,\ldots,C_{t-1},A_{t-1},C_t)\) to denote the corresponding full-data history. Notice that \(L_t\) encodes \(R_t\) and therefore the latter is omitted from the notation in \(H_t\). Methods based on the full-data history will require additional restrictions relating the visit process to the latent covariate process, formalized in \Cref{ass:fh-measurement-exchange}; we elaborate on these assumptions in \S\ref{sec:full-history}. The causal models used in this work are illustrated in Figure~\ref{fig:dag} for \(\tau=2\).



\section{Causal effects under observed-history adjustment}\label{sec:visit-dependent}

Throughout the paper, we assume that the vector $A_t$ contains only treatments that are initiated, discontinued, or modified at a clinical encounter. Prescription medications are the leading example: if no visit occurs at time $t$, then the treatments cannot be modified at that time, except through continuation of a prior prescription. In this section we assume that treatment decisions are based solely on the information recorded in the database. We formalize this in the following assumption. 

\begin{assumption}[Observed-history treatment mechanism]\label{ass:visit-dep} Assume that the structural equation for $A_t$ satisfies the restriction
\[
    A_t =
    \begin{cases}
        f_{A_t}^{(1)}(\bar A_{t-1}, \bar R_{t-1}, \bar L_t, U_{A,t}),
        & R_t=1,\\
        \alpha_t(H_t), & R_t=0,
    \end{cases}
\]
where \(\alpha_t\) is a potentially known deterministic continuation rule that determines treatment status at a no-visit time \(t\) from the previous treatments. 
\end{assumption}
In this assumption, \(\bar A_{t-1}\) is understood to include the treatment
information needed to apply the continuation rule, such as prescription date, dose, days
supplied, and refill information when relevant. For example,
\(\alpha_t\) may encode whether a prior prescription remains active. In the absence of a visit, this assumption states that treatment status evolves deterministically according to the previously observed treatment history. 

The restriction to treatments that can only be modified at visit times has two
important implications for identification of causal effects. First, the treatment assigned at time \(t\) can often be assumed to depend only on the
recorded history, including \(L_s\) when \(R_s=1\), but not on components of
\(C_s\) that are unrecorded because \(R_s=0\), for $s<t$. For example, we can assume that a clinician does not
directly base a treatment decision at time \(t\) on a blood pressure value at a prior time $s<t$ that was not
measured or recorded in the electronic health record because the patient did
not have a visit at that time. Therefore, under the visit-dependent treatment
assumption, unrecorded components of \(C_s\) at times with \(R_s=0\) do not confound
the relation between the current treatment \(A_t\) and subsequent variables, although they may still affect future covariates and outcomes. Second, for visit-dependent treatments, \(A_t\) may be known even when \(R_t=0\),
provided that exposure status in the absence of a visit is determined by a
prespecified rule based on the previously observed treatment history. For
example, this rule may encode whether a prior prescription remains active or has
expired. \Cref{ass:visit-dep} is not strictly required for the technical results in this section, but it is important in applications for defining interventions and causal effects that are practically meaningful (see Example \ref{ex:grace}). A version of \Cref{ass:visit-dep} will be required for identification in \S\ref{sec:full-history}.

Assume the causal parameter is defined in terms of a modified treatment policy \citep{diaz2023nonparametric}. Specifically, at time $t=1$, consider an intervention that assigns treatment according to $d_1(A_1, L_1)$. This generates counterfactual covariates, visit indicators, and treatments at time $t=2$, denoted by $C_2(d_1)$, $R_2(d_1)$, and $A_2(d_1)$, respectively. At time $t=2$, the modified treatment policy assigns treatment according to $d_2\{A_2(d_1), L_2(d_1)\}$. This subsequently generates $C_3(\bar d_2)$ and $R_3(\bar d_2)$. The process is repeated across time points, and, for simplicity, we denote $
A_t^d = d_t\{A_t(\bar d_{t-1}), H_t(\bar d_{t-1})\}$,
where $H_t(\bar d_{t-1}) =
\{L_1, A_1^d, L_2(d_1), \ldots, A_{t-1}^d, L_t(\bar d_{t-1})\}$
denotes the history that would have been observed under the intervention through time $t-1$. Here, we emphasize that the function $d_t$ need not modify all components of the treatment vector $A_t$. For instance, if one is interested in the effects of antidepressants on weight gain, then the function $d_t$ can be defined such that it only modifies the coordinate of $A_t$ corresponding to the antidepressant of interest. This flexibility will be important in constructing data structures for which some of the identification assumptions below are feasible. 

In addition, the intervention must be specified in a way that respects the visit process.
For visit-dependent treatments, the policy may modify treatment only at time points with a naturally occurring visit. When no visit occurs, treatment evolves according to the no-visit continuation rule and the natural value of treatment. We give two examples below.

\begin{example}[Prescription medication with grace period]\label{ex:grace}
    Consider a medication prescription with a grace period \citep{mcgrath2025time} of one time point. At time $t=1$, patients are assigned to initiate the medication, that is, $d_1(A_1, H_1)=1$. This generates a counterfactual covariate vector $C_2(d_1)$, a counterfactual visit indicator $R_2(d_1)$, as well as a natural value of treatment $A_2(d_1)$. At time point $t=2$, since there is a grace period of one time point, the patient is allowed to take the medication level they would have naturally taken, i.e., $d_2\{A_2(d_1), H_2(d_1)\} = A_2(d_1)$. At time point $t\geq 3$, if the patient is at the end of a grace period and would have had a visit under the intervention history through time $t-1$, they are assigned the prescription; otherwise, they are allowed to take what they would have naturally taken. That is:
\[
 d_t\{A_t(\bar d_{t-1}), H_t(\bar d_{t-1})\}
 =
\begin{cases}
    1, & \text{if } R_t(\bar d_{t-1}) = 1 \text{ and } A_{t-1}^d=0,\\
    A_t(\bar d_{t-1}), & \text{otherwise.}
\end{cases}
\]
\end{example}

\begin{example}[Treatment intensification]
Consider an antihypertensive medication whose dose can be increased only at a clinical encounter. Let $A_t\in\{0,1,\ldots,a_{\max}\}$ denote the dose level at time $t$, where larger values correspond to higher treatment intensity, and let $S_t\in C_t$ denote systolic blood pressure. Suppose the intervention initiates treatment at time $t=1$, so that $d_1(A_1,H_1)=1$. At later time points, the intervention increases the dose by one level if, under the intervention history, the patient has a visit and the measured systolic blood pressure is above 140 mmHg; otherwise, the patient receives the treatment value they would have naturally received. Formally, for $t\geq 2$,
\[
 d_t\{A_t(\bar d_{t-1}),H_t(\bar d_{t-1})\}
 =
 \begin{cases}
 \min\{A_t(\bar d_{t-1})+1,a_{\max}\},
 & \text{if } R_t(\bar d_{t-1})=1 \text{ and } S_t(\bar d_{t-1})>140,\\
 A_t(\bar d_{t-1}),
 & \text{otherwise.}
 \end{cases}
\]
\end{example}
Assume that we are interested in the counterfactual outcome mean $\theta_{t+1}(\bar d_t) = \E[Y_{t+1}(\bar d_t)]$ for some fixed time $t\leq \tau$, or in the curve $\{\theta_{t+1}(\bar d_t):t\geq 1\}$. In this section we develop methods under the assumption that the observed history $H_s:s\leq t$ provides sufficient confounder control:

\begin{assumption}[Sequential randomization conditional on observed history]\label{ass:nuc}
For every $s\leq t$, assume that
\[
U_{A,s}
\indep
\left(
\underbar U_{C,s+1},
\underbar U_{R,s+1},
\underbar U_{A,s+1}
\right)
\mid H_s.
\]
\end{assumption}
This assumption corresponds directly to the assumption that all common causes of $A_s$ and all subsequent variables are measured and adjusted for. We also require that the intervention of interest is well supported in the data.
\begin{assumption}[Supported intervention]\label{ass:support}
At baseline, if $(a_1,l_1)\in\supp(A_1,L_1)$, then $(a_1^d,l_1)\in\supp(A_1,L_1)$. For each $s=2,\ldots,t$, if
$(a_s,l_s)\in \supp\{A_s,L_s\mid A_{s-1}=a_{s-1}^d,H_{s-1}=h^d_{s-1}\}$,
then
$(a_s^d,l_s)\in \supp\{A_s,L_s\mid A_{s-1}=a_{s-1}^d,H_{s-1}=h^d_{s-1}\}$.
\end{assumption}
Under these conditions, standard arguments for the identification of effects with time-varying confounders \citep[e.g.,][]{richardson2013single, young2014identification, diaz2023nonparametric} show that the identifying functional can be written through sequential regressions. First define the terminal regression
\[
m_{t+1,t}(a_t,h_t)
:=
\E(Y_{t+1}\mid A_t=a_t,H_t=h_t).
\]
Then, for $s=t-1,\ldots,1$, define recursively
\[
m_{t+1,s}(a_s,h_s)
:=
\E\!\left[
 m_{t+1,s+1}\{d_{s+1}(A_{s+1},H_{s+1}),H_{s+1}\}
 \mid A_s=a_s,H_s=h_s
\right].
\]
We use the double index \((t+1,s)\) to distinguish the outcome time \(t+1\) from the recursion step \(s\). Although retaining both indices makes the notation somewhat more cumbersome, it allows the regression functions corresponding to different outcome times to be pooled and smoothed across \(t\) (see \S\ref{sec:estima-vd}). The target $\theta_{t+1}(\bar d_t)$ is identified as $\E[m_{t+1,1}\{d_1(A_1,H_1),H_1\}]$. However, $Y_{t+1}$ is unobserved when $R_{t+1}=0$, which implies that $m_{t+1,t}$ and hence all earlier regression functions are not identified by the observed data law without additional assumptions. To identify $\theta_{t+1}(\bar d_t)$, we introduce the following assumptions, which are extensions of the self-censoring model assumptions of \cite{chen2023causal}. The shadow variable restriction is depicted in Figure~\ref{fig:dag}.

\begin{assumption}[Shadow variable exclusion]\label{ass:shadow}
    Assume $A_t\indep R_{t+1}\mid H_t,C_{t+1}$.
\end{assumption}
\begin{assumption}[Positivity of missingness mechanism]\label{ass:positive}   
    Assume $\p(R_{t+1}=1\mid C_{t+1}, H_t)>0$ almost surely. 
\end{assumption}

\begin{figure}[!htb]
\centering

\tikzset{
  >=Stealth,
  dag/.style={->, thick},
  var/.style={circle, draw, thick, minimum size=18pt, inner sep=1pt, align=center},
  treat/.style={rectangle, draw, thick, rounded corners=2pt,
                minimum height=18pt, minimum width=26pt, inner sep=2pt, align=center},
}

\begin{tikzpicture}[x=1cm,y=1cm]


\node[var]  (C1)  at (0,0)      {$C_1$};

\node[var](A1)  at (2.3,-1.5) {$A_1$};

\node[var]  (C2)  at (4.6,0)     {$C_2$};
\node[var]  (R2)  at (4.6,-3)  {$R_2$};
\node[var](A2)  at (6.9,-1.5)  {$A_2$};

\node[var]  (C3)   at (9.2,0)    {$C_3$};
\node[var]  (R3)  at (9.2,-3) {$R_3$};

\draw[dag] (C1) to (A1);
\draw[dag] (C1) to[out=20,in=120,looseness=1.25] (A2);
\draw[dag] (A1) to[out=60,in=160,looseness=1.25] (C3);
\draw[dag] (C2) to (A2);
\draw[dag] (R2) to (A2);
\draw[dag] (A2) to (C3);
\draw[dag] (C3) to (R3);
\draw[dag] (R2) to (R3);

\draw[dag] (C1) to (C2);
\draw[dag] (C1) to[out=-65,in=170,looseness=1.15] (R2);
\draw[dag] (C1) to[out=15,in=165,looseness=1.25] (C3);
\draw[dag] (C1) to[out=-5,in=175,looseness=1.35] (R3);

\draw[dag] (C2) to (C3);
\draw[dag] (C2) -- (R2);
\draw[dag] (C2) to[out=-65,in=170,looseness=1.20] (R3);
\draw[dag] (R2) to[out=10,in=-115,looseness=1.20] (C3);
\draw[dag] (A1) to (C2);
\draw[dag] (A1) to (A2);
\end{tikzpicture}%
\caption{Causal model illustration for visit-dependent treatments and $\tau = 2$.}\label{fig:dag}
\end{figure}

Assumption~\ref{ass:shadow} may be defensible in administrative health data when
\(C_{t+1}\) contains a sufficiently rich representation of the patient's health
status. Within strata of the observed history \(H_t\), the assumption states that
any association between the prior treatment vector \(A_t\) and the probability of
a subsequent visit \(R_{t+1}\) is mediated by the possibly unobserved covariate
vector \(C_{t+1}\), rather than arising through a direct pathway from treatment to
the visit process. The plausibility of the assumption can be strengthened by letting
\(C_{t+1}\) also capture relevant adverse effects, and other
post-treatment clinical features that influence the decision to seek or provide care. Likewise, the assumption may fail when treatments affect healthcare
utilization through mechanisms that are not adequately represented in
\(C_{t+1}\). For example, a treatment may cause addiction or adverse
effects that influence the likelihood of healthcare utilization. If those effects are not recorded in \(C_{t+1}\), the assumption would fail. The assumption also requires that
all common causes of \(A_t\) and \(R_{t+1}\) are captured in \(H_t\). Its
plausibility therefore depends not only on the richness of \(C_{t+1}\), but also
on whether \(H_t\) and \(C_{t+1}\) adequately represent the clinical, behavioral,
and administrative mechanisms jointly governing treatment and subsequent
healthcare encounters.

If these two assumptions hold, progress towards identification proceeds as follows. Define
\[
g_t(a_t\mid h_t) := \p(a_t\mid H_t=h_t).
\]
Letting \(g^d_t(\cdot\mid h_t)\) denote the conditional density or probability mass function of \(d_t(A_t, H_t)\mid H_t=h_t\), define
\[
\omega_t(a_t,h_t):=\frac{g^d_t(a_t\mid h_t)}{g_t(a_t\mid h_t)}.
\]
Note that $g_t^d$ need not exist, for example if $A_t$ is continuous and $d_t(a_t, h_t)$ is a constant function. Below we present an assumption that has been previously used to ensure existence of this density \cite{Haneuse2013}:
\begin{assumption}[Piecewise smooth invertibility]\label{ass:piecewise-invertibility}
For each \(t\) and each \(h_t\), the support of \(A_t\mid H_t=h_t\) can be
partitioned into subintervals $\mathcal I_{t,j}(h_t)$, $j=1,\ldots,J_t(h_t)$, such that, for \(a_t\in\mathcal I_{t,j}(h_t)\),
$
d_t(a_t,h_t)=d_{t,j}(a_t,h_t),
$
and each map \(d_{t,j}(\cdot,h_t)\) is one-to-one with inverse
\(b_{t,j}(\cdot,h_t)\) that is differentiable with respect to its first
argument.
\end{assumption}
Define the inverse-probability bridge function
\begin{equation}
\beta_t^0(H_t,C_{t+1})
:=
\frac{1}{\p(R_{t+1}=1\mid H_t,C_{t+1})}.\notag
\end{equation}
Furthermore, define the linear operator $T_t$ acting on functions $s(H_t, C_{t+1})$ by
\[
(T_t s)(a_t,h_t) := \E\{R_{t+1}s(H_t,C_{t+1})\mid A_t=a_t,H_t=h_t\}.
\]
The following theorem shows how this bridge can be used to express the unidentified regression $m_{t+1,t}$ as an observed-data pseudo-outcome regression.
\begin{theorem}[Latent bridge representation]\label{thm:latent_bridge}
Under Assumptions \ref{ass:shadow} and \ref{ass:positive}, the true inverse-probability bridge \(\beta_t^0\) satisfies
\begin{equation}\label{eq:fredholm_wor}
(T_t\beta_t^0)(A_t,H_t)=1\quad\text{a.s.}
\end{equation}
Moreover, the full-data regression obeys
\begin{equation}\label{eq:m_bridge_repr}
m_{t+1,t}(a_t,h_t)
=
\E[\eta_t(O)\mid A_t=a_t,H_t=h_t],
\end{equation}
where
\begin{equation}\label{eq:tildeY_def}
\eta_t(O) := R_{t+1}Y_{t+1}\beta_t^0(H_t,C_{t+1}).
\end{equation}
\end{theorem}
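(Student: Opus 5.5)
Both claims follow from iterated expectations, conditioning first on $(A_t,H_t,C_{t+1})$ and then using \Cref{ass:shadow} to drop $A_t$ from the missingness probability. The central fact is
\[
\p(R_{t+1}=1\mid A_t,H_t,C_{t+1})=\p(R_{t+1}=1\mid H_t,C_{t+1})=\{\beta_t^0(H_t,C_{t+1})\}^{-1},
\]
which holds almost surely. This is the content of \Cref{ass:shadow}, and \Cref{ass:positive} guarantees that $\beta_t^0$ is finite, so the display is well defined.

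For \eqref{eq:fredholm_wor}, I will write
\[
(T_t\beta_t^0)(A_t,H_t)=\E\big[\beta_t^0(H_t,C_{t+1})\,\E\{R_{t+1}\mid A_t,H_t,C_{t+1}\}\mid A_t,H_t\big].
\]
This step pulls $\beta_t^0$ out of the inner expectation, which is legitimate because $\beta_t^0$ is measurable with respect to $(H_t,C_{t+1})$. Substituting the displayed identity, the inner factor cancels $\beta_t^0$ and leaves $\E(1\mid A_t,H_t)=1$ almost surely.

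For \eqref{eq:m_bridge_repr}, the same tower argument applies with $Y_{t+1}=h(C_{t+1})$, which is also a function of $C_{t+1}$. It gives
\[
\E[\eta_t(O)\mid A_t,H_t]=\E\big[Y_{t+1}\beta_t^0(H_t,C_{t+1})\,\p(R_{t+1}=1\mid A_t,H_t,C_{t+1})\mid A_t,H_t\big]=\E[Y_{t+1}\mid A_t,H_t].
\]
The right-hand side is $m_{t+1,t}(A_t,H_t)$ by definition. Evaluating at $(a_t,h_t)$ is valid on the support of $(A_t,H_t)$.

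I expect the main obstacle to be integrability and measurability rather than the algebra. The product $R_{t+1}Y_{t+1}\beta_t^0$ must be integrable for the conditional expectations to exist and for the tower property to apply. I will assume $\E|Y_{t+1}|<\infty$, or treat the two parts separately: handle the positive and negative parts of $Y_{t+1}$ by monotone convergence, since $\beta_t^0\ge 1$ is nonnegative. The computation above then shows
\[
\E\{R_{t+1}|Y_{t+1}|\beta_t^0\}=\E|Y_{t+1}|<\infty,
\]
so integrability comes for free. Finally, $\eta_t(O)$ is observed-data measurable, because $C_{t+1}$ enters only multiplied by $R_{t+1}$; this is the sense in which \eqref{eq:m_bridge_repr} is an observed-data pseudo-outcome regression.
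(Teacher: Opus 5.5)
Your proposal is correct and follows the paper's proof essentially step for step. You condition on $(A_t,H_t,C_{t+1})$, use \Cref{ass:shadow} to replace $\p(R_{t+1}=1\mid A_t,H_t,C_{t+1})$ by $1/\beta_t^0(H_t,C_{t+1})$ and cancel, then repeat the computation with $Y_{t+1}=h(C_{t+1})$ to obtain $m_{t+1,t}$. Your integrability argument (using $\beta_t^0\ge 1$ and monotone convergence) and the check that $\eta_t(O)$ is observed-data measurable are small additions that the paper leaves implicit.
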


Note that, unlike propensity-based treatment bridge formulations in proximal inference \citep{ying2023proximal} and odds-based formulations in the shadow-variable and self-censoring literature \citep{miao2015identification,miao2016varieties,li2023self}, our bridge equation \eqref{eq:fredholm_wor} contains no auxiliary nuisance parameters. It can therefore be estimated directly as a single conditional moment restriction, without first estimating response or treatment probabilities and subsequently transforming them into potentially unstable odds or inverse probabilities.


Although the above result links the causal parameter to an observed data parameter, it is not yet a complete identification result as our assumptions do not guarantee uniqueness of the solution to the bridge equation \eqref{eq:fredholm_wor} and therefore different solutions can lead to different values of $m_{t+1, t}$. Instead of imposing conditions that guarantee uniqueness of $m_{t+1, t}$, we focus on weaker conditions that guarantee uniqueness of the target $\theta_{t+1}(\bar d_t)$. 

Let $T_t^\star$ denote the adjoint (e.g., transpose) of $T_t$ with respect to the inner products
\[
\langle u_1,u_2\rangle_{A,t}
:=
\E\!\left\{u_1(A_t,H_t)u_2(A_t,H_t)\right\},
\qquad
\langle s_1,s_2\rangle_{C,t}
:=
\E\!\left\{R_{t+1}\,s_1(H_t,C_{t+1})s_2(H_t,C_{t+1})\right\}.
\] That is,
\[
(T_t^\star u)(h_t,c_{t+1}) = \E\{u(A_t,H_t)\mid R_{t+1}=1,H_t=h_t,C_{t+1}=c_{t+1}\}.
\]
Define
\[
\Omega_s := \prod_{j=1}^s \omega_j(A_j,H_j).
\]
We will require the following assumption on the adjoint:
\begin{assumption}[Dual range condition for the representer]\label{ass:lambda_exist}
Let
\[
\phi_t(H_t,C_{t+1})
:=
\Omega_{t-1}\,
\E\!\left\{
Y_{t+1}\,\omega_t(A_t,H_t)
\mid R_{t+1}=1,H_t,C_{t+1}
\right\}.
\]
Assume that \(\phi_t\) belongs to the range of \(T_t^\star\), i.e.,
$
\phi_t \in \mathrm{Range}(T_t^\star).
$
\end{assumption}
Assumption~\ref{ass:lambda_exist} has been previously used in the proximal causal inference literature to guarantee $\sqrt{n}$-estimability \citep{zhang2023proximal}, and in self-censoring literature to derive identification conditions that do not require identification of the entire full data distribution \citep{li2023non}. Except in degenerate cases, this dual range condition implies the relevance condition $A_t\not\indep C_{t+1}\mid R_{t+1}=1, H_t$, which is often stated as an identification assumption \citep{miao2015identification}. However, the relevance condition is neither necessary nor sufficient for identification under our approach.
 
One can alternatively impose a completeness condition that guarantees uniqueness
of the solution to the bridge equation, but such a condition is strong and is not
required for our developments. The adjoint range condition used here does not imply uniqueness of the bridge function, but only
that the particular representer \(\phi_t\) belongs to the range of \(T_t^\star\).
This condition can nevertheless be restrictive. In particular, when the shadow
variable \(A_t\) contains little variation relative to the latent variable
\(C_{t+1}\), the range of \(T_t^\star\) may be too limited to contain
\(\phi_t\). Example~5 in Appendix~\ref{app:range-discrete} illustrates this issue
for a binary shadow variable, and the appendix provides additional
finite-dimensional examples concerning \Cref{ass:lambda_exist}.

The shadow-variable exclusion is not required  to obtain an inverse-probability representation such as \eqref{eq:fredholm_wor}. Indeed, one could redefine $\beta_t^0$ to also condition on $A_t$ in which case
the above theorem would hold without \Cref{ass:shadow}. 
 The shadow-variable restriction \Cref{ass:shadow} is instead closely connected to the plausibility of the above range condition. To see this, suppose that we dispensed with the shadow-variable assumption by allowing the bridge $\beta_t^0$ to depend on $A_t$. The corresponding bridge  equation
\[\E\!\left[\allowdisplaybreaks
R_{t+1}\beta_t(A_t,H_t,C_{t+1})
\mid A_t,H_t
\right]
=1\]
would hold without the shadow-variable assumption. However, the associated operator would map functions of $(A_t,H_t,C_{t+1})$ to functions of $(A_t,H_t)$. Its adjoint would therefore have range consisting only of functions that are measurable with respect to $(A_t,H_t)$. Consequently, the range condition would require the target loading $\phi_t(A_t,H_t,C_{t+1})$ to not depend on $C_{t+1}$, a restriction that will generally fail. The exclusion of $A_t$ from the bridge avoids this degeneracy. Because the same function of $(H_t,C_{t+1})$ must satisfy the bridge equation across different values of $A_t$, variation in the shadow variable provides multiple conditional moment restrictions on a common function. Accordingly, one way to make the range condition more plausible is to define \(A_t\) as a vector of shadow variables that includes additional treatment components, such
as dose or concomitant treatments, that are predictive of \(C_{t+1}\) among
complete cases. These components need not be of direct causal interest and may
be left unchanged by the modified treatment policy \(d_t\); their observed
variation can nevertheless enlarge the range of the adjoint operator and thereby
help satisfy the target-specific range condition. This strategy involves a
tradeoff, however, because every additional component included in \(A_t\) must
also satisfy the shadow-variable exclusion restriction for the visit process.
Thus, increasing the richness of \(A_t\) may strengthen the relevance of the
shadow variables while making the exclusion restriction more difficult to
justify.

Under Assumption~\ref{ass:lambda_exist}, there exists a square-integrable measurable function
\(\lambda_t(A_t,H_t)\) such that
\begin{equation}
    (T_t^\star \lambda_t)(H_t,C_{t+1})
=\phi_t(H_t, C_{t+1}), \label{eq:rep-iden}
\end{equation}
which is instrumental to prove the following theorem.
\begin{theorem}[Identification of the MTP outcome mean]\label{thm:theta_identified}
Under \Cref{ass:nuc,ass:support,ass:shadow,ass:positive,ass:lambda_exist}, let \(\beta_t(H_t,C_{t+1})\) be any square-integrable solution of \eqref{eq:fredholm_wor}. Then
\begin{align}\label{eq:theta_target_identified}
\theta_{t+1}(\bar d_t)
&=
\E\!\left\{
R_{t+1}\Omega_tY_{t+1}\beta_t(H_t,C_{t+1})
\right\}\\
&=
\E\!\left\{\lambda_t(A_t,H_t)\right\}.\notag
\end{align}
In particular, \(\theta_{t+1}(\bar d_t)\) is identified from the observed data law even if \(\beta_t\) and \(m_{t+1,t}(a_t,h_t)\) are not. Specifically, if \(\lambda_t\) and \(\widetilde\lambda_t\) are two square-integrable solutions to \eqref{eq:rep-iden}, then
\[
\E\{\lambda_t(A_t,H_t)\}
=
\E\{\widetilde\lambda_t(A_t,H_t)\}.
\]
Thus the right-hand side of \eqref{eq:theta_target_identified} does not depend on which representer is chosen and therefore $\theta_{t+1}(\bar d_t)$ is identified.
\end{theorem}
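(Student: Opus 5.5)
The proof has three steps. First we write $\theta_{t+1}(\bar d_t)$ as a weighted full-data expectation involving $m_{t+1,t}$. Next we use \Cref{thm:latent_bridge} with the true bridge $\beta_t^0$ to get the first display. Finally we use the adjoint identity \eqref{eq:rep-iden} to move from $\beta_t^0$ to an arbitrary solution $\beta_t$ and to $\E\{\lambda_t\}$.

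\textbf{Step 1 (longitudinal g-formula in weighted form).} Under \Cref{ass:nuc,ass:support,ass:piecewise-invertibility}, the standard sequential-regression identification gives $\theta_{t+1}(\bar d_t)=\E[m_{t+1,1}\{d_1(A_1,H_1),H_1\}]$. We convert each step into a density-ratio weighting. By the change of variables guaranteed by \Cref{ass:piecewise-invertibility}, for any integrable $f$,
\[
\E[f\{d_s(A_s,H_s),H_s\}\mid H_s]=\E\{\omega_s(A_s,H_s)f(A_s,H_s)\mid H_s\}.
\]
Applying this and iterated expectations for $s=1,\dots,t$ gives
\[
\theta_{t+1}(\bar d_t)=\E\{\Omega_t\, m_{t+1,t}(A_t,H_t)\}.
\]
Here $m_{t+1,t}$ is the full-data regression of $Y_{t+1}$, which is unidentified.

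\textbf{Step 2 (first equality with the true bridge).} By \eqref{eq:m_bridge_repr} of \Cref{thm:latent_bridge}, and because $\Omega_t$ is $(A_t,H_t)$-measurable,
\[
\E\{\Omega_t m_{t+1,t}(A_t,H_t)\}=\E\{\Omega_t R_{t+1}Y_{t+1}\beta^0_t(H_t,C_{t+1})\}.
\]
We then condition on $(R_{t+1}=1,H_t,C_{t+1})$. Since $\Omega_{t-1}$ is $H_t$-measurable, this expectation equals $\E\{R_{t+1}\beta_t^0\phi_t\}=\langle \beta_t^0,\phi_t\rangle_{C,t}$.

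\textbf{Step 3 (any bridge solution, and the $\lambda_t$ form).} Let $\beta_t$ be any square-integrable solution of \eqref{eq:fredholm_wor}. The same conditioning gives
\[
\E\{R_{t+1}\Omega_tY_{t+1}\beta_t\}=\langle\beta_t,\phi_t\rangle_{C,t}.
\]
By \eqref{eq:rep-iden} and the definition of the adjoint,
\[
\langle\beta_t,\phi_t\rangle_{C,t}=\langle\beta_t,T_t^\star\lambda_t\rangle_{C,t}=\langle T_t\beta_t,\lambda_t\rangle_{A,t}=\E\{1\cdot\lambda_t(A_t,H_t)\}.
\]
This value does not depend on which bridge solution is used. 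In particular it coincides with the $\beta_t^0$ value, which equals $\theta_{t+1}(\bar d_t)$ by Step 2. For the invariance in $\lambda_t$, note that $\E\lambda_t=\langle T_t\beta_t^0,\lambda_t\rangle_{A,t}=\langle\beta^0_t,T_t^\star\lambda_t\rangle_{C,t}=\langle\beta^0_t,\phi_t\rangle_{C,t}$, and the same holds for $\widetilde\lambda_t$. Equivalently, $\E(\lambda_t-\widetilde\lambda_t)=\langle\beta_t^0,T_t^\star(\lambda_t-\widetilde\lambda_t)\rangle_{C,t}=0$.

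\textbf{Main obstacle.} The delicate step is justifying the adjoint manipulations. We need $\beta_t^0$ to be in $L^2$ under the $\langle\cdot,\cdot\rangle_{C,t}$ inner product, so that $\langle\beta_t^0,\phi_t\rangle_{C,t}$ is finite and the adjoint identity applies. We also need $\phi_t$, $\lambda_t$ and $\Omega_tY_{t+1}\beta_t$ to be integrable, so that the conditioning in Steps 2--3 is legitimate. I would handle this by noting that \Cref{ass:lambda_exist} presupposes $\phi_t\in L^2$, and that $\beta_t^0$ solves \eqref{eq:fredholm_wor} by \Cref{thm:latent_bridge}. I would add an explicit square-integrability requirement on $\beta_t^0$ (implied by \Cref{ass:positive} with a bounded-away-from-zero strengthening) if necessary. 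The remaining subtlety is the change-of-variables identity in Step 1, which is the standard argument of \citet{diaz2023nonparametric} and is cited rather than reproved.
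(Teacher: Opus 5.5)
Your proposal is correct and takes essentially the paper's route: weight the g-formula to a $\Omega_t$-weighted expectation, use the true bridge $\beta_t^0$ to pass to the complete-case form, then apply the adjoint identity $\langle\beta_t,T_t^\star\lambda_t\rangle_{C,t}=\langle T_t\beta_t,\lambda_t\rangle_{A,t}$ to reach any bridge solution and $\E\{\lambda_t\}$. The paper also writes the bridge invariance through the null-space difference $\beta_t-\beta_t^0$, which is equivalent to your direct computation, and it uses the integrability you flag without comment; that integrability holds without strengthening \Cref{ass:positive}, because $\E\{R_{t+1}\beta_t^0\mid A_t,H_t,C_{t+1}\}=1$ gives, by Tonelli, $\E\{R_{t+1}\beta_t^0|\lambda_t|\}=\E|\lambda_t|<\infty$, and likewise $\E\{R_{t+1}\beta_t^0|\phi_t|\}\leq\E|\lambda_t|$ and $\E\{R_{t+1}\beta_t^0|\Omega_tY_{t+1}|\}=\E|\Omega_tY_{t+1}|$.
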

The above theorem provides identification using re-weighting of the outcome, analogous to an inverse-probability weighting identification result. The following corollary shows that the target may be also identified by taking any solution of the bridge equation,
forming the corresponding complete-case pseudo-outcome, and applying the usual
sequential-regression formula for the modified treatment policy.
\begin{corollary}[Sequential-regression identification]\label{cor:vd-sequential-regression}
Under \Cref{ass:nuc,ass:support,ass:shadow,ass:positive,ass:lambda_exist}, let
\(\beta_t(H_t,C_{t+1})\) be any square-integrable solution of the bridge equation
\eqref{eq:fredholm_wor}. Define
\[
\eta_t(O)
:=
R_{t+1}Y_{t+1}\beta_t(H_t,C_{t+1}).
\]
Set
\[
m_{t+1,t}(a_t,h_t)
:=
\E\{\eta_t(O)\mid A_t=a_t,H_t=h_t\},
\]
and, for \(s=t-1,\ldots,1\), define recursively
\[
m_{t+1,s}(a_s,h_s)
:=
\E\!\left[
 m_{t+1,s+1}\{d_{s+1}(A_{s+1},H_{s+1}),H_{s+1}\}
 \mid A_s=a_s,H_s=h_s
\right].
\]
Then, the target parameter is uniquely identified as
\begin{equation}
\theta_{t+1}(\bar d_t)
=
\E\!\left[
 m_{t+1,1}\{d_1(A_1,H_1),H_1\}
\right].\label{eq:iden2}    
\end{equation}

\end{corollary}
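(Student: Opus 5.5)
The strategy is to reduce the corollary to \Cref{thm:theta_identified}. I will show that, for any square-integrable solution \(\beta_t\) of \eqref{eq:fredholm_wor}, the sequential-regression functional in \eqref{eq:iden2} equals the weighted functional \(\E\{R_{t+1}\Omega_t Y_{t+1}\beta_t(H_t,C_{t+1})\}\). That theorem already identifies this functional with \(\theta_{t+1}(\bar d_t)\), and shows it does not depend on which bridge solution is chosen. Uniqueness of \eqref{eq:iden2} across bridge solutions then follows as well, even though the individual regressions \(m_{t+1,s}\) may depend on the choice of \(\beta_t\).

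The core step is a standard change-of-measure identity for modified treatment policies, carried out one time point at a time. For any integrable function \(f(A_s,H_s)\), \Cref{ass:piecewise-invertibility} guarantees the density \(g_s^d\) exists, and \Cref{ass:support} guarantees that \(\omega_s\) is well defined on the relevant support. Together these give
\[
\E\big[f\{d_s(A_s,H_s),H_s\}\mid H_s\big]=\E\{\omega_s(A_s,H_s)f(A_s,H_s)\mid H_s\},
\]
which follows because \(d_s(A_s,H_s)\mid H_s\) has density \(g^d_s\). I will then prove by backward induction on \(s=t,t-1,\ldots,1\) that
\[
\E\big[\Omega_{s-1}\, m_{t+1,s}\{d_s(A_s,H_s),H_s\}\big]=\E\{R_{t+1}\Omega_t Y_{t+1}\beta_t(H_t,C_{t+1})\}.
\]

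Base case, \(s=t\). Apply the change-of-measure identity conditionally on \(H_t\); note that \(\Omega_{t-1}\) is \(H_t\)-measurable. This turns the left side into \(\E\{\Omega_t\, m_{t+1,t}(A_t,H_t)\}\). By definition \(m_{t+1,t}\) is the regression of \(\eta_t(O)\) on \((A_t,H_t)\), and \(\Omega_t\) is a function of \((A_t,H_t)\) (histories are nested), so the tower property gives \(\E\{\Omega_t\eta_t(O)\}\).

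Inductive step. Apply the change-of-measure identity at time \(s\) to rewrite the left side as \(\E\{\Omega_s\, m_{t+1,s}(A_s,H_s)\}\). Substitute the recursive definition of \(m_{t+1,s}\) as a conditional expectation given \((A_s,H_s)\), and use the tower property with \(\Omega_s\) measurable in \((A_s,H_s)\). The result is \(\E[\Omega_s\, m_{t+1,s+1}\{d_{s+1}(A_{s+1},H_{s+1}),H_{s+1}\}]\), which is the quantity at \(s+1\). Taking \(s=1\) with \(\Omega_0\equiv 1\) yields exactly \eqref{eq:iden2}.

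Note that \Cref{ass:nuc} and the shadow-variable assumptions enter only through \Cref{thm:theta_identified}; the reduction above is a purely observed-data identity.

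The main obstacle is integrability. The intermediate products \(\Omega_s\, m_{t+1,s}\) must be integrable so that each use of the tower property is valid, even though \(\beta_t\) is an arbitrary solution and the weights \(\omega_s\) may be unbounded. I would handle this by first working with nonnegative parts, or by truncation followed by monotone convergence. The required integrability would be taken from the square-integrability of \(\beta_t\) together with the finiteness of \(\E\{R_{t+1}\Omega_t Y_{t+1}\beta_t\}\) already implicit in \Cref{thm:theta_identified}. If this is not available, I would state integrability of \(\Omega_t\eta_t(O)\) as an implicit regularity condition.
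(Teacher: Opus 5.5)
Your proposal is correct and follows the paper's proof. Both reduce the sequential-regression functional to the weighted form $\E\{\Omega_t\eta_t(O)\}$, invoke Theorem~\ref{thm:theta_identified}, and inherit invariance to the bridge solution from it; the only difference is that you prove the change-of-measure/backward-induction identity explicitly and flag integrability, whereas the paper cites that step from \citet{diaz2023nonparametric}.
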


Having identified the target parameter, we next study its semiparametric
properties and develop estimators suitable for flexible, data-adaptive estimation of the nuisance functions. Following standard semiparametric theory
\citep{Bickel97}, we characterize the tangent space
induced by the bridge restriction and derive a gradient for the pathwise
derivative of \(\theta_{t+1}(\bar d_t)\). We then use this gradient to construct
cross-fitted influence-function-based estimators, including one-step and targeted
estimators, that permit the nuisance functions to be estimated using
machine-learning methods
\citep{pfanzagl1982contributions,van2011targeted,chernozhukov2018double}.
Our estimation strategy also builds on existing influence-function-based methods
for longitudinal causal parameters
\citep{luedtke2017sequential,rotnitzky2017multiply,diaz2023nonparametric}.
Finally, because the regression functions associated with different outcome
times share a common longitudinal structure, we employ the pool-and-smooth
strategy of \citet{susmann2025curve} to borrow information across \(t\) while
retaining target-specific estimation and inference.

\subsection{Semiparametric estimation theory}\label{sec:vd-semiparametric}

We next study the fixed-time parameter as a functional of the observed-data
law. Let \(\beta_t(H_t,C_{t+1})\) denote a selected square-integrable solution of \eqref{eq:fredholm_wor}. For $s<t$, let $q_{t+1,s}(a_s,h_s)
=
m_{t+1,s}\{d_s(a_s,h_s),h_s\}$ and 
\[
\eta_{t+1,s+1}(O)
:=q_{t+1,s+1}(A_{s+1},H_{s+1}).
\]
At $s=t$, let
\[
\eta_{t+1,t+1}(O)
:=R_{t+1}Y_{t+1}\beta_t(H_t,C_{t+1}).
\]
Thus the regressions in Corollary~\ref{cor:vd-sequential-regression} may be
written uniformly as
\[
m_{t+1,s}(a_s,h_s)
=
\E\{\eta_{t+1,s+1}(O)\mid A_s=a_s,H_s=h_s\}.
\]
Let $\lambda_t(A_t,H_t)$ be any square-integrable solution of
\eqref{eq:rep-iden}. In the full-data setting where $\eta_{t+1, t+1}$ is observed, \cite{diaz2023nonparametric} shows  the efficient influence function is equal to
\begin{equation}\notag
D_{t}^{\mathrm{seq}}(O)
:=
q_{t+1,1}(A_1,H_1)-\theta_{t+1}(\bar d_t)
+
\sum_{s=1}^t
\Omega_s\{\eta_{t+1,s+1}(O)-m_{t+1,s}(A_s,H_s)\}.
\end{equation}
Define the following function, which will be the contribution to the gradient from estimating the bridge function:
\begin{equation}\notag
D_t^{\mathrm{br}}(O)
:=
-\lambda_t(A_t,H_t)
\{R_{t+1}\beta_t(H_t,C_{t+1})-1\}.
\end{equation}
Our next objective is to construct estimating equations that permit flexible
estimation of the nuisance functions while supporting valid inference
for the target parameter. The following theorem derives a gradient of
the parameter, and the subsequent remainder expansion establishes
first-order insensitivity to nuisance-estimation error. Together, these
results provide the basis for cross-fitted estimation under suitable
product-rate conditions used in the construction of one-step estimators \citep{pfanzagl1982contributions,Bickel97}, targeted minimum loss-based estimators \citep{van2011targeted,van2014targeted}, and double/debiased machine learning \citep{chernozhukov2018double}.  These approaches permit flexible machine-learning estimation of the nuisance functions while retaining asymptotically linear, root-$n$-consistent inference, provided the nuisance estimators satisfy suitable product-rate conditions. The reader interested in the general relationship between gradients of
the pathwise derivative and Neyman orthogonality is referred to
\citet{chen2026equivalence}.
\begin{theorem}[Pathwise differentiability]
\label{thm:vd-if}
Under \Cref{ass:nuc,ass:support,ass:shadow,ass:positive,ass:lambda_exist},
as well as the regularity condition stated in
Appendix~\ref{proof:thm:vd-if}, the  parameter in \ref{eq:theta_target_identified}
is pathwise differentiable, with gradient given by $D_t(O)
=
D_t^{\mathrm{seq}}(O)+D_t^{\mathrm{br}}(O)$.
\end{theorem}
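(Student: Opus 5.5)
We will verify pathwise differentiability directly. Take a regular one-dimensional submodel $\{\P_\epsilon\}$ through the observed-data law with score $S(O)$, and show that $\partial_\epsilon\theta_{t+1,\epsilon}|_{\epsilon=0}=\E\{D_t(O)S(O)\}$. The submodel must respect the model restriction that \eqref{eq:fredholm_wor} has a solution. The regularity condition in Appendix~\ref{proof:thm:vd-if} is what we expect to guarantee this: along the submodel there is a differentiable path of bridge solutions $\beta_{t,\epsilon}$ with $\dot\beta_t:=\partial_\epsilon\beta_{t,\epsilon}|_0$ square integrable. The representer $\lambda_t$ can be kept fixed at $\epsilon=0$. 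Rather than the weighted form, we work with the representation of \Cref{cor:vd-sequential-regression}, which gives $\theta_{t+1,\epsilon}=\E_\epsilon[m_{t+1,1,\epsilon}\{d_1(A_1,H_1),H_1\}]$. Here the terminal regression is $m_{t+1,t,\epsilon}=\E_\epsilon\{R_{t+1}Y_{t+1}\beta_{t,\epsilon}\mid A_t,H_t\}$. By \Cref{thm:theta_identified}, the value does not depend on which bridge solution is selected, so the choice of path $\beta_{t,\epsilon}$ is immaterial.

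The derivative splits into two parts by the product rule.

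\textbf{First part.} We differentiate with $\beta_t$ held fixed, treating $\eta_{t+1,t+1}(O)=R_{t+1}Y_{t+1}\beta_t$ as a fixed function of the observed data. This is exactly the sequential-regression MTP functional of \citet{diaz2023nonparametric} applied to a known pseudo-outcome. Their argument telescopes over $s=1,\ldots,t$, using \Cref{ass:support,ass:piecewise-invertibility} to change variables $d_s(A_s,H_s)\mapsto A_s$ with density ratio $\omega_s$. It yields the contribution $\E\{D_t^{\mathrm{seq}}(O)S(O)\}$.

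\textbf{Second part.} This is the derivative through $\beta_t$, namely
\[
\E\{\Omega_t R_{t+1}Y_{t+1}\dot\beta_t(H_t,C_{t+1})\}.
\]
To obtain it, we unroll the recursion as in the proof of \Cref{thm:theta_identified}: we condition on $(R_{t+1}=1,H_t,C_{t+1})$ and use the definition of $\phi_t$. This gives $\E\{R_{t+1}\phi_t\dot\beta_t\}=\langle\phi_t,\dot\beta_t\rangle_{C,t}$. By \eqref{eq:rep-iden} and adjointness, this equals $\langle\lambda_t,T_t\dot\beta_t\rangle_{A,t}=\E\{\lambda_t(A_t,H_t)R_{t+1}\dot\beta_t\}$.

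Next we differentiate the bridge identity $\E_\epsilon\{R_{t+1}\beta_{t,\epsilon}-1\mid A_t,H_t\}=0$ at $\epsilon=0$, after multiplying by $\lambda_t$ and taking expectations. This gives
\[
\E\{\lambda_tR_{t+1}\dot\beta_t\}+\E\bigl[\lambda_t\{R_{t+1}\beta_t-1\}S\bigr]=0.
\]
The conditional score decomposition is harmless here, because $\E\{R_{t+1}\beta_t-1\mid A_t,H_t\}=0$ kills the marginal part of the score. Hence the second part equals $\E\{D_t^{\mathrm{br}}(O)S(O)\}$.

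Adding the two parts gives the gradient $D_t=D_t^{\mathrm{seq}}+D_t^{\mathrm{br}}$. We also check that $D_t$ has mean zero and finite variance, which follows from square integrability of $\lambda_t$, $\beta_t$ and $\Omega_t$.

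The main obstacle is the second part, specifically the existence and square integrability of $\dot\beta_t$ when $\beta_t$ is not unique. We avoid needing uniqueness because only $\langle\lambda_t,T_t\dot\beta_t\rangle$ enters, and that quantity is pinned down by the differentiated moment equation whatever $\dot\beta_t$ is. Still, we must assume, as the regularity condition does, that the submodel admits some differentiable solution path. An alternative is to restrict to submodels whose scores lie in the tangent space characterized by the bridge restriction. In that case $D_t$ is a gradient, though not necessarily the efficient influence function.
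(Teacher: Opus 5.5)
Your proposal is correct and matches the paper's proof essentially step for step. Holding $\beta_t$ fixed yields $\E\{D_t^{\mathrm{seq}}(O)S(O)\}$ by the standard modified-policy calculation, and the bridge contribution $\E\{\Omega_tR_{t+1}Y_{t+1}\dot\beta_t\}$ is rewritten via $\phi_t$ and the adjoint identity as $\E\{R_{t+1}\lambda_t\dot\beta_t\}$, then eliminated by differentiating the bridge moment restriction with test function $\lambda_t$. One small refinement: your appeal to \Cref{thm:theta_identified} along the path would need \Cref{ass:lambda_exist} at every $\P_\epsilon$, but your own observation that only $\langle\lambda_t,T_t\dot\beta_t\rangle_{A,t}$ enters already shows that the choice of solution path (the paper fixes the minimum-norm solution) is immaterial to first order.
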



Theorem~\ref{thm:vd-if} establishes pathwise differentiability and provides
a gradient for estimation. Characterizing the full tangent space and
deriving the efficient influence function require accounting for the
restrictions imposed by the shadow-variable assumption and the adjoint
range condition. We base estimation directly on the derived gradient, which supports asymptotically linear inference under conditions on the estimated nuisances, without asserting semiparametric
efficiency. The following proposition establishes the required von Mises
expansion, showing that the gradient captures the first-order bias of the
plug-in estimator and bounding the remainder by products of
nuisance-estimation errors.
\begin{proposition}[Second-order remainder]
\label{prop:vd-second-order}
Let $P$ and $F$ be two distributions in the visit-dependent shadow-variable model and
assume the relevant density ratios, bridge functions, adjoint representers,
and sequential regressions are uniformly bounded. Let $m^{\beta_F}_{t+1,s,P}$ denote the regression functions obtained in the recursion when the conditional expectations are taken under $P$ but the pseudo-outcome is evaluated at $\beta_F$. Then
\[
\theta_{t+1}(\bar d_t;F)-\theta_{t+1}(\bar d_t;P)
=
-\E_P\{D_t(O;F)\}+R_{t,2}(P,F),
\]
where
\begin{align*}
|R_{t,2}(P,F)|
&\lesssim
\sum_{s=1}^t
\|\omega_{s,P}-\omega_{s,F}\|_{2,P}
\|m^{\beta_F}_{t+1,s,P}-m^{\beta_F}_{t+1,s,F}\|_{2,P}\\
&\quad+
\|\lambda_{t,P}-\lambda_{t,F}\|_{2,P}
\|\beta_{t,P}-\beta_{t,F}\|_{2,P}.
\end{align*}
\end{proposition}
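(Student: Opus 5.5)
The plan is to compute $R_{t,2}(P,F):=\theta_{t+1}(\bar d_t;F)-\theta_{t+1}(\bar d_t;P)+\E_P\{D_t(O;F)\}$ directly and split it into a bridge part and a sequential-regression part. To make this split, introduce the intermediate quantity $\theta^{\beta_F}_P:=\E_P[m^{\beta_F}_{t+1,1,P}\{d_1(A_1,H_1),H_1\}]$. This is the sequential-regression functional of \Cref{cor:vd-sequential-regression} computed under $P$, but with the terminal pseudo-outcome $R_{t+1}Y_{t+1}\beta_{t,F}$. Write $D_t(O;F)=D^{\mathrm{seq}}_t(O;F)+D^{\mathrm{br}}_t(O;F)$. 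The remainder then decomposes as
\begin{align*}
R_{t,2}
&=\Big[\theta(F)-\theta^{\beta_F}_P+\E_P\{D^{\mathrm{seq}}_t(O;F)\}\Big]\\
&\quad+\Big[\theta^{\beta_F}_P-\theta(P)+\E_P\{D^{\mathrm{br}}_t(O;F)\}\Big].
\end{align*}
The first bracket involves only the pseudo-outcome $\eta_{t+1,t+1}=R_{t+1}Y_{t+1}\beta_{t,F}$, held fixed. It is therefore exactly the remainder of the standard longitudinal MTP sequential-regression functional of \citet{diaz2023nonparametric}, evaluated at $P$ versus $F$ for a fixed observed outcome. I would invoke (or re-derive) their telescoping argument. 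Iterated expectations under $P$, together with the change of measure $\E_P[\omega_{s,P}f(A_s,H_s)\mid H_s]=\E_P[f\{d_s(A_s,H_s),H_s\}\mid H_s]$, which is valid by \Cref{ass:support,ass:piecewise-invertibility}, give
\[
\sum_{s=1}^t\E_P\Big[\Omega_{s-1,F}\,(\omega_{s,F}-\omega_{s,P})\,(m^{\beta_F}_{t+1,s,F}-m^{\beta_F}_{t+1,s,P})\Big]
\]
up to cross terms with earlier weights. The standard manipulation absorbs these cross terms by writing $\Omega_{s,F}-\Omega_{s,P}$ as a telescoping sum. Boundedness of the density ratios and Cauchy--Schwarz then yield the first sum in the bound.

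For the second bracket, use \Cref{thm:theta_identified}. Under $P$ the parameter equals $\E_P\{R_{t+1}\Omega_{t,P}Y_{t+1}\beta_{t,P}\}$ for any bridge solution under $P$. The same reweighting argument shows $\theta^{\beta_F}_P=\E_P\{R_{t+1}\Omega_{t,P}Y_{t+1}\beta_{t,F}\}$. Hence
\[
\theta^{\beta_F}_P-\theta(P)=\E_P\{R_{t+1}\Omega_{t,P}Y_{t+1}(\beta_{t,F}-\beta_{t,P})\}=\langle\phi_{t,P},\beta_{t,F}-\beta_{t,P}\rangle_{C,t}.
\]
By \eqref{eq:rep-iden} under $P$, this equals $\langle\lambda_{t,P},T_{t,P}(\beta_{t,F}-\beta_{t,P})\rangle_{A,t}=\E_P[\lambda_{t,P}\{R_{t+1}\beta_{t,F}-1\}]$, since $T_{t,P}\beta_{t,P}=1$. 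Adding $\E_P\{D^{\mathrm{br}}_t(O;F)\}=-\E_P[\lambda_{t,F}\{R_{t+1}\beta_{t,F}-1\}]$ leaves
\[
\E_P\big[(\lambda_{t,P}-\lambda_{t,F})(A_t,H_t)\,(T_{t,P}(\beta_{t,F}-\beta_{t,P}))(A_t,H_t)\big].
\]
Cauchy--Schwarz, together with $\|T_{t,P}u\|_{2,P}\lesssim\|u\|_{2,P}$ (Jensen, $R_{t+1}\le1$), bounds this by $\|\lambda_{t,P}-\lambda_{t,F}\|\,\|\beta_{t,P}-\beta_{t,F}\|$.

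The main obstacle is the bookkeeping in the first bracket. The cross-product structure must come out exactly with the mixed regressions $m^{\beta_F}_{\cdot,P}$ and not with $m_{\cdot,P}$; this is why the intermediate functional $\theta^{\beta_F}_P$ is introduced. A second point needing care is that $\lambda_{t,P}$ represents $\phi_{t,P}$, which carries the product weights $\Omega_{t-1,P}\omega_{t,P}$. We must therefore confirm that the identity $\theta^{\beta_F}_P=\E_P\{R_{t+1}\Omega_{t,P}Y_{t+1}\beta_{t,F}\}$ holds even though $\beta_{t,F}$ need not solve the $P$ bridge equation. It does, because that step uses only the sequential change of measure and not the bridge property. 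If the mixed terms prove messy, a fallback is to also subtract $\E_P\{\Omega_{t,F}\cdots\}$ versus $\E_P\{\Omega_{t,P}\cdots\}$ at the terminal step, and to bound the extra term using the first sum.
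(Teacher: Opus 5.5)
Your proposal is correct and follows essentially the same route as the paper: the split through the intermediate functional $\theta_{t+1}(P,\beta_{t,F})$, the standard longitudinal MTP remainder for the fixed pseudo-outcome $R_{t+1}Y_{t+1}\beta_{t,F}$, and the linearity/adjoint step that turns $\E_P\{R_{t+1}\Omega_{t,P}Y_{t+1}(\beta_{t,F}-\beta_{t,P})\}+\E_P\{D_t^{\mathrm{br}}(O;F)\}$ into the single product $\E_P\{R_{t+1}(\lambda_{t,P}-\lambda_{t,F})(\beta_{t,P}-\beta_{t,F})\}$ (up to sign). One refinement: the backward telescoping in your first bracket is exact, giving $\sum_{s=1}^t\E_P[\Omega_{s-1,F}(\omega_{s,P}-\omega_{s,F})(m^{\beta_F}_{t+1,s,F}-m^{\beta_F}_{t+1,s,P})]$ with no leftover cross terms, and the $\omega$-difference has the opposite sign to yours, which does not affect the bound.
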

Proposition~\ref{prop:vd-second-order} decomposes the first-order error of the
plug-in estimator into sequential-regression and bridge components. Our general
estimation strategy begins with the plug-in estimator
\(\theta_{t+1}(\bar d_t;\widehat P)\). By
Proposition~\ref{prop:vd-second-order}, its error admits the expansion
\[
\theta_{t+1}(\bar d_t;\widehat P)
-
\theta_{t+1}(\bar d_t;P)
=
-\E_P\!\left\{D_t(O;\widehat P)\right\}
+
R_{t,2}(P,\widehat P).
\]
Thus, up to the second-order remainder, the estimation error is given by minus
the population mean of the gradient evaluated at the estimated nuisance
functions. The estimators developed below address  how they estimate and
remove this first-order term. In particular, they separately correct the
sequential-regression and bridge components of the gradient.

\subsection{Estimators}\label{sec:estima-vd}

We address the problem of estimating the entire counterfactual mean curve $
\boldsymbol\theta
=
\{\theta_{t+1}(\bar d_t):t=1,\ldots,\tau\}
$ using the pool-and-smooth construction of \citet{susmann2025curve}. Here we outline the main idea, but we refer the interested reader to the original article for additional details. For each
outcome time \(t+1\), the identifying recursion involves the regressions $
m_{t+1,s}(a_s,h_s),
\qquad s=1,\ldots,t$. 
Thus, across all outcome times, these regressions form the lower-triangular
array $
\mathcal M
=
\{m_{t+1,s}:1\leq s\leq t\leq\tau\}$. 
We pool regressions that occur at the same relative position in their
respective backward recursions, i.e., at each diagonal of the array. Specifically, for
\(\ell=0,\ldots,\tau-1\), define the \(\ell\)th diagonal by
\[
\mathcal D_\ell
=
\{(t,s):t=\ell+1,\ldots,\tau,\ s=t-\ell\}.
\]
Here, \(\ell=t-s\) is the number of recursion steps between the current
regression time \(s\) and the terminal regression time \(t\). For example,
\(\mathcal D_0\) contains the terminal regressions $
m_{2,1},m_{3,2},\ldots,m_{\tau+1,\tau}$,
whereas \(\mathcal D_1\) contains $
m_{3,1},m_{4,2},\ldots,m_{\tau+1,\tau-1}$. More generally, \(\mathcal D_\ell\) contains 
\[ m_{\ell+2,1}(a_1, h_1),\,
m_{\ell+3,2}(a_2, h_2),\,
\ldots,\,
m_{\tau+1,\tau-\ell}(a_{\tau-\ell}, h_{\tau-\ell}).\]
Alternatively, this sequence may be written as 
$m_\ell(a,h,t) = m_{t+\ell+1, t}(a_t, h_t)$,
where $t\in\{1,\ldots,\tau-\ell\}$.
This re-indexing shows that the regressions on a given diagonal
\(\mathcal D_\ell\) can be estimated jointly by introducing a predictor $t$. The recursion is then defined iterating the index $\ell$. Define the corresponding
re-indexed pseudo-outcome by
$\eta_\ell(O,t)
:=
\eta_{t+\ell+1,t+1}(O)$, 
so that
\[
m_\ell(a,h,t)
=
\E\{\eta_\ell(O,t)\mid A_t=a,H_t=h\}.
\]
More explicitly,
\[
\eta_\ell(O,t)
=
\begin{cases}
R_{t+1}Y_{t+1}\beta_t(H_t,C_{t+1}),
& \ell=0,\\[4pt]
q_{t+\ell+1,t+1}(A_{t+1},H_{t+1}),
& \ell=1,\ldots,\tau-1.
\end{cases}
\]
The diagonals are estimated sequentially beginning with \(\ell=0\).
For \(\ell=0\), the pseudo-outcomes are the bridge-adjusted outcomes
at each outcome time. For \(\ell\geq1\), the pseudo-outcomes are formed
from the intervention-specific regressions estimated on the preceding
diagonal \(\mathcal D_{\ell-1}\).

A learner saturated in \(t\) recovers separate fitting as a special case, but including \(t\) as a predictor permits the regression surface to vary over
calendar time, while pooling data across $t$ allows us to borrow
information across outcome times. This can be particularly useful if the outcome is sparsely measured at any given time $t$, but there are multiple such times so that information can be borrowed.  

Proving asymptotic normality of the estimators while using general data adaptive estimators for the nuisance functions will generally require to estimate the nuisance functions using cross-fitting \citep{kennedy2024semiparametric}. To setup up cross-fitting, let \(\mathcal V_1,\ldots,\mathcal V_J\) denote a random partition of
\(\{1,\ldots,n\}\) into \(J\) validation samples, and let
\(\mathcal T_j=\{1,\ldots,n\}\setminus\mathcal V_j\) denote the corresponding
training sample. Let \(j(i)\) denote the training set corresponding to the validation fold containing observation
\(i\). For each fold \(j\), use \(\mathcal T_j\) to estimate the nuisance functions. For any nuisance function, we use the index $j$ to denote estimation using the corresponding training set. For instance, \(\widehat D_{t,j}^{\mathrm{br}}\) denotes the bridge component of the gradient evaluated using nuisance estimates that were trained in \(\mathcal T_j\).

Let \(\widehat\theta_{t+1}^{\mathrm{seq}}\) denote the cross-fitted
longitudinal TMLE or sequentially doubly robust estimator for the component
\(\theta_{t+1}(\bar d_t)\) discussed in \citet{susmann2025curve}, using these pooled-and-smoothed preliminary
regressions and treating the fitted bridge function as fixed
within each training fold. Our proposed estimator is defined as
\begin{equation}\label{eq:vd-one-step}
\widehat\theta_{t+1}
=
\widehat\theta_{t+1}^{\mathrm{seq}}
+
\frac{1}{n}\sum_{i=1}^n
\widehat D_{t,j(i)}^{\mathrm{br}}(O_i),
\qquad t=1,\ldots,\tau.
\end{equation}
The first term is constructed so that it removes the empirical first-order contribution of the nested
sequential regressions while borrowing information across outcome times in
their estimation. The second term corrects the remaining first-order bias from
estimating the bridge nuisance function. Under
cross-fitting, the usual remainder conditions for the pooled sequential
estimator and the product-rate condition
\(R_{t,2}(P,\widehat P)=o_p(n^{-1/2})\) for each \(t\), standard arguments \citep[e.g.,][]{kennedy2024semiparametric} show that
\[
\widehat\theta_{t+1}
-
\theta_{t+1}(\bar d_t)
=
\frac{1}{n}\sum_{i=1}^n D_t(O_i)
+
o_p(n^{-1/2}).
\]
Consequently,
\[
\sqrt n(\widehat{\boldsymbol\theta}
-\boldsymbol\theta)
\rightsquigarrow
N(0,\Sigma),
\qquad
\Sigma_{tr}=\operatorname{cov}\{D_t(O),D_r(O)\}.
\]
Pointwise standard errors are obtained from the empirical variances of the
cross-fitted gradients, and simultaneous confidence bands may be constructed
from their joint multiplier-bootstrap distribution as in
\citet{susmann2025curve}. Nuisance estimation for
\eqref{eq:vd-one-step}, including the selection of a canonical bridge when the
bridge equation admits multiple solutions, is discussed in
Section~\ref{sec:nuisance-estimation}.

%% file: full_main.tex
\section{Causal effects under full-history adjustment}\label{sec:full-history}

The developments in the previous section assume that the observed history is sufficient for confounding adjustment. This assumption is natural when treatment decisions are determined by information recorded in the healthcare system, but it may be inappropriate when latent values of time-varying covariates affect both treatment decisions and subsequent outcomes. For example, a choice between gastric bypass surgery and GLP-1 medication may be informed by a patient's evolving clinical state not recorded at missed encounters. In such settings, adequate confounding control may require adjustment for the full latent covariate history rather than only the recorded history. 

Let $F_s=(C_1,A_1,\ldots,C_{s-1},A_{s-1},C_s)$ denote the full-data history immediately preceding \((R_s,A_s)\), and let $J_s=\prod_{j=1}^sR_j$ with $J_0=1$ denote the indicator that every covariate value entering \(F_s\) has been observed. For an arbitrary function $d_s(a_s,f_s)$ depending on the current treatment and the \(k\)-lag full-data history, let \(\epsilon_s\sim Q_s\) be intervention randomizers that are independent across time and independent of all exogenous variables, with known laws \(Q_s\). We study a modified treatment policy \(d_s(a_s,r_s,f_s,\epsilon_s)\) defined as
\[
d_s(a_s,r_s,f_s,\epsilon_s)=
\begin{cases}
    d_s(a_s,f_s)\text{ if }r_s=1,\\
    t_s(a_s,\epsilon_s)\text{ if }r_s=0,
\end{cases}
\]
where \(t_s\) is a known treatment rule depending on a randomizer $\epsilon$. When $R_s=0$, $A_s=\alpha_s(H_s)$ by assumption, so that \(t_s\) defines how treatment will be continued at no-visit times. We have used $d$ to denote both functions in a slight abuse of notation, but their use will be clear from context. We use a stochastic continuation rule in this section because identification requires evaluation of the treatment rule separately for the \(R_s=0\) and \(R_s=1\) cases. If \(A_s\) is continuous and treatment at the treatment policy a no-visit time is deterministically equal to \(\alpha_s(H_s)\), the resulting identifying functional would be a point evaluation, and therefore would not be pathwise differentiable. We discuss this further in \S\ref{sec:fh-semiparametric}. Below we present an example where such stochastic continuation rule may be of scientific interest.

\begin{example}[GLP-1 continuation under a surgery-versus-medication strategy]
Let \(A_s=(B_s,G_s)\), where \(B_s\) indicates bariatric surgery status and \(G_s\) denotes GLP-1 prescription dose. At visit times, the policy \(d_s\) may assign surgery or modify treatment. At times without a visit, surgery status cannot change, but we may consider a regimen that allows GLP-1 dose to vary randomly around the continued prescription dose. Specifically, let \(\epsilon_s\sim Q_s\) be a positive randomizer with known distribution centered at one, for example uniform on \([1-\delta,1+\delta]\). Then, for \(R_s=0\), define
\[
t_s\{(B_s,G_s),\epsilon_s\}
=
(B_s,G_s\epsilon_s),
\]
where \(G_s=\alpha_s(H_s)\) denotes the treatment level implied by the patient's known prescription history, for example the dose and number of pills remaining based on prior prescriptions and other observed historical information.
\end{example}

As before, the causal target is the expectation of the outcome under this modified treatment policy, where the expectation also averages over the intervention randomizers:
\[
\theta_{t+1}(\bar d_t)
=
\E\{Y_{t+1}(\bar d_t)\}.
\]

\subsection{Identification assumptions}

We next discuss the assumptions needed to identify the causal functional above. In addition to the shadow-variable assumptions introduced in the previous section, the full-history setting requires stronger assumptions on the visit process. There are two related reasons. First, as in the observed-history setting, the modified treatment policy depends on the natural value of the visit indicator \(R_s\), with a distinct stochastic continuation rule when no visit occurs. Consequently, \(R_s\) must be sequentially adjusted for throughout the causal identification proof even when it may not otherwise be needed for confounding adjustment. Second, adequate confounding control is now assumed to require adjustment for the full-data history \(F_s\), which is not fully observed unless \(J_s=1\). Sequentially adjusting for \(\bar R_s\) while also conditioning on \(J_s=1\) cannot generally be justified without additional assumptions. Below, we impose two assumptions that allow us to do so.

First, we address confounding by assuming that \(F_s\) contains all common causes of the current visit--treatment pair \((R_s,A_s)\) and subsequent variables, so that the pair is conditionally exchangeable given \(F_s\). This is stronger than in the observed-history setting, where adjustment can be based directly on the observed history, including the observed visit pattern.

\begin{assumption}[Sequential randomization conditional on the full data]\label{ass:fh-nuc-markov}
For every \(s\leq t\), assume
\[
(U_{R,s}, U_{A,s})
\indep
\left(
\underbar U_{C,s+1},
\underbar U_{R,s+1},
\underbar U_{A,s+1}
\right)
\mid
F_s.
\]
\end{assumption}

The next assumption is needed because adjusting for $F_s$ requires conditioning on \(J_s=1\) (or, recursively, on \(R_s=1\)). Such a restriction would generally not identify the desired causal functional if \(R_s\) were itself a confounder of the effect of \(A_s\) on subsequent variables. We therefore impose an exclusion restriction requiring that $R_s$ can only have an effect on future variables through the current treatment vector $A_s$. Together with \Cref{ass:fh-nuc-markov}, this allows \(R_s\) to determine the treatment modification through \(d_s\), while ensuring that the subsequent causal recursion can be evaluated on the \(RJs=1\) subset.

\begin{assumption}[Exclusion restriction on visit indicator]\label{ass:fh-measurement-exchange}
For every $s\leq t+1$, assume the structural functions $\{f_{C,s},f_{R,s},f_{A,s}\}$ do not depend on $\bar R_{s-1}$.
\end{assumption}
A plausible justification for this assumption is that $A_s$ is intended to capture all treatments and procedures that affect a patient's underlying health status, so any effect of healthcare utilization at time $s$ is explained by it with no additional direct effect of engaging with the healthcare system. In particular, $A_s$ should contain healthcare-process and access-related information that governs subsequent encounters and treatment decisions. For instance, prior visits may affect the probability of a future visit through mechanisms such as scheduled return intervals, referrals, pending laboratory or imaging orders, or other follow-up instructions. These variables should be included in $A_s$. Likewise, persistent determinants of healthcare utilization, such as insurance coverage, socioeconomic and other social determinants of health, transportation or geographic access, and healthcare-seeking behavior, may be included when available.

As in \Cref{ass:visit-dep}, treatment status must also be known when no visit occurs. Unlike \Cref{ass:visit-dep}, we let the treatment value at a visit time be affected by unmeasured variables. Also, unlike the previous section, this assumption will play a crucial role in this section because adjustment for $\bar R_s$ is not possible, and therefore we need to separate the branches of the outcome regressions that depend on $R_s$ through $d_s$ from those that do not.

\begin{assumption}[Known treatment at no-visit times]\label{ass:fh-known-novisit}
Assume that the structural equation for $A_t$ satisfies the restriction
\[
    A_t =
    \begin{cases}
        f_{A_t}^{(1)}(\bar A_{t-1}, \bar R_{t-1}, \bar C_t, U_{A,t}),
        & R_t=1,\\
        \alpha_t(H_t), & R_t=0,
    \end{cases}
\]
where \(\alpha_t\) is a known deterministic continuation rule that determines treatment status at a no-visit time \(t\) from the observed history. 
\end{assumption}

We then require the standard positivity assumption required to apply \Cref{ass:fh-nuc-markov}:

\begin{assumption}[Supported intervention on the full data]\label{ass:fh-support-markov}
If $(a_1,c_1)\in\supp(A_1,C_1)$, then $(d_1(a_1,1,c_1),c_1)\in\supp(A_1,C_1)$. For each $s=2,\ldots,t$, if
\[
(r_s,a_s,c_s)\in \supp\{R_s,A_s,C_s\mid A_{s-1}=a_{s-1}^d,J_{s-1}=1,F_{s-1}=f^d_{s-1}\},
\]
then, with $a_s^d=d_s\{a_s,r_s,f_s^d,\epsilon_s\}$ and $f_s^d=(f_{s-1}^d,a_{s-1}^d,c_s)$,
\[
(1,a_s^d,c_s)\in \supp\{R_s,A_s,C_s\mid A_{s-1}=a_{s-1}^d,J_{s-1}=1,F_{s-1}=f^d_{s-1}\}
\]
for \(Q_s\)-almost every \(\epsilon_s\). In addition, the stochastic no-visit treatment law induced by \(t_s(A_s,\epsilon_s)\) is absolutely continuous with respect to the corresponding treatment law among complete cases.
\end{assumption}
Finally, we impose the same shadow-variable structure used in the observed-history section, replacing the observed history by the complete full-data state. These assumptions are parallel to those in the first section. In both sections, \(A_s\) is the shadow variable and \(C_{s+1}\) is the latent argument of the bridge.  A directed acyclic graph depicting some of these assumptions is given in Figure~\ref{fig:dag2}.

\begin{assumption}[Full-history sequential shadow variable exclusion]\label{ass:fh-shadow-markov}
For every \(s\leq t\), assume
\[
A_s
\indep
R_{s+1}
\mid
J_s=1,F_s,C_{s+1}.
\]
\end{assumption}

\begin{assumption}[Full-history visit positivity]\label{ass:fh-positive-markov}
For every \(s\leq t\),
\[
\P\{R_{s+1}=1
\mid
J_s=1,F_s,C_{s+1}\}>0
\]
almost surely on the relevant support.
\end{assumption}

The developments below will require estimation of regression functions that condition on $F_s$ and therefore on the  event $J_s=1$. This event may become rare quickly. We therefore impose finite-memory restrictions that reduce the dimension of the conditioning set $F_s$, allowing us to condition on a better supported observation event. Fix \(k\geq1\), let $\ell_s=(s-k+1)\vee1$, and define $F_s^{(k)}=(C_{\ell_s},A_{\ell_s},\ldots,C_{s-1},A_{s-1},C_s)$, with $F_1^{(k)}=C_1$. Let $J_s^{(k)}=\prod_{j=\ell_s}^sR_j$ be the local complete-history indicator, and define $K_s^{(k)}=J_{s-1}^{(k)}R_s$. By convention, when \(k=\infty\), we recover the entire history so that \(F_s^{(k)}=F_s\) and \(J_s^{(k)}=K_s^{(k)}=J_s\). For compactness, write
\[
\mathcal R_s^{(k)}=(C_1,A_1,R_1,\ldots,C_{\ell_s-1},A_{\ell_s-1},R_{\ell_s-1})
\]
for the latent and visit history that precedes the current \(k\)-lag state, with the convention that this vector is empty when \(\ell_s=1\).

\begin{assumption}[Markov assumptions for full-history adjustment]\label{ass:fh-markov}
For each relevant \(s\), assume
\begin{align}
C_{s+1}
&\indep
\mathcal R_s^{(k)}
\mid
F_s^{(k)},A_s,
\tag{M1}\label{ass:fh-markov-M1}\\
A_s
&\indep
\mathcal R_s^{(k)}
\mid
J_s^{(k)}=1,F_s^{(k)},
\tag{M2}\label{ass:fh-markov-M2}\\
R_{s+1}
&\indep
\mathcal R_s^{(k)}
\mid
J_s^{(k)}=1,F_s^{(k)},A_s,C_{s+1}.
\tag{M3}\label{ass:fh-markov-M3}
\end{align}
\end{assumption}
For finite \(k\), we also require that the known value \(\alpha_s(H_s)\) needed to construct the stochastic no-visit continuation rule be determined by the retained state. If the continuation rule uses additional observed history, the retained state can be augmented with those observed variables without changing the arguments below.

\begin{figure}[!htb]
\centering

\tikzset{
  >=Stealth,
  dag/.style={->, thick},
  var/.style={circle, draw, thick, minimum size=18pt, inner sep=1pt, align=center},
  treat/.style={rectangle, draw, thick, rounded corners=2pt,
                minimum height=18pt, minimum width=26pt, inner sep=2pt, align=center},
}

\begin{tikzpicture}[x=1cm,y=1cm]


\node[var]  (C1)  at (0,0)      {$C_1$};

\node[var](A1)  at (2.3,-1.5) {$A_1$};

\node[var]  (C2)  at (4.6,0)     {$C_2$};
\node[var]  (R2)  at (4.6,-3)  {$R_2$};
\node[var](A2)  at (6.9,-1.5)  {$A_2$};

\node[var]  (C3)   at (9.2,0)    {$C_3$};
\node[var]  (R3)  at (9.2,-3) {$R_3$};

\draw[dag] (C1) to (A1);
\draw[dag] (C1) to[out=20,in=120,looseness=1.25] (A2);
\draw[dag] (A1) to[out=60,in=160,looseness=1.25] (C3);
\draw[dag] (C2) to (A2);
\draw[dag] (R2) to (A2);
\draw[dag] (A2) to (C3);
\draw[dag] (C3) to (R3);

\draw[dag] (C1) to (C2);
\draw[dag] (C1) to[out=-65,in=170,looseness=1.15] (R2);
\draw[dag] (C1) to[out=15,in=165,looseness=1.25] (C3);
\draw[dag] (C1) to[out=-5,in=175,looseness=1.35] (R3);

\draw[dag] (C2) to (C3);
\draw[dag] (C2) -- (R2);
\draw[dag] (C2) to[out=-65,in=170,looseness=1.20] (R3);
\draw[dag] (A1) to (C2);
\draw[dag] (A1) to (A2);
\end{tikzpicture}%
\caption{Causal model illustration for visit-indepedent exposures and $\tau = 2$. Note the additional exclusion restrictions on the monitoring process.}\label{fig:dag2}
\end{figure}

\subsection{Identification}

For each \(s\leq t\), define the true inverse-probability bridge
\begin{equation}
\beta_s^0(F_s^{(k)},C_{s+1})
:=
\frac{1}{
\P\{R_{s+1}=1
\mid
J_s^{(k)}=1,F_s^{(k)},C_{s+1}\}}
\notag
\end{equation}
and the linear operator
\begin{equation}
(T_sh_s)(a_s,f_s)
:=
\E\!\left[
R_{s+1}h_s(F_s^{(k)},C_{s+1})
\mid
J_s^{(k)}=1,A_s=a_s,F_s^{(k)}=f_s
\right].
\notag
\end{equation}

\begin{lemma}[Bridge equation under full-history adjustment]\label{lem:fh-bridge-equation}
Under \Cref{ass:fh-shadow-markov,ass:fh-positive-markov}, $\beta_s^0$ is a solution of the following bridge equation:
\begin{equation}
(T_s\beta_s)(A_s,F_s^{(k)})=1.
\label{eq:fh-bridge-equation}
\end{equation}
\end{lemma}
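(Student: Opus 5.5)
The plan is to reduce the statement to the same tower-property computation used in \Cref{thm:latent_bridge}, after first transferring the shadow-variable exclusion and positivity from the full-history conditioning set \((J_s=1,F_s)\) to the \(k\)-lag set \((J_s^{(k)}=1,F_s^{(k)})\) that appears in \(\beta_s^0\) and \(T_s\). Write
\[
\pi_s(A_s,F_s^{(k)},C_{s+1}):=\P\{R_{s+1}=1\mid J_s^{(k)}=1,A_s,F_s^{(k)},C_{s+1}\}.
\]
The target is to show that \(\pi_s\) does not depend on \(A_s\), and hence equals \(1/\beta_s^0(F_s^{(k)},C_{s+1})\), which is positive.

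\textbf{Step 1 (localization).} By \eqref{ass:fh-markov-M3}, \(\pi_s\) is unchanged if we additionally condition on \(\mathcal R_s^{(k)}\). In particular, we may condition on the event that all earlier visit indicators equal one, i.e.\ on \(J_s=1\) together with the earlier latent history. In that case the conditioning set becomes \((J_s=1,F_s,A_s,C_{s+1})\). Applying \Cref{ass:fh-shadow-markov} then gives
\[
\pi_s=\P\{R_{s+1}=1\mid J_s=1,F_s,C_{s+1}\}.
\]
This expression does not involve \(A_s\). The same conditional probability can also be computed without the extra conditioning on \(\mathcal R_s^{(k)}\), again using \eqref{ass:fh-markov-M3}. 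From this we conclude that \(\pi_s\) is a function of \((F_s^{(k)},C_{s+1})\) only. Equivalently, \(A_s\indep R_{s+1}\mid J_s^{(k)}=1,F_s^{(k)},C_{s+1}\), and therefore \(\pi_s=1/\beta_s^0\). Positivity of \(1/\beta_s^0\) follows from \Cref{ass:fh-positive-markov} via the same identification of \(\pi_s\) with the full-history probability.

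\textbf{Step 2 (bridge computation).} Condition on \(J_s^{(k)}=1,A_s,F_s^{(k)}\) and use iterated expectations over \(C_{s+1}\):
\[
(T_s\beta_s^0)(A_s,F_s^{(k)})=\E\!\left[\beta_s^0(F_s^{(k)},C_{s+1})\,\E\{R_{s+1}\mid J_s^{(k)}=1,A_s,F_s^{(k)},C_{s+1}\}\mid J_s^{(k)}=1,A_s,F_s^{(k)}\right].
\]
By Step 1, the inner conditional expectation equals \(\pi_s=1/\beta_s^0\), so the integrand is identically one and \((T_s\beta_s^0)(A_s,F_s^{(k)})=1\) almost surely. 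This is \eqref{eq:fh-bridge-equation}.

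\textbf{Main obstacle.} The delicate point is Step 1. Strictly, \eqref{ass:fh-markov-M3} lets us evaluate \(\pi_s\) on the sub-event \(J_s=1\) of \(\{J_s^{(k)}=1\}\). Concluding that \(\pi_s\) is free of \(A_s\) everywhere therefore needs two things. First, within each stratum \((F_s^{(k)},A_s,C_{s+1})\), the event \(\{J_s=1\}\) must have positive conditional probability; I would read this off the support and positivity conditions, \Cref{ass:fh-positive-markov} and \Cref{ass:fh-support-markov}. Second, the conditional independence in \eqref{ass:fh-markov-M3} must be understood as equality of versions of the conditional law, so that the restriction to \(J_s=1\) determines \(\pi_s\) on the whole of \(\{J_s^{(k)}=1\}\). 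If this cannot be secured, the fallback is to interpret \Cref{ass:fh-shadow-markov,ass:fh-positive-markov} directly at the \(k\)-lag level. This matches the convention \(k=\infty\), under which Steps 1 and 2 collapse to the argument of \Cref{thm:latent_bridge}.
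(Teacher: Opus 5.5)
Your proposal is correct and follows the paper's proof essentially step for step. Like you, the paper invokes \eqref{ass:fh-markov-M3} (even though it is not among the lemma's listed hypotheses) to equate the $k$-lag response probability with the full-history one on $\{J_s=1\}$, applies \Cref{ass:fh-shadow-markov} to conclude that the reduced probability is free of $A_s$, and finishes with the same tower-property computation.

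The support caveat you raise is exactly what the paper absorbs into its qualifier ``on the relevant support''. It can be closed under the paper's full assumption set: \eqref{ass:fh-markov-M1}--\eqref{ass:fh-markov-M2}, together with Lemma~\ref{lem:fh-past-visit-factorization}, make $A_s$ conditionally independent of $\mathcal R_s^{(k)}$ given $(J_s^{(k)}=1,F_s^{(k)},C_{s+1})$. Hence, given positivity of $\{J_s=1\}$ within $(F_s^{(k)},C_{s+1})$-strata, the identity $\pi_s=\P\{R_{s+1}=1\mid J_s=1,F_s,C_{s+1}\}$ does force $\pi_s$ to be free of $A_s$.
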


As in the first section, the bridge equation need not have a unique solution. Let \(\beta_s(F_s^{(k)},C_{s+1})\) denote an arbitrary square-integrable solution of the bridge equation \eqref{eq:fh-bridge-equation}. Set
\begin{equation}
q_{t+1,t+1}(A_{t+1}, F_{t+1}^{(k)}):=R_{t+1}\beta_t(F_t^{(k)},C_{t+1})Y_{t+1}.
\label{eq:fh-terminal-q}
\end{equation}
Proceeding backward for \(s=t,\ldots,1\), define
\begin{equation}
m_{t+1,s}(a_s,f_s)
:=
\E\!\left[
q_{t+1,s+1}(A_{s+1}, F_{s+1}^{(k)},\epsilon_{s+1})
\mid
J_s^{(k)}=1,A_s=a_s,F_s^{(k)}=f_s
\right],
\label{eq:fh-m-recursion}
\end{equation}
where the \(\epsilon_{s+1}\) argument is omitted at the terminal step. Here and throughout, expectations of pseudo-outcomes involving an intervention randomizer also average over its known law \(Q_{s+1}\), which is independent of the observed-data law. For \(s\geq2\), define the pseudo-outcome
{\small
\begin{equation}
q_{t+1,s}(A_s, F_s^{(k)},\epsilon_s)
:=R_s\bigg[
m_{t+1,s}\{d_s(A_s,1,F_s^{(k)}),F_s^{(k)}\}+\{\beta_{s-1}(F_{s-1}^{(k)},C_s)-1\}
 m_{t+1,s}\{t_s(\alpha_s(H_s^{(k)}),\epsilon_s),F_s^{(k)}\}
\bigg].
\label{eq:fh-q-recursion}
\end{equation}}%
Define $g_s(a_s\mid f_s)$ as the density or probability mass function of $A_s$ conditional on $\{J_s^{(k)}=1,F_s^{(k)}=f_s\}$, and let $g_s^d(\cdot\mid f_s)$ denote the pushforward of $g_s(\cdot\mid f_s)$ through $d_s(\cdot,1,f_s)$. Let $g_s^t(\cdot\mid f_s)$ denote the density or probability mass function induced by \(t_s\{\alpha_s(H_s^{(k)}),\epsilon_s\}\), with \(\epsilon_s\sim Q_s\), conditional on \(F_s^{(k)}=f_s\). Define
\begin{equation}
\omega_s^d(a_s,f_s)
:=
\frac{g_s^d(a_s\mid f_s)}{g_s(a_s\mid f_s)},\qquad
\omega_s^t(a_s,f_s)
:=
\frac{g_s^t(a_s\mid f_s)}{g_s(a_s\mid f_s)}. \label{eq:fh-omega-t} 
\end{equation}
Set $\Omega_1:=\omega_1^d(A_1,C_1)$, and, for $s\geq2$, define recursively
\begin{equation}
\Omega_s
:=
\Omega_{s-1}R_s\left[
\omega_s^d(A_s,F_s^{(k)})
+\{\beta_{s-1}(F_{s-1}^{(k)},C_s)-1\}
\omega_s^t(A_s,F_s^{(k)})
\right].
\label{eq:fh-Omega}
\end{equation}
Let \(T_s^\star\) denote the adjoint of \(T_s\) with respect to the inner products
\begin{align*}
\langle u_1,u_2\rangle_{A,s}
&:=
\E\!\left[J_s^{(k)}u_1(A_s,F_s^{(k)})u_2(A_s,F_s^{(k)})\right],\\
\langle h_1,h_2\rangle_{C,s}
&:=
\E\!\left[K_{s+1}^{(k)}h_1(F_s^{(k)},C_{s+1})h_2(F_s^{(k)},C_{s+1})\right].
\end{align*}
Thus
\begin{equation}
(T_s^\star u)(f_s,c_{s+1})
=
\E\!\left[
 u(A_s,F_s^{(k)})
\mid
K_{s+1}^{(k)}=1,F_s^{(k)}=f_s,C_{s+1}=c_{s+1}
\right].
\label{eq:fh-adjoint}
\end{equation}

As in the observed-history section, we impose a target-specific range condition rather than requiring the bridge to be unique. The appendix shows that replacing one bridge solution while keeping the remaining bridge functions fixed changes the baseline sequential-regression expression through a single target-specific loading.

For every \(s\leq t\), define
\begin{equation}
\nu_{t+1,s+1}
:=
\begin{cases}
m_{t+1,s+1}\{t_{s+1}(\alpha_{s+1}(H_{s+1}^{(k)}),\epsilon_{s+1}),F_{s+1}^{(k)}\}, & s<t,\\
Y_{t+1}, & s=t.
\end{cases}
\label{eq:fh-nu}
\end{equation}

\begin{assumption}[Sequential adjoint range condition under full-history adjustment]\label{ass:fh-target-dual-range}
For every square-integrable bridge collection satisfying \eqref{eq:fh-bridge-equation} and every \(s\leq t\), define
\begin{equation}
\phi_{t+1,s}(F_s^{(k)},C_{s+1})
:=
\E\!\left[
\Omega_s\nu_{t+1,s+1}
\mid
K_{s+1}^{(k)}=1,F_s^{(k)},C_{s+1}
\right].
\label{eq:fh-phi-loading}
\end{equation}
Assume that \(\phi_{t+1,s}\) is square-integrable and
$\phi_{t+1,s}\in\operatorname{Range}(T_s^\star)$.
\end{assumption}

This condition has the same role as \Cref{ass:lambda_exist}. If two functions solve the same bridge equation, their difference belongs to \(\operatorname{Null}(T_s)\); the adjoint range condition ensures that the target-specific loading is orthogonal to that unidentified direction.

\begin{theorem}[Identification under full-history adjustment]\label{thm:fh-target-identification}
Under \Cref{ass:fh-nuc-markov,ass:fh-measurement-exchange,ass:fh-known-novisit,ass:fh-support-markov,ass:fh-shadow-markov,ass:fh-positive-markov,ass:fh-markov,ass:fh-target-dual-range}, let \(\beta_s\) be any square-integrable solution of \eqref{eq:fh-bridge-equation} for every \(s\leq t\), and define \(q_{t+1,s}\) and \(m_{t+1,s}\) by \eqref{eq:fh-terminal-q}--\eqref{eq:fh-q-recursion}. Then
\begin{equation}
\theta_{t+1}(\bar d_t)
=
\E\!\left[
 m_{t+1,1}\{d_1(A_1,1,C_1),C_1\}
\right].
\label{eq:fh-theta}
\end{equation}
The right-hand side is invariant to the sequence of bridge solutions selected. Thus neither the individual bridge functions nor the intermediate regression functions need be uniquely identified.
\end{theorem}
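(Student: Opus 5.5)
The proof runs in two stages. First we prove the identity \eqref{eq:fh-theta} for the true bridges $\beta_s^0$. Then we show that replacing any $\beta_s^0$ by another solution of \eqref{eq:fh-bridge-equation} leaves the right-hand side unchanged.

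\emph{Stage 1: the true bridges.} We start from the counterfactual mean $\E\{Y_{t+1}(\bar d_t)\}$ and use \Cref{ass:fh-nuc-markov} together with \Cref{ass:fh-support-markov}. This gives a full-data g-formula that integrates sequentially over $(C_s,R_s,A_s)$ given $F_{s-1}$. At each step, treatment is replaced by $d_s(A_s,1,F_s)$ on $\{R_s=1\}$ and by $t_s\{\alpha_s(H_s),\epsilon_s\}$ on $\{R_s=0\}$; the latter is well defined by \Cref{ass:fh-known-novisit}. By \Cref{ass:fh-measurement-exchange}, the conditional laws of future variables given $(F_s,A_s)$ do not depend on $\bar R_s$. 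Hence the inner regression $\E[\cdot\mid A_s=a,F_s]$ can be evaluated on $\{J_s=1\}$, and the Markov restrictions of \Cref{ass:fh-markov} reduce it further to conditioning on $\{J_s^{(k)}=1\}$ and $F_s^{(k)}$. The integral over $R_s$ splits into two branches:
\[
\E[\,\cdot\,\mid F_{s-1},C_s]=\E[R_s\{\cdot\}_{d}\mid\cdot]+\E[(1-R_s)\{\cdot\}_{t}\mid\cdot].
\]
In the $R_s=0$ branch the regression function evaluated at $t_s(\alpha_s,\epsilon_s)$ does not depend on $R_s$. We rewrite $\E[(1-R_s)g\mid F_{s-1},C_s]$ as $\E[R_s(\beta^0_{s-1}-1)g\mid F_{s-1},C_s]$, using $\P(R_s=0\mid\cdot)/\P(R_s=1\mid\cdot)=\beta^0_{s-1}-1$. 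Here \Cref{ass:fh-positive-markov} guarantees positivity and (M3) lets the visit probability depend only on the $k$-lag state. This produces exactly the pseudo-outcome \eqref{eq:fh-q-recursion}, multiplied by $R_s$, so the next conditioning is on $K_s^{(k)}=1$, i.e.\ $J_s^{(k)}=1$. At the terminal step, the argument of Theorem~\ref{thm:latent_bridge}, with \Cref{ass:fh-shadow-markov} in place of \Cref{ass:shadow}, gives $m_{t+1,t}=\E[R_{t+1}\beta^0_tY_{t+1}\mid J^{(k)}_t=1,A_t,F^{(k)}_t]$. Lemma~\ref{lem:fh-bridge-equation} records that $\beta^0_s$ solves the bridge equation. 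Inducting backward from $s=t$ to $s=1$ yields \eqref{eq:fh-theta} with $\beta_s=\beta_s^0$.

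\emph{Stage 2: invariance.} We fix all bridges except $\beta_s$ and replace it by $\beta_s+\delta$ with $\delta\in\operatorname{Null}(T_s)$. We then track how the change propagates into the baseline functional $\Psi=\E[m_{t+1,1}\{d_1,C_1\}]$. By the usual telescoping (weighting) representation of the sequential regressions, $\Psi$ can be written as $\E[\Omega_{s}\cdots]$. Perturbing $\beta_s$ changes only two things: the factor $(\beta_s-1)$ in $q_{t+1,s+1}$ (for $s<t$), or the terminal pseudo-outcome (for $s=t$). In both cases the change in $\Psi$ is
\[
\E\big[\Omega_s R_{s+1}\delta(F^{(k)}_s,C_{s+1})\,\nu_{t+1,s+1}\big]
=\E\big[K^{(k)}_{s+1}\,\delta\,\phi_{t+1,s}\big]=\langle \delta,\phi_{t+1,s}\rangle_{C,s},
\]
where $\nu_{t+1,s+1}$ is from \eqref{eq:fh-nu}. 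Note that $\Omega_s$ carries the factor $J^{(k)}_s$, and the regressions from time $s+1$ onward depend on $\beta_s$ only through this term, since $\beta_s$ appears in $q_{t+1,s+1}$ but not in $m_{t+1,s+1}$.

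By \Cref{ass:fh-target-dual-range}, $\phi_{t+1,s}=T_s^\star\lambda_s$. Therefore the change equals $\langle T_s\delta,\lambda_s\rangle_{A,s}=0$. Changing the bridges one at a time, starting from the true collection $\beta^0$ and passing through intermediate collections, then gives invariance. This is why the range condition is assumed for every bridge collection.

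\emph{Main obstacle.} The hard part is Stage 2's identity expressing $\Psi$ as the weighted expectation involving $\Omega_s$. We must verify that the recursive weights \eqref{eq:fh-Omega} are the correct Riesz representers for the nested regressions \eqref{eq:fh-m-recursion}–\eqref{eq:fh-q-recursion}. This means showing, at each step, that $\E[R_s\{m(d_s)+(\beta_{s-1}-1)m(t_s)\}\mid\cdot]$ equals $\E[R_s\{\omega^d_s+(\beta_{s-1}-1)\omega^t_s\}m(A_s)\mid\cdot]$. That identity requires $\beta_{s-1}$ to be a function of $(F^{(k)}_{s-1},C_s)$ only, hence constant given $F^{(k)}_s$, and it requires the change-of-variables densities from \Cref{ass:piecewise-invertibility}-type conditions. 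The Markov reductions also need care so that conditioning sets of consecutive steps match. We would first prove this representation as a separate lemma by backward induction.
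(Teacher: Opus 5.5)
Your proposal is correct and follows the paper's proof essentially step for step. You anchor the recursion at the true bridges using the full-data sequential-regression formula, the odds identity $\E\{(1-R)v\mid\cdot\}=\E\{R(\beta^0-1)v\mid\cdot\}$ and the Markov reduction; you then get invariance by swapping one bridge at a time through hybrid collections, where the change equals $\langle\delta,\phi_{t+1,s}\rangle_{C,s}$ with a loading common to the two adjacent collections and is annihilated by the range condition applied to the intermediate collection. Two small corrections: for finite $k$ the event $\{K_s^{(k)}=1\}$ is not $\{J_s^{(k)}=1\}$ (it equals $\{R_{s-k}J_s^{(k)}=1\}$ once $s>k$), so matching the conditioning sets of consecutive regressions genuinely rests on (M1)--(M3); and the intermediate odds step also needs \Cref{ass:fh-shadow-markov}, not only (M3), because the regression conditions on $A_{s-1}$ while $\beta_{s-1}^0$ does not depend on it.
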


\subsection{Semiparametric estimation theory}\label{sec:fh-semiparametric}

We next study the fixed-time parameter as a functional of the observed-data law. Let \(\beta_s(F_s^{(k)},C_{s+1})\), \(s=1,\ldots,t\), denote selected square-integrable solutions of \eqref{eq:fh-bridge-equation}. For \(s=1,\ldots,t\), define
\[
\eta_{t+1,s+1}(O)
:=
q_{t+1,s+1}(A_{s+1},F_{s+1}^{(k)},\epsilon_{s+1}),
\]
where \(A_{t+1}\) and \(\epsilon_{t+1}\) are omitted at the terminal step. Thus \eqref{eq:fh-m-recursion} may be written uniformly as
\[
m_{t+1,s}(a_s,f_s)
=
\E\!\left[
\eta_{t+1,s+1}(O)
\mid
J_s^{(k)}=1,A_s=a_s,F_s^{(k)}=f_s
\right].
\]
Holding the selected bridge functions fixed, define
\begin{equation}
\begin{split}
D_t^{\mathrm{seq}}(O)
:={}&
m_{t+1,1}\{d_1(A_1,1,C_1),C_1\}
-\theta_{t+1}(\bar d_t)\\
&+
\sum_{s=1}^t
\Omega_s
\left\{
\eta_{t+1,s+1}(O)
-
m_{t+1,s}(A_s,F_s^{(k)})
\right\}.
\end{split}
\label{eq:fh-seq-gradient}
\end{equation}
Note that if \(A_s\) is continuous and \(t_s\) assigns treatment deterministically equal to \(\alpha_s(H_s)\), then \(g_s^t\) is a point mass and is generally not absolutely continuous with respect to \(g_s\). Consequently, \(\omega_s^t\), and hence \(\Omega_s\) in \eqref{eq:fh-seq-gradient}, are not well defined. More fundamentally, the resulting point-evaluation functional is generally not pathwise differentiable under the nonparametric model considered here, so a regular gradient does not exist. This motivates our introduction of the randomizer in the continuation rule \(t_s\).

For every \(s\leq t\), let \(\lambda_{t+1,s}(A_s,F_s^{(k)})\) be a square-integrable solution of $
T_s^\star\lambda_{t+1,s}
=
\phi_{t+1,s}$, whose existence follows from \Cref{ass:fh-target-dual-range}, and define
\begin{equation}
D_t^{\mathrm{br}}(O)
:=
-\sum_{s=1}^t
J_s^{(k)}
\lambda_{t+1,s}(A_s,F_s^{(k)})
\left\{
R_{s+1}\beta_s(F_s^{(k)},C_{s+1})-1
\right\}.
\label{eq:fh-bridge-gradient}
\end{equation}

The following theorem presents a gradient of the pathwise derivative that will be used for estimation.
\begin{theorem}[Pathwise differentiability]
\label{thm:fh-if}
Under \Cref{ass:fh-nuc-markov,ass:fh-measurement-exchange,ass:fh-known-novisit,ass:fh-support-markov,ass:fh-shadow-markov,ass:fh-positive-markov,ass:fh-markov,ass:fh-target-dual-range},
as well as the regularity condition stated in
Appendix~\ref{proof:thm:fh-if}, the  parameter in \cref{eq:iden2} is
pathwise differentiable, with gradient given by $D_t(O)
=
D_t^{\mathrm{seq}}(O)+D_t^{\mathrm{br}}(O)$.
\end{theorem}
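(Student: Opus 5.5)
We work along a regular one-dimensional parametric submodel \(P_\varepsilon\) with score \(S(O)\) that respects the model restrictions. The restrictions are the bridge equations \eqref{eq:fh-bridge-equation}, each of which has a solution \(\beta_{s,\varepsilon}\), and the range conditions in \Cref{ass:fh-target-dual-range}. Since \Cref{thm:fh-target-identification} shows that the parameter does not depend on which bridges are selected, we may choose differentiable paths \(\beta_{s,\varepsilon}\) of solutions; the regularity condition in the appendix is what we expect to supply these. It then suffices to compute \(\partial_\varepsilon \theta_{t+1}(\bar d_t;P_\varepsilon)|_{\varepsilon=0}\) and show that it equals \(\E\{D_t(O)S(O)\}\). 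We split the derivative into two parts: the part obtained by perturbing the law while holding the bridges fixed, and the part obtained by perturbing the bridges \(\beta_1,\ldots,\beta_t\) while holding the law fixed.

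\emph{Law perturbed, bridges fixed.} Here the functional is the sequential-regression functional \eqref{eq:fh-m-recursion}--\eqref{eq:fh-q-recursion}. Its pseudo-outcomes are known functions of the data once \(\beta\) is fixed, and its intervention at step \(s\) mixes two branches, \(d_s(\cdot,1,\cdot)\) weighted by \(R_s\) and \(t_s(\alpha_s,\epsilon_s)\) weighted by \(R_s(\beta_{s-1}-1)\). We differentiate backwards, exactly as in \citet{diaz2023nonparametric}. At each step \(s\), the derivative of \(m_{t+1,s}\) contributes the usual conditional-score term \(\E[\cdot\,\{\eta_{t+1,s+1}-m_{t+1,s}\}S]\) on the event \(J_s^{(k)}=1\). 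The derivative of the treatment-law averaging inside \(q_{t+1,s}\) is converted, by a change of variables through \(g_s^d\) and \(g_s^t\), into multiplication by the density ratios \(\omega_s^d\) and \(\omega_s^t\). \Cref{ass:piecewise-invertibility} justifies the change of variables for \(g_s^d\); absolute continuity from \Cref{ass:fh-support-markov}, guaranteed by the randomizer, justifies it for \(g_s^t\). Because the two branches carry the weights \(R_s\) and \(R_s(\beta_{s-1}-1)\), the weights accumulate multiplicatively into exactly the recursion \eqref{eq:fh-Omega} for \(\Omega_s\). The Markov restrictions (\Cref{ass:fh-markov}) ensure that conditioning on \(J_s^{(k)}\) and \(F_s^{(k)}\) is coherent across steps, so the telescoping works and \(J_s^{(k)}\) is absorbed into \(\Omega_s\). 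This part therefore yields \(\E\{D_t^{\mathrm{seq}}(O)S(O)\}\).

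\emph{Bridges perturbed.} Differentiating the baseline expression with respect to \(\beta_s\) in a direction \(\dot\beta_s\) only affects the terminal or \(q\)-terms in which \(\beta_s\) appears. These are the terminal pseudo-outcome \eqref{eq:fh-terminal-q} when \(s=t\), and the coefficient \((\beta_s-1)\) on the \(t_{s+1}\) branch when \(s<t\). Propagating through the earlier recursion steps with the same density-ratio manipulations gives the directional derivative \(\E[K_{s+1}^{(k)}\Omega_s\nu_{t+1,s+1}\dot\beta_s]\). Conditioning on \((F_s^{(k)},C_{s+1})\) among \(K_{s+1}^{(k)}=1\) rewrites this as \(\langle \phi_{t+1,s},\dot\beta_s\rangle_{C,s}\). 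Using \Cref{ass:fh-target-dual-range} we write \(\phi_{t+1,s}=T_s^\star\lambda_{t+1,s}\), so the derivative becomes \(\langle\lambda_{t+1,s},T_s\dot\beta_s\rangle_{A,s}\). Differentiating the identity \(\E_\varepsilon[R_{s+1}\beta_{s,\varepsilon}\mid J_s^{(k)}=1,A_s,F_s^{(k)}]=1\) gives
\[
T_s\dot\beta_s=-\E\big[\{R_{s+1}\beta_s-1\}S\mid J_s^{(k)}=1,A_s,F_s^{(k)}\big],
\]
where we use that \(R_{s+1}\beta_s-1\) has conditional mean zero. Substituting this into the inner product gives \(\E\{-J_s^{(k)}\lambda_{t+1,s}(R_{s+1}\beta_s-1)S\}\), and summing over \(s\) yields \(\E\{D_t^{\mathrm{br}}S\}\).

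Adding the two parts gives the result. Both terms have mean zero, so \(D_t\) is a valid gradient. \textbf{The main obstacle} is the second part: getting the exact loading \(\Omega_s\nu_{t+1,s+1}\). A perturbation of \(\beta_s\) at an intermediate step enters through the continuation-branch coefficient and must be propagated backward through steps \(1,\ldots,s\). We must check that it picks up precisely \(\Omega_s\), including the \(\beta_{j-1}-1\) factors from earlier steps and the \(R\)-indicators, so that everything collapses onto the \(K_{s+1}^{(k)}\)-weighted inner product. We will first verify this by induction on \(s\), mirroring the single-replacement argument referenced for \Cref{thm:fh-target-identification}, and then use \Cref{ass:fh-markov} to justify replacing full histories with \(k\)-lag states inside the conditional expectations.
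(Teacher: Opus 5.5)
Your proposal is correct and follows essentially the same route as the paper. It splits the derivative by the chain rule into two parts: a bridges-fixed sequential-regression derivative, where the change of measure through \(g_s^d\) and \(g_s^t\) accumulates into \(\Omega_s\), and coordinate-wise bridge derivatives \(\langle\dot\beta_s,\phi_{t+1,s}\rangle_{C,s}\) that the adjoint and the linearized bridge equation turn into \(-\E\{J_s^{(k)}\lambda_{t+1,s}(R_{s+1}\beta_s-1)S\}\). The only cosmetic difference is that you differentiate the conditional bridge identity directly, whereas the paper differentiates the equivalent unconditional moment with the test function \(\lambda_{t+1,s}\) held fixed.
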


As in the observed-history setting, we base estimation directly on the derived
gradient rather than on its generally implicit projection onto the tangent
space. The following proposition gives the corresponding second-order
expansion.

\begin{proposition}[Full-history second-order remainder]
\label{prop:fh-second-order}
Let \(P\) and \(F\) be two distributions in the full-history shadow-variable
model and assume the relevant density ratios, bridge functions, adjoint
representers, and sequential regressions are uniformly bounded. For
\(X\in\{P,F\}\), write
\[
\bar\beta_X=(\beta_{1,X},\ldots,\beta_{t,X}),
\]
and let \(m^{\bar\beta_F}_{t+1,s,P}\) denote the regression functions obtained
in the recursion when the conditional expectations are taken under \(P\) but
the pseudo-outcomes are evaluated using the bridge collection
\(\bar\beta_F\). Then
\[
\theta_{t+1}(\bar d_t;F)-\theta_{t+1}(\bar d_t;P)
=
-\E_P\{D_t(O;F)\}+R_{t,2}(P,F),
\]
where
\begin{align*}
|R_{t,2}(P,F)|
&\lesssim
\sum_{s=1}^t
\|\omega^d_{s,P}-\omega^d_{s,F}\|_{2,P}
\|m^{\bar\beta_F}_{t+1,s,P}-m^{\bar\beta_F}_{t+1,s,F}\|_{2,P}\\
&\quad+
\sum_{s=2}^t
\|\omega^t_{s,P}-\omega^t_{s,F}\|_{2,P}
\|m^{\bar\beta_F}_{t+1,s,P}-m^{\bar\beta_F}_{t+1,s,F}\|_{2,P}\\
&\quad+
\sum_{s=1}^t
\|\lambda_{t+1,s,P}-\lambda_{t+1,s,F}\|_{2,P}
\|\beta_{s,P}-\beta_{s,F}\|_{2,P}\\
&\quad+
\sum_{1\leq r<s\leq t}
\|\beta_{r,P}-\beta_{r,F}\|_{2,P}
\|\beta_{s,P}-\beta_{s,F}\|_{2,P}.
\end{align*}
\end{proposition}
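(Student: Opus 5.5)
The plan is to prove the expansion in two stages, mirroring the structure of $D_t=D_t^{\mathrm{seq}}+D_t^{\mathrm{br}}$. Throughout, write $\theta(\bar\beta;Q)$ for the baseline sequential-regression functional \eqref{eq:fh-theta} computed under law $Q$ with a fixed bridge collection $\bar\beta$ plugged into the pseudo-outcomes \eqref{eq:fh-terminal-q}--\eqref{eq:fh-q-recursion}. I would decompose
\[
\theta_{t+1}(\bar d_t;F)-\theta_{t+1}(\bar d_t;P)
=\{\theta(\bar\beta_F;F)-\theta(\bar\beta_F;P)\}+\{\theta(\bar\beta_F;P)-\theta(\bar\beta_P;P)\}.
\]
The first brace holds the bridges fixed, so it is a pure longitudinal MTP problem. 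The second brace is a pure bridge perturbation under the true law. Theorem~\ref{thm:fh-target-identification} guarantees $\theta(\bar\beta_P;P)=\theta_{t+1}(\bar d_t;P)$ for any $P$-bridge solution. However, $\bar\beta_F$ need not solve the $P$-bridge equation, so the second brace is generally nonzero and must be controlled.

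\textbf{Sequential term.} With $\bar\beta_F$ fixed, $\Omega_s$ in \eqref{eq:fh-Omega} is a product of known-form weights. I would follow the standard telescoping argument of \citet{diaz2023nonparametric}, adapted to the $J_s^{(k)}=1$ conditioning. The steps are:
\begin{itemize}
\item Expand $-\E_P\{D^{\mathrm{seq}}_t(O;F)\}$ as $\theta(\bar\beta_F;P)-\theta(\bar\beta_F;F)$ plus a sum over $s$ of terms of the form $\E_P[(\Omega_{s,F}-\Omega_{s,P})(m^{\bar\beta_F}_{t+1,s,P}-m^{\bar\beta_F}_{t+1,s,F})]$.
\item Use the tower property with respect to $(J_s^{(k)}=1,A_s,F_s^{(k)})$, together with the change-of-measure identities $\E_P[\omega^d_{s,P}u(A_s,F_s^{(k)})\mid J_s^{(k)}=1,F_s^{(k)}]=\E_P[u\{d_s(A_s,1,F_s^{(k)}),F_s^{(k)}\}\mid\cdot]$ and the analogous one for $\omega^t_s$.
\item Invoke the Markov conditions (M1)--(M3) so that conditioning on the local state $F_s^{(k)}$ suffices.
\end{itemize}
The factor $\Omega_{s,F}-\Omega_{s,P}$ contains differences in $\omega^d$, in $\omega^t$ (for $s\ge2$, since the $\omega^t$ term first appears at $s=2$), and in the $\beta_{s-1}$ multipliers. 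Telescoping the product and applying Cauchy--Schwarz with uniform boundedness produces the first two sums in the bound, plus cross terms carrying a factor $\beta_{s-1,P}-\beta_{s-1,F}$. I would hold those cross terms back for the bridge stage.

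\textbf{Bridge term.} Here I would reuse the appendix device behind Theorem~\ref{thm:fh-target-identification}: replacing one bridge while keeping the others fixed changes the baseline functional through the single loading $\phi_{t+1,s}$. Concretely, I swap $\beta_{s,P}\to\beta_{s,F}$ one $s$ at a time, from $s=t$ downward. Each swap changes $\theta$ by $\E_P[K^{(k)}_{s+1}\phi_{t+1,s}(\beta_{s,F}-\beta_{s,P})]$, evaluated with the mixed collection. I then write $\phi_{t+1,s}=T_{s,P}^\star\lambda_{t+1,s,P}$ and use adjointness:
\[
\langle T^\star_{s,P}\lambda_{t+1,s,P},\beta_{s,F}-\beta_{s,P}\rangle_{C,s}=\E_P[J_s^{(k)}\lambda_{t+1,s,P}\{R_{s+1}\beta_{s,F}-1\}],
\]
because $T_{s,P}\beta_{s,P}=1$. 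This is exactly $-\E_P\{D^{\mathrm{br}}_t\}$ with $\lambda_P$ in place of $\lambda_F$. Substituting $\lambda_F$ leaves $\E_P[J_s^{(k)}(\lambda_{t+1,s,P}-\lambda_{t+1,s,F})T_{s,P}(\beta_{s,F}-\beta_{s,P})]$. Cauchy--Schwarz, with $T_{s,P}$ bounded under uniform boundedness, gives the third sum.

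\textbf{Main obstacle.} The step I expect to be hardest is that the loading $\phi_{t+1,s}$ depends on the other bridges through $\Omega_s$ and $\nu$. When I swap bridges sequentially, the loading used at step $s$ is evaluated at a mixed collection (some $P$-bridges, some $F$-bridges), not at $\bar\beta_P$ or $\bar\beta_F$. The discrepancy between the mixed-collection loading and $\phi_{t+1,s,P}$ is linear in $\beta_{r,F}-\beta_{r,P}$ for $r<s$, through the $\{\beta_{r}-1\}$ factors in $\Omega_s$ and $q_{t+1,r+1}$. Multiplied against $\beta_{s,F}-\beta_{s,P}$, it yields the $\sum_{r<s}$ product terms. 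The same is true of the deferred cross terms from the sequential stage. I would verify that these collect only into products $\|\beta_r-\beta_{r,F}\|\|\beta_s-\beta_{s,F}\|$ by expanding the product defining $\Omega_s$ multilinearly in the bridge differences, bounding every term of order two or higher by the pairwise products using boundedness. I would also confirm that no first-order bridge term survives, which relies on the swaps being ordered so that the loading at each step is measured against the correct adjoint equation.
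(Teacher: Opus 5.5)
Your architecture is the paper's. You split the difference into a fixed-bridge comparison $\theta(\bar\beta_F;F)-\theta(\bar\beta_F;P)$ and a bridge-swap telescope under $P$. You use the one-bridge replacement identity. You add and subtract the full-$P$ loading $\phi_{t+1,s,P}=T^\star_{s,P}\lambda_{t+1,s,P}$ to produce $-\E_P\{D^{\mathrm{br}}_t(O;F)\}$ plus the $\lambda$--$\beta$ product. The mixed loadings then give the bridge--bridge products. The bridge stage is essentially sound.

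The sequential stage, as written, is not. Once the pseudo-outcomes are built from $\bar\beta_F$, the $P$-side propagation factor must also be built from $\bar\beta_F$, namely $\Omega^{\bar\beta_F}_{s,P}=\Omega^{\bar\beta_F}_{s-1,P}R_s[\omega^d_{s,P}+(\beta_{s-1,F}-1)\omega^t_{s,P}]$. The reason is that the change of measure $\E_P[\Omega_{s-1}\,q_s]=\E_P[\Omega_s\,m(A_s,F_s^{(k)})]$ only holds when the same $\beta_{s-1}$ appears in the weight and in the pseudo-outcome. Here $q_s=R_s[m\{d_s(A_s,1,F_s^{(k)}),F_s^{(k)}\}+(\beta_{s-1}-1)m\{t_s(\alpha_s(H_s^{(k)}),\epsilon_s),F_s^{(k)}\}]$.

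You instead let $\Omega_{s,P}$ carry $\beta_{s-1,P}$. That change-of-measure step then fails. The terms you ``hold back'' have the form $\E_P[\cdots(\beta_{s-1,P}-\beta_{s-1,F})\,\omega^t_s\,(m^{\bar\beta_F}_{t+1,s,P}-m^{\bar\beta_F}_{t+1,s,F})]$, which is a bridge error times a sequential-regression error. Such terms are not in the stated bound. They also cannot be absorbed into $\sum_{r<s}\|\beta_{r,P}-\beta_{r,F}\|_{2,P}\|\beta_{s,P}-\beta_{s,F}\|_{2,P}$, because the bridge stage involves only $P$-regressions and never sees $m^{\bar\beta_F}_{t+1,s,F}$. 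For instance, if only $\beta_{s-1}$ differs between $P$ and $F$, every bridge--bridge product vanishes but this term need not. So the verification promised in your last paragraph would fail for these terms.

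The fix is the paper's device. Hold $\bar\beta_F$ in both weight sequences, set $W_{s,X}=R_s[\omega^d_{s,X}+(\beta_{s-1,F}-1)\omega^t_{s,X}]$, and telescope one step at a time. This gives the exact identity $\theta(\bar\beta_F;P)-\theta(\bar\beta_F;F)=\E_P\{D^{\mathrm{seq}}_t(O;F)\}-R^{\mathrm{seq}}_{t,2}$, with
\[
R^{\mathrm{seq}}_{t,2}=\E_P\bigl[(\omega^d_{1,P}-\omega^d_{1,F})(m^{\bar\beta_F}_{t+1,1,F}-m^{\bar\beta_F}_{t+1,1,P})\bigr]+\sum_{s=2}^t\E_P\bigl[\Omega_{s-1,F}(W_{s,P}-W_{s,F})(m^{\bar\beta_F}_{t+1,s,F}-m^{\bar\beta_F}_{t+1,s,P})\bigr].
\]
The multiplier $\beta_{s-1,F}-1$ is common to $W_{s,P}$ and $W_{s,F}$, so their difference contains only $\omega^d$ and $\omega^t$ errors. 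No bridge cross terms arise, the same-$s$ pairing in the first two sums is immediate, and the fourth sum comes solely from the mixed loadings.

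This also repairs a second problem. An expansion in cumulative differences $\Omega_{s,F}-\Omega_{s,P}$ paired with $m^{\bar\beta_F}_{t+1,s,P}-m^{\bar\beta_F}_{t+1,s,F}$ is not what the telescoping produces. Telescoping that product would pair $\omega_j$-errors with $m_s$-errors for $j<s$, which is not the stated bound.

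Two small points. First, with your swap order ($s=t$ downward) the earlier bridges in each hybrid are already at their $P$-values. The loading discrepancy therefore comes from the later bridges $r>s$ through the continuation $\nu_{t+1,s+1}$, not from $\Omega_s$; the bound is symmetric in the pair, so this is harmless. Second, in your first bullet, $-\E_P\{D^{\mathrm{seq}}_t(O;F)\}$ equals $\theta(\bar\beta_F;F)-\theta(\bar\beta_F;P)$ plus a remainder, not its negative.
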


The first two terms are the analogues of the sequential-regression remainder
in Proposition~\ref{prop:vd-second-order}, with separate errors for the
visit-time and no-visit policy density ratios. The third term is the analogue
of the bridge--adjoint product in the observed-history result. The final term
is specific to the full-history recursion: because a different bridge enters
at every time point, changing several bridge functions simultaneously produces
mixed products of their estimation errors.

Thus, for a plug-in estimator based on \(\widehat P\),
\[
\theta_{t+1}(\bar d_t;\widehat P)
-
\theta_{t+1}(\bar d_t;P)
=
-\E_P\!\left\{D_t(O;\widehat P)\right\}
+
R_{t,2}(P,\widehat P).
\]
As before, the estimators below remove the empirical first-order contribution
of the gradient, while the proposition gives conditions under which the
remaining bias is \(o_p(n^{-1/2})\).

\subsection{Estimators}\label{sec:estima-fh}

We use the same pool-and-smooth and cross-fitting construction introduced in
\S\ref{sec:estima-vd}. In particular, the lower-triangular regression array,
its diagonals \(\mathcal D_\ell\), and the re-indexing by \(\ell=t-s\) are
defined exactly as in that section. The difference is that the full-history
regressions condition on the complete retained state. Under the same
re-indexing,
\begin{equation}
m_\ell(a,f,t)
=
\E\!\left[
\eta_\ell(O,t)
\mid
J_t^{(k)}=1,A_t=a,F_t^{(k)}=f
\right],
\label{eq:fh-pooled-regression}
\end{equation}
where
\begin{equation}
\eta_\ell(O,t)
=
\begin{cases}
R_{t+1}\beta_t(F_t^{(k)},C_{t+1})Y_{t+1},
& \ell=0,\\[2mm]
q_{t+\ell+1,t+1}(A_{t+1},F_{t+1}^{(k)},\epsilon_{t+1}),
& \ell=1,\ldots,\tau-1.
\end{cases}
\notag
\end{equation}
For \(\ell\geq1\), the second line can be written using the regression on the
preceding diagonal as
\begin{align}
\eta_\ell(O,t)
=
R_{t+1}\bigg[
&
m_{\ell-1}\{d_{t+1}(A_{t+1},1,F_{t+1}^{(k)}),F_{t+1}^{(k)},t+1\}
\nonumber\\[-1mm]
&+
\{\beta_t(F_t^{(k)},C_{t+1})-1\}
m_{\ell-1}\{t_{t+1}(\alpha_{t+1}(H_{t+1}^{(k)}),\epsilon_{t+1}),
F_{t+1}^{(k)},t+1\}
\bigg].
\notag
\end{align}
As throughout this section, the regression in
\eqref{eq:fh-pooled-regression} averages over the known law of the
intervention randomizer. The diagonals are estimated sequentially in the same
way as in \S\ref{sec:estima-vd}.

Using the same folds \(\mathcal V_j,\mathcal T_j\), and notation \(j(i)\)
defined in \S\ref{sec:estima-vd}, each training sample is used to estimate the
bridge functions \(\beta_s\), the treatment laws entering
\(\omega_s^d\) and \(\omega_s^t\), the pooled regressions in
\eqref{eq:fh-pooled-regression}, and the adjoint functions
\(\lambda_{t+1,s}\). The loadings \(\phi_{t+1,s}\) are estimated using
\eqref{eq:fh-phi-loading}; these estimates determine the propagation factors
\(\widehat\Omega_s\) and the fold-specific gradient components.

Let \(\widehat\theta_{t+1}^{\mathrm{seq}}\) denote the cross-fitted
longitudinal TMLE or sequentially doubly robust estimator constructed as in
\S\ref{sec:estima-vd}, treating the fitted bridge collection as fixed within
each training fold. The full-history estimator is
\begin{equation}
\widehat\theta_{t+1}
=
\widehat\theta_{t+1}^{\mathrm{seq}}
+
\frac1n
\sum_{i=1}^n
\widehat D_{t,j(i)}^{\mathrm{br}}(O_i),
\qquad
t=1,\ldots,\tau.
\label{eq:fh-one-step}
\end{equation}
Under cross-fitting, consistency of the estimated gradient, and product-rate
conditions that make the remainder in Proposition~\ref{prop:fh-second-order}
\(o_p(n^{-1/2})\), the same asymptotic linearity, covariance estimation, and
simultaneous confidence-band construction described in
\S\ref{sec:estima-vd} apply.

%% file: estimation_main.tex
\section{Nuisance-function estimation}\label{sec:nuisance-estimation}

The observed-history and full-history analyses use different identifying
recursions and different gradients, but their nuisance functions can be
estimated using a common collection of tools. In both settings, estimation
requires conditional regressions, policy density ratios, solutions of bridge
conditional moment equations, and solutions of the corresponding adjoint
conditional moment equations. This section describes practical approaches to
these tasks. The construction of the final estimators, including the
pool-and-smooth organization of the sequential regressions, is given in
Sections~\ref{sec:estima-vd} and \ref{sec:estima-fh}.

\subsection{Conditional regressions}

All conditional expectations entering the identifying recursions may be
estimated with standard regression methods, preferably with cross-fitting. The full-history
regressions are fit among observations with a complete retained state
\(J_s^{(k)}=1\). Whenever a pseudo-outcome contains an intervention randomizer,
the regression target also averages over its known law. In practice, this
expectation may be evaluated analytically when available or approximated by
Monte Carlo draws of the randomizer within the training fold.

For estimation of the entire curve, these regression problems are not fit
separately for every pair \((t,s)\). Instead, Sections~\ref{sec:estima-vd} and
\ref{sec:estima-fh} organize them by the lag \(\ell=t-s\), stack regression
problems on a common diagonal, and include calendar time as a predictor. A
learner saturated in calendar time recovers separate fitting as a special
case, whereas smoother learners can borrow information across outcome times.

\subsection{Policy density ratios}

\subsubsection{Observed-history adjustment}

In the observed-history setting, the policy density ratio at time \(s\) is
\[
\omega_s(a_s,h_s)
=
\frac{g_s^d(a_s\mid h_s)}{g_s(a_s\mid h_s)}.
\]
A convenient estimator is based on probabilistic classification
\citep{qin1998inferences,diaz2023nonparametric}. For each observation, construct
a numerator row $\{A_{s,i}'=d_s(A_{s,i},H_{s,i}),H_{s,i},\Delta_{s,i}=1\}$
and a denominator row $
\{A_{s,i}'=A_{s,i},H_{s,i},\Delta_{s,i}=0\}$. 
With equal numbers of numerator and denominator rows, Bayes' rule gives
\[
\P(\Delta_s=1\mid A_s'=a,H_s=h)
=
\frac{\omega_s(a,h)}{1+\omega_s(a,h)}.
\]
Thus, if \(\widehat e_s(a,h)\) estimates the probability on the left, then
\[
\widehat\omega_s(a,h)
=
\frac{\widehat e_s(a,h)}{1-\widehat e_s(a,h)}.
\]
Equivalently, \(\omega_s\) is characterized by the change-of-measure identity
\[
\E\{q(d_s(A_s,H_s),H_s)\}
=
\E\{\omega_s(A_s,H_s)q(A_s,H_s)\}
\]
for every square-integrable \(q\). Riesz or Bregman density-ratio learning may
therefore be used instead of classification
\citep{chernozhukov2018double,hines2025learning}.

\subsubsection{Full-history adjustment}

The full-history gradient uses the two density ratios in
\eqref{eq:fh-omega-t}. Conditional on
\(\{J_s^{(k)}=1,F_s^{(k)}=f_s\}\), the first compares the treatment law induced
by the visit-time modification \(d_s(\cdot,1,f_s)\) with the natural treatment
law, whereas the second compares the law induced by the stochastic no-visit
continuation rule with the same natural treatment law. Both ratios can be
estimated using the same classification construction as above, restricting
the training sample to \(J_s^{(k)}=1\).

For \(\omega_s^d\), the numerator treatment is
\(d_s(A_s,1,F_s^{(k)})\). For \(\omega_s^t\), the numerator treatment is $
t_s\{\alpha_s(H_s^{(k)}),\epsilon_s\}$, $\epsilon_s\sim Q_s$,
where \(Q_s\) is known. A draw from \(Q_s\) may therefore be generated for each
training observation to construct the numerator sample. Multiple randomizer
draws may also be used, with the classification odds adjusted for the resulting
class sampling proportions. The estimated ratios are then combined with the
selected bridge functions through the recursion in \eqref{eq:fh-Omega}.

\subsection{Bridge and adjoint functions}

The bridge-estimation problems in the two sections have the same local form.
For the observed-history analysis at outcome time \(t+1\), let
\[
B_t=(A_t,H_t),
\qquad
V_t=(H_t,C_{t+1}),
\qquad
M_t=R_{t+1},
\]
and use the full sample. For the full-history analysis at time \(s\), restrict
the estimation sample to \(J_s^{(k)}=1\) and let
\[
B_s=(A_s,F_s^{(k)}),
\qquad
V_s=(F_s^{(k)},C_{s+1}),
\qquad
M_s=R_{s+1}.
\]
In either case, with expectations understood under the corresponding
stratification, the bridge equation is
\[
\E\{M_s\beta_s(V_s)-1\mid B_s\}=0.
\]
The corresponding adjoint equation can be written as
\[
\E\{M_s[\lambda_s(B_s)-\phi_s(V_s)]\mid V_s\}=0.
\]
For observed-history adjustment, \((\beta_s,\lambda_s,\phi_s)\) in this generic
notation correspond to the terminal objects
\((\beta_t,\lambda_t,\phi_t)\). For full-history adjustment they correspond to
\((\beta_s,\lambda_{t+1,s},\phi_{t+1,s})\), where the target-specific loading is
defined in \eqref{eq:fh-phi-loading}. The latter can itself be estimated by a
regression of \(\Omega_s\nu_{t+1,s+1}\) on
\((F_s^{(k)},C_{s+1})\) among \(K_{s+1}^{(k)}=1\). The observed-history loading
can be estimated analogously from its conditional-expectation definition in
\Cref{ass:lambda_exist}.

This formulation places both bridge problems directly in the conditional
moment framework, without requiring preliminary estimation of a response
probability followed by inversion or an odds transformation. Small response
probabilities can still make the inverse problem weakly informative, but they
do not enter the bridge estimating equation through an explicitly estimated
inverse probability.

\subsubsection{Existing methods for conditional moment restrictions}

There is a large literature on estimating unknown functions defined by
conditional moment restrictions. Classical econometric approaches use series or
sieve approximations to convert the infinite-dimensional problem into a growing
collection of finite-dimensional moment or minimum-distance problems. In the
nonparametric instrumental-variable setting, \citet{newey2003instrumental}
propose a series-based nonparametric two-stage least-squares estimator and study
the ill-posed inverse problem that results. More generally,
\citet{ai2003efficient} develop sieve minimum-distance estimation for models
containing finite-dimensional parameters and unknown functions defined through
conditional moment restrictions. \citet{chen2012estimation} extend this line of
work to penalized sieve minimum-distance procedures for general nonparametric
conditional moment models, including settings with nonsmooth generalized
residuals. These approaches are directly applicable here after substituting the
bridge or adjoint residual for the generalized residual, although their practical
performance depends on choices of sieve dimension, basis, weighting, and
regularization.

A second family uses reproducing-kernel methods to measure violations of the
conditional moments without choosing a finite set of test functions.
\citet{muandet2020kernel} introduce the maximum moment restriction criterion,
which represents a conditional moment restriction through its largest
unconditional moment over the unit ball of an RKHS. Under suitable richness
conditions on the RKHS, the resulting population criterion characterizes the
original conditional moment restriction and has a convenient kernel
representation. This provides both a direct estimation strategy and, as used
below, a common out-of-sample criterion for comparing bridge estimators trained
by otherwise unrelated algorithms.

A third family formulates conditional-moment estimation as a minimax or
adversarial problem in which a candidate structural function competes against a
critic that searches for violated moments. \citet{lewis2018adversarial} propose
adversarial GMM for conditional moment models using flexible adversaries, and
\citet{dikkala2020minimax} develop a general minimax theory with rates for broad
hypothesis and test-function classes, including RKHSs, sparse linear classes,
random forests, and neural networks. Closely related variational formulations
use the weighting ideas of optimally weighted GMM while retaining flexible
function classes. \citet{bennett2019deepgmm} develop DeepGMM for instrumental
variable models, and \citet{bennett2023vmm} develop the variational method of
moments for general conditional moment problems, with kernel and neural-network
implementations and conditions for consistency, asymptotic normality, and
semiparametric efficiency.

Several methods have also been developed specifically for bridge-function
problems arising in proximal causal inference and related settings.
\citet{kallus2021causal} propose minimax estimators for proximal bridge
functions using general function classes such as RKHSs and neural networks and
study settings in which uniqueness of the bridge need not be imposed.
\citet{kompa2022deep} develop a deep-learning maximum-moment method for proximal
bridge estimation. Particularly close to the present setting,
\citet{bennett2023strongly} study inference on functionals of nuisance functions
defined by conditional moment restrictions when the nuisance itself can be
weakly identified or have multiple solutions; their examples explicitly include
proximal causal inference and missing-not-at-random models with shadow
variables, and they propose penalized minimax estimators for both the primary
and debiasing nuisance functions. Related work by \citet{bennett2023minimaxiv}
constructs penalized minimax estimators that converge to a selected solution of
nonparametric instrumental-variable equations even when the solution is not
unique.

These methods differ in their regularity conditions, function classes,
regularization, and computational demands, and no single approach is uniformly
preferred. We therefore treat the bridge and adjoint estimation 
problem using ensemble learners. Any candidate method that targets the corresponding conditional moment
restriction may be used, with its tuning parameters selected within the
training data. 

\subsubsection{Cross-validated convex combinations}

The linearity of both the bridge and adjoint equations makes convex stacking
particularly natural. Suppose that, for a fixed bridge problem, a library of
candidate estimators $\widehat\beta_s^{(1)},\ldots,\widehat\beta_s^{(M)}$ 
has been fit using any of the methods described above. For
\(\alpha=(\alpha_1,\ldots,\alpha_M)\) in the simplex
$\Delta_M
=
\left\{
\alpha_m\geq0:\sum_{m=1}^M\alpha_m=1
\right\}$,
define $
\widehat\beta_{s,\alpha}(v)
=
\sum_{m=1}^M\alpha_m\widehat\beta_s^{(m)}(v)$. If every candidate is an exact solution of the same population bridge equation,
then every such convex combination is also an exact solution. Thus the ensemble
construction is compatible with the identification results in both sections,
which require the target to be invariant across bridge solutions rather than
requiring the bridge itself to be unique. If the candidate solutions are
constrained to be positive, convex combination also preserves positivity.

The weights can be chosen by cross-validation using any common criterion that
measures violation of the conditional moment restriction. A convenient choice
is the RKHS maximum-moment criterion of \citet{muandet2020kernel}. Importantly,
the RKHS is used here only to score candidate estimators; the candidates
need not themselves be kernel estimators. Let \(\mathcal I_s\) denote a set of
held-out observations used to select the stacking weights, let
\(n_s=|\mathcal I_s|\), and define
\[
\widehat r_{im}
=
M_{s,i}\widehat\beta_s^{(m)}(V_{s,i})-1,
\qquad
K_{ij}^{B}=k_B(B_{s,i},B_{s,j}).
\]
When \(M_{s,i}=0\), the first term in \(\widehat r_{im}\) is zero and
\(V_{s,i}\) need not be evaluated.
For \(m,m'=1,\ldots,M\), set
\[
\widehat G_{mm'}
=
\frac{1}{n_s^2}
\sum_{i,j\in\mathcal I_s}
\widehat r_{im}\widehat r_{jm'}K_{ij}^{B}.
\]
The stacking weights are then
\begin{equation}
\widehat\alpha
\in
\arg\min_{\alpha\in\Delta_M}
\alpha^\top\widehat G\alpha.
\label{eq:bridge-stacking}
\end{equation}
Because \(k_B\) is positive semidefinite, \(\widehat G\) is positive
semidefinite, so \eqref{eq:bridge-stacking} is a small convex quadratic program.
Nonuniqueness of the optimal weights is unproblematic when several library elements converge to different exact bridge
solutions, because their convex combinations remain solutions of the same
bridge equation.

The same construction applies to the adjoint equation. Candidate adjoint
estimators should be fit using a common training-sample estimate
\(\widehat\phi_s\). To choose the stacking weights for the adjoint estimators, we use
only validation observations with $M_s=1$, since the criterion
requires $V_s$, which is unavailable when $M_s=0$. On this subset define
\[
\widehat a_{im}
=
\widehat\lambda_s^{(m)}(B_{s,i})-
\widehat\phi_s(V_{s,i}),
\]
and replace \(\widehat r_{im}\) and \(k_B\) above by
\(\widehat a_{im}\) and a kernel \(k_V\) on \(V_s\). This scores the
equivalent complete-case conditional moment restriction. The simplex constraint
is again useful: it ensures that the residual for the convexly combined adjoint
is the same convex combination of the candidate residuals. If every candidate
solves the population adjoint equation for the same \(\phi_s\), their convex
combination also solves it.

To preserve the outer cross-fitting used by the final causal estimator, neither
the library fits nor the stacking weights used on an outer validation fold may
depend on that fold. In practice, the weights can be selected by an inner
cross-validation performed entirely within each outer training sample; the
candidate learners are then refit on the full outer training sample and combined
using the selected weights before evaluation on the outer validation fold. This
is a Super-Learner-type construction \citep{vanderlaan2007super}, but here the
common validation loss is constructed from the conditional moment restriction. 

\subsection{Cross-fitting and implementation}

All nuisance functions entering a score evaluated on a validation observation
should be estimated using data that exclude that observation. For each fold,
the nuisance functions are therefore fitted on the corresponding training
sample and then evaluated on the held-out fold. The same sample split should be
propagated through the entire backward recursion so that estimated
pseudo-outcomes are constructed only from nuisance fits obtained outside their
validation fold.

Within each training fold, a practical order of operations is:
\begin{enumerate}
\item estimate the policy density ratios and the bridge solutions;
\item estimate the pooled sequential regressions, proceeding backward across
      diagonals as described in Sections~\ref{sec:estima-vd} and
      \ref{sec:estima-fh};
\item construct the target-specific loadings \(\phi_t\) or
      \(\phi_{t+1,s}\) using the fitted density ratios, bridge functions, and
      sequential regressions;
\item estimate the corresponding adjoint functions \(\lambda_t\) or
      \(\lambda_{t+1,s}\);
\item evaluate the fold-specific sequential and bridge gradient components and
      construct the estimators in \eqref{eq:vd-one-step} and
      \eqref{eq:fh-one-step}.
\end{enumerate}
The tuning parameters used to select bridge and adjoint representatives must
also be chosen within the training folds. Under the product-rate conditions in
Propositions~\ref{prop:vd-second-order} and \ref{prop:fh-second-order}, inference
may be based on the empirical covariance of the cross-fitted estimated
gradients, with simultaneous inference across outcome times implemented as
described in \S\ref{sec:estima-vd}.

%% file: conclusion.tex

\section{Conclusion}

We have developed identification and estimation methods for longitudinal causal effects under informative missingness governed by a self-censoring model. The motivating application is the analysis of electronic health records, in which treatments can be modified only during healthcare encounters, while the occurrence of these encounters depends on covariates and outcomes that may otherwise remain unobserved. Our framework targets interventions on treatment while allowing the visit process to evolve naturally under the intervention, i.e., it avoids counterfactual variables that inquire what would have happened under an intervention on the visit process. We consider both settings in which the observed history suffices for treatment confounding control and settings in which adjustment requires the latent covariate history.

The identification results combine longitudinal modified treatment policies with shadow-variable conditional moment restrictions. A central feature is that identification of the causal parameter does not require uniqueness of the bridge functions. Instead, target-specific adjoint range conditions ensure that different bridge solutions yield the same counterfactual mean. These conditions also play a central role in the estimation theory: adjoint solutions provide the corrections needed to account for estimation of the bridge functions. The resulting gradients and second-order expansions support estimators that accommodate machine learning and permit root-n inference under suitable nuisance-estimation conditions. Estimation based on these gradients does not necessarily attain the semiparametric efficiency bound, which requires projection onto the tangent space of the observed-data model. Our results provide a foundation for constructing semiparametrically efficient estimators, a task we leave for future work.


Future work includes additional more extensive evaluations of the finite-sample performance of the proposed estimators, including the choice of bridge and adjoint learners, regularization, and pooling across outcome times. Sensitivity analyses for violations of the shadow-variable restrictions are also important directions. The results presented here provide a foundation for these developments and for causal analyses of longitudinal data in which measurement encounters serve simultaneously as opportunities for treatment and informative measurement events, such as healthcare encounters.

%% file: observed_proofs.tex
\section{Proofs of main results for visit-dependent exposures}\label{app:main-proofs}
\subsection{Auxiliary results}\label{app:auxiliary-results}

\begin{lemma}[Inverse-probability representation for visit-dependent exposures]\label{lem:nor_identity}
Under Assumption \ref{ass:positive}, for any square-integrable function \(q(H_t,C_{t+1})\),
\[
\E\{q(H_t,C_{t+1})\}
=
\E\!\left[
R_{t+1}\beta_t^0(H_t,C_{t+1})q(H_t,C_{t+1})
\right].
\]
\end{lemma}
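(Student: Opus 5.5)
The plan is to condition on $(H_t,C_{t+1})$ and use the tower property. Since $\beta_t^0(H_t,C_{t+1})$ and $q(H_t,C_{t+1})$ are both measurable with respect to $(H_t,C_{t+1})$, we can write
\[
\E\!\left[R_{t+1}\beta_t^0(H_t,C_{t+1})q(H_t,C_{t+1})\right]
=
\E\!\left[\beta_t^0(H_t,C_{t+1})q(H_t,C_{t+1})\,\E\{R_{t+1}\mid H_t,C_{t+1}\}\right].
\]
The inner conditional expectation equals $\p(R_{t+1}=1\mid H_t,C_{t+1})$. By \Cref{ass:positive}, this probability is strictly positive almost surely, so $\beta_t^0$ is well defined (finite) almost surely. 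By the definition of $\beta_t^0$, the product $\beta_t^0\,\p(R_{t+1}=1\mid H_t,C_{t+1})$ is then identically one almost surely. The right-hand side therefore reduces to $\E\{q(H_t,C_{t+1})\}$, which is the claim.

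The only step needing care is justifying the tower-property manipulation when $\beta_t^0$ may be unbounded, since positivity is only assumed almost surely and not uniformly. To handle this, I would first establish the identity for bounded nonnegative $q$. In that case Tonelli's theorem permits conditioning even if $R_{t+1}\beta_t^0 q$ is not known a priori to be integrable, and the computation above shows that its expectation equals $\E(q)<\infty$. I would then extend to general square-integrable $q$ in two moves. First, splitting $q=q^+-q^-$ and using monotone convergence on truncations $q^\pm\wedge M$ gives the identity for each part, with both sides finite because $q\in L^2\subset L^1$. Subtracting the two identities then yields the result for $q$ itself.

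Note that neither the shadow-variable exclusion (\Cref{ass:shadow}) nor any causal assumption is needed here. The lemma is purely a statement about inverse-probability weighting under the joint law of $(H_t,C_{t+1},R_{t+1})$.
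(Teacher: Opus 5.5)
Your proposal is correct and follows the paper's proof: you condition on $(H_t,C_{t+1})$, replace $R_{t+1}$ by $\P(R_{t+1}=1\mid H_t,C_{t+1})$, and cancel this against $\beta_t^0$, using only positivity. Your Tonelli and truncation argument for a possibly unbounded $\beta_t^0$ is a valid refinement that the paper leaves implicit.
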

\begin{proof}
By iterated expectation and the definition of the true inverse-probability bridge,
\begin{align*}
\E\!\left[
R_{t+1}\beta_t^0(H_t,C_{t+1})q(H_t,C_{t+1})
\right]
&=
\E\!\left[
\P(R_{t+1}=1\mid H_t,C_{t+1})
\beta_t^0(H_t,C_{t+1})q(H_t,C_{t+1})
\right]\\
&=
\E\{q(H_t,C_{t+1})\}.
\end{align*}
\end{proof}
\begin{lemma}[Conditional change of measure]
\label{lem:conditional-change-of-measure}
Let $(X,W)$ be random variables with conditional density
$g(x\mid w)$ for $X$ given $W=w$, with respect to a dominating measure
$\nu$. Let $g^\star(x\mid w)$ be another conditional density such that
$g^\star(\cdot\mid w)$ is absolutely continuous with respect to
$g(\cdot\mid w)$ almost surely, and define
\[
\kappa(x,w)
:=
\frac{g^\star(x\mid w)}{g(x\mid w)}.
\]
Let $X^\star$ denote a random variable whose conditional density given
$W=w$ is $g^\star(\cdot\mid w)$. Then, for every measurable function
$h$ for which the expectations below are finite,
\[
\E\{h(X^\star,W)\}
=
\E\{\kappa(X,W)h(X,W)\}.
\]
More generally, if $L$ is an integrable random variable measurable with
respect to a sigma-field containing $W$ and satisfying
\[
\E(L\mid X,W)
=
\E(L\mid W)
\]
almost surely, then
\[
\E\{Lh(X^\star,W)\}
=
\E\{L\kappa(X,W)h(X,W)\}.
\]
\end{lemma}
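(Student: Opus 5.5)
The first claim follows from the definition of $X^\star$ and Fubini. Conditional on $W=w$, the variable $X^\star$ has density $g^\star(\cdot\mid w)$ with respect to $\nu$. Hence
\[
\E\{h(X^\star,W)\mid W=w\}=\int h(x,w)\,g^\star(x\mid w)\,d\nu(x).
\]
Absolute continuity lets us write $g^\star(x\mid w)=\kappa(x,w)\,g(x\mid w)$ for $g(\cdot\mid w)$-almost every $x$. The set where $g(x\mid w)=0$ has $g^\star(\cdot\mid w)$-measure zero, so it contributes nothing; I will adopt the convention $\kappa=0$ there. The integral therefore equals
\[
\int h(x,w)\,\kappa(x,w)\,g(x\mid w)\,d\nu(x)=\E\{\kappa(X,W)h(X,W)\mid W=w\}.
\]
Taking expectations over $W$ gives the first display. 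The assumed finiteness of the expectations justifies the use of Fubini/Tonelli, applied first to $|h|$ and then to the positive and negative parts.

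For the generalization, I condition on $(X,W)$ on the right-hand side. Since $\kappa(X,W)h(X,W)$ is $(X,W)$-measurable, the tower property gives
\[
\E\{L\kappa(X,W)h(X,W)\}=\E\{\E(L\mid X,W)\,\kappa(X,W)h(X,W)\}.
\]
The assumption $\E(L\mid X,W)=\E(L\mid W)=:\bar L(W)$ turns this into $\E\{\bar L(W)\kappa(X,W)h(X,W)\}$. I then apply the first part with $h$ replaced by $\tilde h(x,w)=\bar L(w)h(x,w)$, which yields $\E\{\bar L(W)h(X^\star,W)\}$.

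It remains to show that this equals $\E\{Lh(X^\star,W)\}$. This is the one genuinely delicate step. It requires that, given $W$, the intervened variable $X^\star$ carry no information about $L$ beyond $W$; for instance, $X^\star$ may be generated from $W$ and an independent randomizer, or as $d(X,W)$ in the applications. I will state this as the implicit interpretation of $X^\star$, namely a variable drawn from $g^\star(\cdot\mid W)$ conditionally independently of $L$ given $W$. Under that interpretation,
\[
\E\{Lh(X^\star,W)\mid W\}=\bar L(W)\,\E\{h(X^\star,W)\mid W\},
\]
and the identity follows by the tower property. An alternative is to read the left-hand side as $\E[\bar L(W)\,\E\{h(X^\star,W)\mid W\}]$ by definition. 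The integrability needed for this conditioning comes from the finiteness assumptions.

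The main obstacle is making this conditional-independence reading of $X^\star$ precise. Without it the left-hand side is not determined by the stated hypotheses alone, since only the conditional law of $X^\star$ given $W$ is specified.
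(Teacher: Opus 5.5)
Your proof is correct and is essentially the paper's. It uses the same Fubini/absolute-continuity computation for the first identity and the same tower-property reduction to $\E[\E(L\mid W)\,\E\{h(X^\star,W)\mid W\}]$ for the second, and you are right that the last step does not follow from the stated hypotheses: the paper attributes it to ``the conditional independence encoded by'' $\E(L\mid X,W)=\E(L\mid W)$, which concerns $X$ rather than $X^\star$ (with $W$ constant, $X,L$ independent Bernoulli$(1/2)$, $g^\star=g$, $X^\star:=L$ and $h(x,w)=x$, the two sides are $1/2$ and $1/4$). One refinement: it suffices that $\E(L\mid X^\star,W)=\E(L\mid W)$, which follows from the hypothesis whenever $X^\star$ is a measurable function of $(X,W)$ and a randomizer independent of $(L,X,W)$, as in the applications. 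The full conditional independence of $X^\star$ and $L$ given $W$ that you adopt is stronger than this and can fail for $X^\star=d(X,W)$ when $L$ is merely mean-independent of $X$.
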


\begin{proof}
By iterated expectation,
\begin{align*}
\E\{h(X^\star,W)\}
&=
\E\!\left[
\E\{h(X^\star,W)\mid W\}
\right]\\
&=
\E\!\left[
\int h(x,W)g^\star(x\mid W)\,d\nu(x)
\right].
\end{align*}
By absolute continuity,
\[
g^\star(x\mid W)
=
\kappa(x,W)g(x\mid W)
\]
for $g(\cdot\mid W)$-almost every $x$. Therefore,
\begin{align*}
\E\{h(X^\star,W)\}
&=
\E\!\left[
\int
h(x,W)\kappa(x,W)g(x\mid W)\,d\nu(x)
\right]\\
&=
\E\!\left[
\E\{\kappa(X,W)h(X,W)\mid W\}
\right]\\
&=
\E\{\kappa(X,W)h(X,W)\}.
\end{align*}

For the weighted identity, first use iterated expectation and the assumption
$\E(L\mid X,W)=\E(L\mid W)$ to obtain
\begin{align*}
\E\{L\kappa(X,W)h(X,W)\}
&=
\E\!\left[
\E\{L\kappa(X,W)h(X,W)\mid X,W\}
\right]\\
&=
\E\!\left[
\E(L\mid W)\kappa(X,W)h(X,W)
\right]\\
&=
\E\!\left[
\E(L\mid W)
\E\{\kappa(X,W)h(X,W)\mid W\}
\right].
\end{align*}
By the first part of the lemma,
\[
\E\{\kappa(X,W)h(X,W)\mid W\}
=
\E\{h(X^\star,W)\mid W\}.
\]
Hence,
\begin{align*}
\E\{L\kappa(X,W)h(X,W)\}
&=
\E\!\left[
\E(L\mid W)
\E\{h(X^\star,W)\mid W\}
\right]\\
&=
\E\{Lh(X^\star,W)\},
\end{align*}
where the final equality follows from the conditional independence encoded
by $\E(L\mid X,W)=\E(L\mid W)$.
\end{proof}

\subsection{Range conditions in finite discrete models}\label{app:range-discrete}

This section gives a finite-dimensional interpretation of the bridge and adjoint range conditions in Assumption~\ref{ass:lambda_exist}. We clarify that the adjoint condition is weaker than uniqueness of the bridge, and show examples where the adjoint range condition fails or is highly restrictive. We focus on the visit-dependent case as the examples translate with few modifications to the general case.

Suppose the shadow variable $A_t$ takes values $a_1,\ldots,a_r$ and the latent variable $C_{t+1}$ takes values $c_1,\ldots,c_m$. For fixed $h_t$, let $T_{t,h_t}$ denote the $r\times m$ matrix with entries
\[
(T_{t,h_t})_{ij}
=
\P(R_{t+1}=1,C_{t+1}=c_j\mid A_t=a_i,H_t=h_t).
\]
For a vector $b_{t,h_t}=(b_{t,h_t}(c_1),\ldots,b_{t,h_t}(c_m))^\top$, the bridge equation is the linear system
\[
T_{t,h_t}b_{t,h_t}=\boldsymbol 1_r,
\]
where $\boldsymbol 1_r$ is the $r$-vector of ones. Thus existence of a bridge at $h_t$ is the ordinary linear-algebra condition $\boldsymbol 1_r\in\mathrm{Range}(T_{t,h_t})$.

The adjoint in Assumption~\ref{ass:lambda_exist} is the transpose of this matrix after accounting for the inner-product weights. Let
\[
W_{A,t,h_t}=\operatorname{diag}\{\P(A_t=a_i\mid H_t=h_t):i=1,\ldots,r\},
\]
and
\[
W_{C,t,h_t}=\operatorname{diag}\{\P(R_{t+1}=1,C_{t+1}=c_j\mid H_t=h_t):j=1,\ldots,m\}.
\]
With the weighted $L^2$ inner products used in the main text, the adjoint matrix is equal to
\[
T_{t,h_t}^\star=W_{C,t,h_t}^{-1}T_{t,h_t}^\top W_{A,t,h_t}.
\]
Equivalently, up to nonsingular weighting, $T_{t,h_t}^\star$ is the transpose of $T_{t,h_t}$. Therefore
$
\mathrm{Range}(T_{t,h_t}^\star)=\mathrm{Null}(T_{t,h_t})^\perp$,
where orthogonality is with respect to the measure $\P(R_{t+1}=1,C_{t+1}\in\cdot\mid H_t=h_t)$. The target-specific range condition $\phi_{t,h_t}\in\mathrm{Range}(T_{t,h_t}^\star)$, where
\[
\phi_{t,h_t}=(\phi_t(h_t,c_1),\ldots,\phi_t(h_t,c_m))^\top,
\]
is therefore equivalent to
\[
\phi_{t,h_t}^\top W_{C,t,h_t}f_{t,h_t}=0
\quad\text{for every }f_{t,h_t}\text{ such that }T_{t,h_t}f_{t,h_t}=0.
\]
This expression makes the role of the adjoint assumption transparent. If $b_{t,h_t}$ and $b_{t,h_t}+f_{t,h_t}$ are two bridge solutions, then they give the same target contribution precisely when $\phi_{t,h_t}$ is orthogonal to every null direction $f_{t,h_t}$ of the bridge operator. Thus the bridge may be nonunique, while the scalar target remains unique. In particular, the pointwise bridge-based target contribution is
\[
\theta_{t,h_t}(b)
=
\phi_{t,h_t}^{\top}W_{C,t,h_t}b_{t,h_t}.
\]
If $\widetilde b_{t,h_t}=b_{t,h_t}+f_{t,h_t}$ is another solution, their difference is
\[
\theta_{t,h_t}(\widetilde b)-\theta_{t,h_t}(b)
=
\phi_{t,h_t}^{\top}W_{C,t,h_t}f_{t,h_t}.
\]
Because both bridges solve the same equation,
$T_{t,h_t}\widetilde b_{t,h_t}=T_{t,h_t}b_{t,h_t}=\boldsymbol 1_r$, so $T_{t,h_t}f_{t,h_t}=0$.

\paragraph{Example 1: full-rank bridge, range condition holds automatically.}
Suppose $A_t$ and $C_{t+1}$ are both binary and, for a fixed $h_t$,
\[
T_{t,h_t}=
\begin{pmatrix}
0.4 & 0.1\\
0.15 & 0.35
\end{pmatrix}.
\]
This matrix has rank two. Hence $T_{t,h_t}$ is invertible, the bridge equation $T_{t,h_t}b_{t,h_t}=\boldsymbol 1_2$ has a unique solution, and $T_{t,h_t}^\star$ also has full range. Consequently every target-relevant function $\phi_{t,h_t}$ belongs to $\mathrm{Range}(T_{t,h_t}^\star)$. In this case the shadow variable is sufficiently informative to identify the bridge itself.

\paragraph{Example 2: nonunique bridge, but the target is still identified.}
Suppose $A_t$ is binary and $C_{t+1}$ has three levels, and
\[
T_{t,h_t}=
\begin{pmatrix}
1/4 & 1/4 & 0\\
0 & 1/4 & 1/4
\end{pmatrix}.
\]
The matrix has rank two, so the bridge equation has solutions, but the solution is not unique because
\[
h_{t,h_t}=(1,-1,1)^\top
\quad\text{satisfies}\quad
T_{t,h_t}h_{t,h_t}=0.
\]
If the target-relevant function satisfies, in the corresponding bridge-side inner product,
\[
\phi_{t,h_t}^\top W_{C,t,h_t}h_{t,h_t}=0,
\]
then $\phi_{t,h_t}\in\mathrm{Range}(T_{t,h_t}^\star)$ and the target contribution is unique even though the bridge is not. For example, if
$\P(A_t=a_1\mid H_t=h_t)=\P(A_t=a_2\mid H_t=h_t)=1/2$, then
\[
W_{C,t,h_t}=\operatorname{diag}(1/8,1/4,1/8),
\]
and the vector
\[
\phi_{t,h_t}=(1,2,3)^\top
\]
satisfies $\phi_{t,h_t}^\top W_{C,t,h_t}h_{t,h_t}=0$. Thus replacing a bridge solution $b_{t,h_t}$ by $b_{t,h_t}+c h_{t,h_t}$ changes the bridge but leaves the target contribution unchanged.

\paragraph{Example 3: bridge exists, but the target-specific range condition fails.}
Use the same matrix and treatment probabilities as in Example 2, so the bridge is nonunique with null direction $h_{t,h_t}=(1,-1,1)^\top$ and
$W_{C,t,h_t}=\operatorname{diag}(1/8,1/4,1/8)$. Now take
\[
\phi_{t,h_t}=(1,0,0)^\top.
\]
Then $\phi_{t,h_t}^\top W_{C,t,h_t}h_{t,h_t}=1/8\neq0$, so $\phi_{t,h_t}\notin\mathrm{Range}(T_{t,h_t}^\star)$. If $b_{t,h_t}$ solves $T_{t,h_t}b_{t,h_t}=\boldsymbol 1_2$, then so does $b_{t,h_t}+c h_{t,h_t}$ for every scalar $c$, but the resulting target contribution changes with $c$. The observed data therefore do not determine the target contribution.

\paragraph{Example 4: failure of the bridge range condition.}
Suppose $A_t$ and $C_{t+1}$ are binary and the distribution of $C_{t+1}$ among complete cases is the same across levels of the shadow variable, but the visit probabilities differ:
\[
T_{t,h_t}=
\begin{pmatrix}
1/4 & 1/4\\
1/8 & 1/8
\end{pmatrix}.
\]
Then $\mathrm{Range}(T_{t,h_t})$ consists only of vectors proportional to $(1,1/2)^\top$, so $\boldsymbol 1_2\notin\mathrm{Range}(T_{t,h_t})$ and there is no bridge solution. Intuitively, if the complete-case distribution of the latent variable does not vary with the shadow variable, no function of that latent variable can offset different visit probabilities across levels of the shadow variable.

\paragraph{Example 5: binary shadow variable and arbitrary latent variable.}
The previous examples are finite dimensional, but the same geometry explains why the adjoint range condition can be restrictive when the shadow variable is low-dimensional and the latent variable has larger dimension. Suppose, for a fixed history value $h_t$, that $A_t\in\{0,1\}$. Then the bridge operator maps a function $b_{t,h_t}(c)$ to the two conditional means
\[
(T_{t,h_t}b_{t,h_t})(a)=
\E\{R_{t+1}b_{t,h_t}(C_{t+1})\mid A_t=a,H_t=h_t\},
\qquad a\in\{0,1\}.
\]
Thus the bridge equation imposes only two moment restrictions on the function $b_{t,h_t}$, and its null space is infinite dimensional. The adjoint maps a two-vector $\lambda_{t,h_t}=(\lambda_t(0,h_t),\lambda_t(1,h_t))^\top$ to
\[
(T_{t,h_t}^\star\lambda_{t,h_t})(c)
=
\E\{\lambda_t(A_t,h_t)\mid R_{t+1}=1,C_{t+1}=c,H_t=h_t\}.
\]
Equivalently, letting
\[
e_{t,h_t}(c):=\P(A_t=1\mid R_{t+1}=1,C_{t+1}=c,H_t=h_t),
\]
the range of the adjoint is
\[
\mathrm{Range}(T_{t,h_t}^\star)
=
\left\{
\lambda_t(0,h_t)\{1-e_{t,h_t}(c)\}+\lambda_t(1,h_t)e_{t,h_t}(c):
\lambda_t(0,h_t),\lambda_t(1,h_t)\in\mathbb R
\right\}.
\]
This is the space spanned by the functions $f(c)=1$ and $e_{t,h_t}(c)$. Denote the target-relevant function $\phi_t(h_t,c)$ as$ \phi_{t,h_t}(c)$. Consider as an example the intervention $d_t(a_t,h_t)=1$. The corresponding density ratio is
\[
\omega_t(A_t,h_t)
=
\frac{I(A_t=1)}{g_t(1\mid h_t)}.
\]
Since $Y_{t+1}=h(C_{t+1})$, the definition of $\phi_t$ gives
\begin{align*}
\phi_{t, h_t}(c)
&=
\Omega_{t-1}\,h(c)\,
\E\!\left\{
\frac{I(A_t=1)}{g_t(1\mid h_t)}
\mathrel{\Big|}
R_{t+1}=1,H_t=h_t,C_{t+1}=c
\right\}\\
&=
\frac{\Omega_{t-1}}{g_t(1\mid h_t)}
\,h(c)e_{t,h_t}(c),
\end{align*}
where $\Omega_{t-1}$ is evaluated at the fixed history $h_t$. Therefore, for this static intervention, the adjoint range condition is equivalent to requiring
\[
h(c)e_{t,h_t}(c)
\in
\operatorname{span}\{1,e_{t,h_t}(c)\}.
\]
This implies that the treatment probabilities must satisfy
\[e_{t, h_t}(c) = \frac{\alpha_{t,h_t}}{h(c) - \delta_{t,h_t}} \]
for constants $\alpha_{t,h_t}$ and $\delta_{t,h_t}$. 
 
\subsection{Proof of Theorem~\ref{thm:latent_bridge}}
\begin{proof}
By the shadow-variable assumption,
\[
\P(R_{t+1}=1\mid A_t,H_t,C_{t+1})
=
\P(R_{t+1}=1\mid H_t,C_{t+1}).
\]
Therefore, for every $(a_t,h_t)$ in the relevant support,
\begin{align*}
(T_t\beta_t^0)(a_t,h_t)
&=
\E\!\left[
R_{t+1}\beta_t^0(H_t,C_{t+1})
\mid A_t=a_t,H_t=h_t
\right]\\
&=
\E\!\left[
\P(R_{t+1}=1\mid A_t,H_t,C_{t+1})
\beta_t^0(H_t,C_{t+1})
\mid A_t=a_t,H_t=h_t
\right]\\
&=1.
\end{align*}
The same calculation with $Y_{t+1}$ included gives
\begin{align*}
\E\!\left[
R_{t+1}Y_{t+1}\beta_t^0(H_t,C_{t+1})
\mid A_t=a_t,H_t=h_t
\right]
&=
\E\!\left[Y_{t+1}\mid A_t=a_t,H_t=h_t\right]\\
&=m_{t+1,t}(a_t,h_t),
\end{align*}
which proves \eqref{eq:m_bridge_repr} and \eqref{eq:tildeY_def}.
\end{proof}

\subsection{Proof of Theorem~\ref{thm:theta_identified}}
\begin{proof}
By the no unmeasured confounding and support assumptions, the usual MTP identification argument \citep{diaz2023nonparametric} gives
\[
\theta_{t+1}(\bar d_t)=\E\{\Omega_tY_{t+1}\}.
\]
Because $Y_{t+1}=h(C_{t+1})$, iterated expectation and the shadow-variable assumption give
\begin{align*}
&\E\{R_{t+1}\Omega_tY_{t+1}\beta_t^0(H_t,C_{t+1})\}\\
&\quad=
\E\!\left[
\Omega_tY_{t+1}\beta_t^0(H_t,C_{t+1})
\P(R_{t+1}=1\mid A_t,H_t,C_{t+1})
\right]\\
&\quad=\E\{\Omega_tY_{t+1}\},
\end{align*}
so, for the true bridge,
\[
\theta_{t+1}(\bar d_t)
=
\E\{R_{t+1}\Omega_tY_{t+1}\beta_t^0(H_t,C_{t+1})\}.
\]
Now let $\beta_t$ be any square-integrable solution of \eqref{eq:fredholm_wor} and write $\Delta_t=\beta_t-\beta_t^0$. Since both solve the same bridge equation,
\[
T_t\Delta_t=0.
\]
By iterated expectation and the definition of $\phi_t$ and \Cref{ass:lambda_exist},
\begin{align*}
&\E\{R_{t+1}\Omega_tY_{t+1}\Delta_t(H_t,C_{t+1})\}\\
&\qquad=
\E\{R_{t+1}\phi_t(H_t,C_{t+1})\Delta_t(H_t,C_{t+1})\}\\
&\qquad=
\langle \Delta_t,T_t^\star\lambda_t\rangle_{C,t}
=
\langle T_t\Delta_t,\lambda_t\rangle_{A,t}
=0.
\end{align*}
Hence
\[
\theta_{t+1}(\bar d_t)
=
\E\{R_{t+1}\Omega_tY_{t+1}\beta_t(H_t,C_{t+1})\}.
\]
Finally,
\begin{align*}
\E\{R_{t+1}\Omega_tY_{t+1}\beta_t(H_t,C_{t+1})\}
&=
\langle \beta_t,\phi_t\rangle_{C,t}\\
&=
\langle T_t\beta_t,\lambda_t\rangle_{A,t}\\
&=
\E\{\lambda_t(A_t,H_t)\},
\end{align*}
because $T_t\beta_t=1$. If $\lambda_t$ and $\widetilde\lambda_t$ are two solutions of \eqref{eq:rep-iden}, then with $\Delta_\lambda=\lambda_t-\widetilde\lambda_t$,
\[
\E(\Delta_\lambda)
=
\langle T_t\beta_t,\Delta_\lambda\rangle_{A,t}
=
\langle \beta_t,T_t^\star\Delta_\lambda\rangle_{C,t}
=0,
\]
so the identified value does not depend on the chosen representer.
\end{proof}

\subsection{Proof of Corollary~\ref{cor:vd-sequential-regression}}
\begin{proof}
Fix any square-integrable solution $\beta_t$ of \eqref{eq:fredholm_wor} and let $\eta_t$ be the corresponding pseudo-outcome. By the usual sequential-regression representation for longitudinal modified treatment policies \citep{diaz2023nonparametric}, applied with $\eta_t$ as the terminal outcome held fixed,
\[
\E\!\left[
m_{t+1,1}\{d_1(A_1,H_1),H_1\}
\right]
=
\E\{\Omega_t\eta_t(O)\}.
\]
By the definition of $\eta_t$ and Theorem~\ref{thm:theta_identified},
\[
\E\{\Omega_t\eta_t(O)\}
=
\E\{R_{t+1}\Omega_tY_{t+1}\beta_t(H_t,C_{t+1})\}
=
\theta_{t+1}(\bar d_t).
\]
The invariance to the particular bridge solution follows from the same null-space/adjoint calculation in the proof of Theorem~\ref{thm:theta_identified}.
\end{proof}

\subsection{Proof of Theorem~\ref{thm:vd-if}}\label{proof:thm:vd-if}
\begin{assumption}[Regularity of the selected visit-dependent bridge path]
\label{ass:vd-beta-path-regularity}
Along every regular one-dimensional submodel $\{P_\varepsilon\}$ through $P$,
select the minimum-norm square-integrable solution $\beta_{t,\varepsilon}$ of
$T_{t,\varepsilon}\beta_{t,\varepsilon}=1$. Assume there is
$\dot\beta_t\in L^2(P)$ such that
\[
\|\beta_{t,\varepsilon}-\beta_{t,0}-\varepsilon\dot\beta_t\|_{L^2(P)}
=o(|\varepsilon|).
\]
\end{assumption}
The assumption selects a canonical bridge along each regular parametric
submodel and requires that this selected solution be differentiable along the
submodel.

\begin{proof}
\medskip\noindent\textbf{Part I: Gradient calculation.}\par
Fix a regular parametric submodel $\{P_\varepsilon:\varepsilon\in(-\delta,\delta)\}$
through $P$, and let
\[
S(O)=\left.\frac{d}{d\varepsilon}\log p_\varepsilon(O)\right|_{\varepsilon=0}
\]
denote its score. A dot denotes differentiation at $\varepsilon=0$.

First hold the terminal pseudo-outcome
$\eta_{t+1,t+1}=R_{t+1}Y_{t+1}\beta_t$ fixed. Differentiating the ordinary
longitudinal modified-policy recursion then gives
\[
\left.\frac{d}{d\varepsilon}\theta_{t+1}(\bar d_t;P_\varepsilon)
\right|_{\varepsilon=0,\,\beta_t\text{ fixed}}
=
\E\{D_t^{\mathrm{seq}}(O)S(O)\}.
\]
This is the standard longitudinal modified-policy calculation: at each time
$s$, the derivative of the conditional regression contributes the residual
$\eta_{t+1,s+1}-m_{t+1,s}$ multiplied by the likelihood ratio $\Omega_s$, and
the derivative of the baseline covariate distribution contributes
$q_{t+1,1}-\theta_{t+1}$. See \cite{diaz2023nonparametric}.

It remains to differentiate the dependence of the terminal pseudo-outcome on
the bridge. Its bridge-dependent derivative is
\[
\dot\eta_{t+1,t+1}(O)
=
R_{t+1}Y_{t+1}\dot\beta_t(H_t,C_{t+1}).
\]
Using the re-weighted formulation of the identification result gives the contribution
\begin{align*}
\E\{\Omega_tR_{t+1}Y_{t+1}\dot\beta_t(H_t,C_{t+1})\}.
\end{align*}
By the definition of $\phi_t$ and iterated expectation conditional on
$(R_{t+1}=1,H_t,C_{t+1})$,
\begin{align*}
&\E\{\Omega_tR_{t+1}Y_{t+1}h(H_t,C_{t+1})\}\\
&\quad=
\E\!\left[
R_{t+1}h(H_t,C_{t+1})
\E\{\Omega_tY_{t+1}\mid R_{t+1}=1,H_t,C_{t+1}\}
\right]\\
&\quad=
\E\{R_{t+1}\phi_t(H_t,C_{t+1})h(H_t,C_{t+1})\}
\end{align*}
for every square-integrable $h$. Since
$T_t^\star\lambda_t=\phi_t$, the adjoint identity gives
\begin{align*}
&\E\{R_{t+1}\phi_t(H_t,C_{t+1})h(H_t,C_{t+1})\}\\
&\quad=
\E\{\lambda_t(A_t,H_t)(T_th)(A_t,H_t)\}\\
&\quad=
\E\{R_{t+1}\lambda_t(A_t,H_t)h(H_t,C_{t+1})\}.
\end{align*}
Taking $h=\dot\beta_t$ therefore converts the bridge derivative into
\[
\E\{R_{t+1}\lambda_t(A_t,H_t)\dot\beta_t(H_t,C_{t+1})\}.
\]
The bridge equation is equivalent to the unconditional moment restriction
\[
\E\!\left[
q(A_t,H_t)\{R_{t+1}\beta_t(H_t,C_{t+1})-1\}
\right]=0
\]
for every square-integrable $q(A_t,H_t)$. Differentiating this restriction along the submodel and taking $q=\lambda_t$ gives
\begin{align*}
0
&=
\E\!\left[
\lambda_t(A_t,H_t)R_{t+1}\dot\beta_t(H_t,C_{t+1})
\right]\\
&\quad+
\E\!\left[
\lambda_t(A_t,H_t)\{R_{t+1}\beta_t(H_t,C_{t+1})-1\}S(O)
\right].
\end{align*}
Consequently,
\[
\E\{R_{t+1}\lambda_t\dot\beta_t\}
=
-\E\!\left[
\lambda_t\{R_{t+1}\beta_t-1\}S(O)
\right],
\]
which is exactly the bridge contribution $\E\{D_t^{\mathrm{br}}(O)S(O)\}$.
Adding the sequential-regression part
proves
\[
\left.\frac{d}{d\varepsilon}\theta_{t+1}(\bar d_t;P_\varepsilon)
\right|_{\varepsilon=0}
=
\E\{D_t(O)S(O)\}.
\]
Thus $D_t$ is a gradient.

\end{proof}

\subsection{Proof of Proposition~\ref{prop:vd-second-order}}

\begin{proof}
For a distribution $X$ and a square-integrable function $\beta$, let
\[
\theta_{t+1}(X,\beta)
\]
denote the longitudinal sequential-regression functional under $X$ with
terminal pseudo-outcome
\[
\eta_{t+1,t+1}^{\beta}
=
R_{t+1}Y_{t+1}\beta(H_t,C_{t+1}).
\]
Let $m_{t+1,s,X}^{\beta}$ denote the corresponding sequential regressions.
In particular,
\[
\theta_{t+1}(\bar d_t;X)
=
\theta_{t+1}(X,\beta_{t,X}),
\qquad
m_{t+1,s,X}
=
m_{t+1,s,X}^{\beta_{t,X}}.
\]
For notational simplicity, write
\[
\beta_P:=\beta_{t,P},
\qquad
\beta_F:=\beta_{t,F},
\qquad
\lambda_P:=\lambda_{t,P},
\qquad
\lambda_F:=\lambda_{t,F}.
\]

We decompose
\begin{align}
&\theta_{t+1}(P,\beta_P)
-
\theta_{t+1}(F,\beta_F)
\nonumber\\
&\qquad=
\left\{
\theta_{t+1}(P,\beta_P)
-
\theta_{t+1}(P,\beta_F)
\right\}
+
\left\{
\theta_{t+1}(P,\beta_F)
-
\theta_{t+1}(F,\beta_F)
\right\}.
\label{eq:vd-rem-decomposition1}
\end{align}
The second term in \eqref{eq:vd-rem-decomposition1} is an ordinary
longitudinal modified-treatment-policy comparison in which the terminal
pseudo-outcome
\[
R_{t+1}Y_{t+1}\beta_F(H_t,C_{t+1})
\]
is held fixed. Its von Mises decomposition is therefore the standard
sequential-regression calculation of \citet{diaz2023nonparametric}. In
particular,
\begin{align}
&\theta_{t+1}(P,\beta_F)
-
\theta_{t+1}(F,\beta_F)
\nonumber\\
&\qquad=
\E_P\{D_t^{\mathrm{seq}}(O;F)\}
-
R_{t,2}^{\mathrm{seq}}(P,F;\beta_F),
\label{eq:vd-rem-standard-seq}
\end{align}
where
\begin{align}
R_{t,2}^{\mathrm{seq}}(P,F;\beta_F)
&=
\sum_{s=1}^t
\E_P\!\left[
\Omega_{s-1,F}
\{\omega_{s,P}-\omega_{s,F}\}
\{m_{t+1,s,F}^{\beta_F}
-
m_{t+1,s,P}^{\beta_F}\}
\right].
\label{eq:vd-rem-standard-seq-remainder}
\end{align}
Thus, the only calculation that differs from the standard longitudinal
argument concerns the first term in \eqref{eq:vd-rem-decomposition1}, which
captures the effect of changing the bridge function while holding the law
fixed at $P$.

By linearity of the longitudinal g-computation functional in its terminal
pseudo-outcome,
\begin{align}
&\theta_{t+1}(P,\beta_P)
-
\theta_{t+1}(P,\beta_F)
\nonumber\\
&\qquad=
\E_P\!\left[
R_{t+1}\Omega_{t,P}Y_{t+1}
\{\beta_P(H_t,C_{t+1})-\beta_F(H_t,C_{t+1})\}
\right].
\notag
\end{align}
By the definition of $\phi_{t,P}$,
\begin{align}
&\E_P\!\left[
R_{t+1}\Omega_{t,P}Y_{t+1}
(\beta_P-\beta_F)
\right]
\nonumber\\
&\qquad=
\E_P\!\left[
R_{t+1}\phi_{t,P}(H_t,C_{t+1})
(\beta_P-\beta_F)
\right].
\notag
\end{align}
Since
\[
T_{t,P}^{\star}\lambda_P=\phi_{t,P},
\]
the adjoint identity gives
\begin{align}
&\theta_{t+1}(P,\beta_P)
-
\theta_{t+1}(P,\beta_F)
\nonumber\\
&\qquad=
\left\langle
\beta_P-\beta_F,
T_{t,P}^{\star}\lambda_P
\right\rangle_{C,t,P}
\nonumber\\
&\qquad=
\left\langle
T_{t,P}(\beta_P-\beta_F),
\lambda_P
\right\rangle_{A,t,P}
\nonumber\\
&\qquad=
\E_P\!\left[
R_{t+1}\lambda_P(A_t,H_t)
\{\beta_P(H_t,C_{t+1})-\beta_F(H_t,C_{t+1})\}
\right].
\label{eq:vd-rem-beta-difference}
\end{align}
We next evaluate the bridge component of the gradient at $F$:
\[
D_t^{\mathrm{br}}(O;F)
=
-\lambda_F\{R_{t+1}\beta_F-1\}.
\]
Under $P$, the true bridge satisfies $T_{t,P}\beta_P=1$. Therefore,
\begin{align}
\E_P\{D_t^{\mathrm{br}}(O;F)\}
&=
\E_P\!\left[
\lambda_F\{1-T_{t,P}\beta_F\}
\right]\nonumber\\
&=
\E_P\!\left[
\lambda_F T_{t,P}(\beta_P-\beta_F)
\right]\nonumber\\
&=
\E_P\!\left[
R_{t+1}\lambda_F(\beta_P-\beta_F)
\right].
\label{eq:vd-rem-bridge-expectation}
\end{align}

Combining
\eqref{eq:vd-rem-decomposition1},
\eqref{eq:vd-rem-standard-seq},
\eqref{eq:vd-rem-beta-difference}, and
\eqref{eq:vd-rem-bridge-expectation} yields
\begin{align}
&\theta_{t+1}(P,\beta_P)
-
\theta_{t+1}(F,\beta_F)
\nonumber\\
&\qquad=
\E_P\{D_t^{\mathrm{seq}}(O;F)\}
+
\E_P\{D_t^{\mathrm{br}}(O;F)\}
-
R_{t,2}^{\mathrm{seq}}(P,F;\beta_F)
\nonumber\\
&\qquad\quad+
\E_P\!\left[
R_{t+1}
(\lambda_P-\lambda_F)
(\beta_P-\beta_F)
\right]
\nonumber\\
&\qquad=
\E_P\{D_t(O;F)\}
-
R_{t,2}^{\mathrm{seq}}(P,F;\beta_F)
\nonumber\\
&\qquad\quad+
\E_P\!\left[
R_{t+1}
(\lambda_P-\lambda_F)
(\beta_P-\beta_F)
\right].
\notag
\end{align}
Equivalently,
\begin{align}
&\theta_{t+1}(\bar d_t;F)
-
\theta_{t+1}(\bar d_t;P)
\nonumber\\
&\qquad=
-\E_P\{D_t(O;F)\}
+
R_{t,2}^{\mathrm{seq}}(P,F;\beta_F)
\nonumber\\
&\qquad\quad-
\E_P\!\left[
R_{t+1}
(\lambda_P-\lambda_F)
(\beta_P-\beta_F)
\right].
\notag
\end{align}
Therefore the bridge contribution to the second-order remainder satisfies,
by Cauchy--Schwarz,
\begin{align}
&\left|
\E_P\!\left[
R_{t+1}
(\lambda_P-\lambda_F)
(\beta_P-\beta_F)
\right]
\right|
\nonumber\\
&\qquad\le
\|\lambda_{t,P}-\lambda_{t,F}\|_{2,P}
\|\beta_{t,P}-\beta_{t,F}\|_{2,P}.
\notag
\end{align}

The remaining term is the usual longitudinal sequential-regression
remainder. By \eqref{eq:vd-rem-standard-seq-remainder} and the boundedness of
the cumulative policy weights,
\begin{align}
\left|
R_{t,2}^{\mathrm{seq}}(P,F;\beta_F)
\right|
&\lesssim
\sum_{s=1}^t
\|\omega_{s,P}-\omega_{s,F}\|_{2,P}
\|
m_{t+1,s,P}^{\beta_F}
-
m_{t+1,s,F}^{\beta_F}
\|_{2,P},
\notag
\end{align}
which proves the result.
\end{proof}

%% file: full_proofs.tex
\section{Proofs for full-history adjustment}\label{app:fh-proofs}

Throughout this appendix, equalities are understood on the relevant support sets. We first give the argument with the complete history and then show how \Cref{ass:fh-markov} yields the finite-\(k\) representation used in the main text.

\subsection{Proof of Lemma~\ref{lem:fh-bridge-equation}}

\begin{proof}
Fix \(s\leq t\). Since conditioning on \(J_s=1,F_s\) fixes the remote history \(\mathcal R_s^{(k)}\) and implies \(J_s^{(k)}=1\), M3 gives, on the relevant support,
\begin{align*}
&\P(R_{s+1}=1\mid J_s=1,A_s,F_s,C_{s+1})\\
&\qquad=
\P(R_{s+1}=1\mid J_s^{(k)}=1,A_s,F_s^{(k)},C_{s+1}).
\end{align*}
The left-hand side does not depend on \(A_s\) by \Cref{ass:fh-shadow-markov}. Hence the reduced conditional probability on the right does not depend on \(A_s\):\begin{equation}
\begin{split}
&\P(R_{s+1}=1\mid J_s^{(k)}=1,A_s,F_s^{(k)},C_{s+1})\\
&\qquad=
\P(R_{s+1}=1\mid J_s^{(k)}=1,F_s^{(k)},C_{s+1}).
\end{split}
\label{eq:fh-reduced-shadow}
\end{equation}
Finally,
\begin{align*}
(T_s\beta_s^0)(A_s,F_s^{(k)})
&=
\E\{R_{s+1}\beta_s^0(F_s^{(k)},C_{s+1})\mid J_s^{(k)}=1,A_s,F_s^{(k)}\}\\
&=
\E\!\left[
\beta_s^0(F_s^{(k)},C_{s+1})
\P(R_{s+1}=1\mid J_s^{(k)}=1,A_s,F_s^{(k)},C_{s+1})
\mid J_s^{(k)}=1,A_s,F_s^{(k)}
\right]\\
&=1
\end{align*}
by \eqref{eq:fh-reduced-shadow} and the definition of \(\beta_s^0\).
\end{proof}

\subsection{Auxiliary variables}
We will now define some auxiliary variables that will be used in the proof of the identification theorem. The causal identification argument uses the same  construction as the standard proof for longitudinal modified treatment policies \citep{diaz2023nonparametric}. Fix \(s\leq t\), and a history value
\[
f_s^d=(c_1,a_1^d,\ldots,c_{s-1},a_{s-1}^d,c_s),
\]
where $a_j^d=d_j(a_j, r_j, f_j^d,\epsilon_j)$, and a supported treatment value \(a_s\). Define \(Z_s(a_s,f_s^d)\) as follows. For \(j\leq s\), set \(C_j^{[s]}=c_j\), set \(A_j^{[s]}=a_j^d\) for \(j<s\), and set \(A_s^{[s]}=a_s\). For \(j=s+1,\ldots,t\), recursively define
\begin{align}
C_j^{[s]}
&=f_{C,j}(\bar A_{j-1}^{[s]},\bar C_{j-1}^{[s]},U_{C,j}),
\notag\\
R_j^{[s]}
&=f_{R,j}(\bar A_{j-1}^{[s]},\bar C_j^{[s]},U_{R,j}),
\notag\\
A_j^{\natural,[s]}
&=f_{A,j}(\bar A_{j-1}^{[s]},R_j^{[s]},\bar C_j^{[s]},U_{A,j}),
\notag\\
A_j^{[s]}
&=d_j\{A_j^{\natural,[s]},R_j^{[s]},F_j^{[s],(k)},\epsilon_j\},
\notag
\end{align}
where \(F_j^{[s]}=(C_1^{[s]},A_1^{[s]},\ldots,C_{j-1}^{[s]},A_{j-1}^{[s]},C_j^{[s]})\). Finally, define
\begin{equation}
Z_s(a_s,f_s^d)
:=
h\!\left[f_{C,t+1}\{\bar A_t^{[s]},\bar C_t^{[s]},U_{C,t+1}\}\right].
\label{eq:fh-Z-definition}
\end{equation}
We next state a result that will be used to justify conditioning on the observed visit history.

\begin{lemma}[Exchangeability of visit indicators]\label{lem:fh-past-visit-factorization}
Under \Cref{ass:fh-nuc-markov,ass:fh-measurement-exchange}, for every \(s\leq t\),
\begin{equation}
(U_{R,s},U_{A,s})
\indep
\left(
\underbar U_{C,s+1},
\underbar U_{R,s+1},
\underbar U_{A,s+1}
\right)
\mid J_{s-1}=1,F_s,
\label{eq:fh-past-visit-factorization}
\end{equation}
and
\begin{equation}
C_{s+1}\indep \bar R_s\mid F_s,A_s.
\label{eq:fh-visit-C-independence}
\end{equation}
Consequently, for every finite \(k\), \(C_{s+1}\indep J_s^{(k)}\mid F_s,A_s\).
\end{lemma}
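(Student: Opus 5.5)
The argument is structural, working directly with the structural causal model of \S\ref{sec:model}. By \Cref{ass:fh-measurement-exchange}, none of the structural functions $f_{C,j},f_{R,j},f_{A,j}$ takes $\bar R_{j-1}$ as an argument. So every variable in $F_s$ is a deterministic function of $C_1$ and the exogenous errors $(U_{C,j},U_{A,j},U_{R,j})_{j\le s}$ through the $\bar R$-free recursions. The only exceptions are the no-visit treatments, where $A_j=\alpha_j(H_j)$ by \Cref{ass:fh-known-novisit}. I will absorb this branch by treating $R_j$ as an argument of $f_{A,j}$ alone, which is allowed because \Cref{ass:fh-measurement-exchange} excludes only past visit indicators. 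The key observation is that each $R_j$ ($j\le s$) is a function of $(\bar A_{j-1},\bar C_j,U_{R,j})$, hence of $F_s$ and $U_{R,j}$. Therefore $J_{s-1}$ is a function of $(F_s,U_{R,2},\ldots,U_{R,s-1})$.

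For \eqref{eq:fh-past-visit-factorization}, I would apply \Cref{ass:fh-nuc-markov} at times $1,\ldots,s$ jointly, together with the sequential structure, to obtain the stronger statement
\[
(U_{R,s},U_{A,s},\bar U_{R,s-1})\indep(\underbar U_{C,s+1},\underbar U_{R,s+1},\underbar U_{A,s+1})\mid F_s .
\]
The step from this to the displayed claim is routine. Since $J_{s-1}$ is measurable with respect to $(F_s,\bar U_{R,s-1})$, conditioning on $\{J_{s-1}=1\}$ is conditioning on an event in the left-hand sigma-field together with $F_s$. By the weak-union and decomposition rules, the claimed independence given $(J_{s-1}=1,F_s)$ follows.

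The main obstacle is upgrading \Cref{ass:fh-nuc-markov}, which is stated separately for each $s$, to the joint statement that includes the earlier $U_{R,j}$. My first attempt will be an induction on $s$. At step $j<s$, \Cref{ass:fh-nuc-markov} gives $(U_{R,j},U_{A,j})\indep$ future errors given $F_j$. Moving from $F_j$ to $F_s$ adds $(A_j,C_{j+1},\ldots,C_s)$. These are functions of $F_j$, $U_{A,j}$ and the errors $U_{C,j+1},\ldots$, with no $R$ inputs except through $A_j$. The contraction rule then lets me carry the independence forward, using the $\bar R$-free structure so that conditioning on later $C$'s does not open paths to $U_{R,j}$. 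If the per-time assumption proves insufficient for this, I would state explicitly that \Cref{ass:fh-nuc-markov} is read in the usual FFRCISTG sense, namely joint independence of current errors from future errors given the history. Under that reading the induction is standard.

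For \eqref{eq:fh-visit-C-independence}, I would write $C_{s+1}=f_{C,s+1}(\bar A_s,\bar C_s,U_{C,s+1})$, which contains no $\bar R_s$ by \Cref{ass:fh-measurement-exchange}. Given $(F_s,A_s)$, $C_{s+1}$ is then a function of $U_{C,s+1}$ only. Meanwhile $\bar R_s$ is a function of $(F_s,\bar U_{R,s})$. It therefore suffices that $U_{C,s+1}\indep\bar U_{R,s}\mid F_s,A_s$. This follows from the joint independence above once I also condition on $A_s$, since $A_s$ is a function of $(F_s,R_s,U_{A,s})$, which lies in the left-hand sigma-field together with $F_s$. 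The consequence for $J_s^{(k)}$ is immediate, because $J_s^{(k)}$ is a function of $\bar R_s$.
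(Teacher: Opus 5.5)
Your proof is correct and is essentially the paper's argument. Both run an induction in time that applies \Cref{ass:fh-nuc-markov} one step at a time together with the $\bar R$-free structural equations, then finish with weak union and decomposition. The only difference is the invariant. Yours is stated for the visit errors, $(\bar U_{R,s},U_{A,s})\indep W_{s+1}\mid F_s$ with $W_{s+1}=(\underline{U}_{C,s+1},\underline{U}_{R,s+1},\underline{U}_{A,s+1})$, which makes $J_{s-1}$, $\bar R_s$ and $A_s$ measurable with respect to the left side together with $F_s$. The paper instead propagates $\bar R_{s-1}\indep\{(U_{R,s},U_{A,s}),W_{s+1}\}\mid F_s$ and combines it with \Cref{ass:fh-nuc-markov} into a three-way conditional factorization.

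The step you flagged as the obstacle does close under the per-time assumption, so the FFRCISTG fallback is unnecessary. Starting from your invariant at time $s$, add $A_s$ (a function of $F_s$ and the left side) and then $C_{s+1}$ (a function of $(F_s,A_s)$ and $U_{C,s+1}$) to the conditioning set by weak union. Drop $U_{A,s}$ and $U_{C,s+1}$ by decomposition. Finally, apply contraction with \Cref{ass:fh-nuc-markov} at time $s+1$ to move $(U_{R,s+1},U_{A,s+1})$ to the left side.
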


\begin{proof}
Write \(U_s=(U_{R,s},U_{A,s})\) and \(W_{s+1}=(\underbar U_{C,s+1},\underbar U_{R,s+1},\underbar U_{A,s+1})\). We first show by induction that
\begin{equation}
\bar R_{s-1}\indep(U_s,W_{s+1})\mid F_s.
\label{eq:fh-past-vector-independent}
\end{equation}
The result is immediate for \(s=1\). Suppose it holds at time \(s\). Together with \Cref{ass:fh-nuc-markov}, it gives the conditional factorization
\[
p(\bar r_{s-1},u_s,w_{s+1}\mid f_s)
=p(\bar r_{s-1}\mid f_s)p(u_s\mid f_s)p(w_{s+1}\mid f_s).
\]
Under \Cref{ass:fh-measurement-exchange}, conditional on \(F_s\), the pair \((R_s,A_s)\) is a measurable function of \(U_s\). Hence \((\bar R_s,A_s)\indep W_{s+1}\mid F_s\), and therefore \(\bar R_s\indep W_{s+1}\mid F_s,A_s\). Since \(C_{s+1}=f_{C,s+1}(\bar A_s,\bar C_s,U_{C,s+1})\), conditioning additionally on \(C_{s+1}\) leaves \(\bar R_s\) independent of the remaining coordinates of \(W_{s+1}\). Because \(F_{s+1}=(F_s,A_s,C_{s+1})\), this is exactly \eqref{eq:fh-past-vector-independent} at time \(s+1\).

The same factorization shows that conditioning on the event \(J_{s-1}=1\), which is a function of \(\bar R_{s-1}\), does not change the conditional independence of \(U_s\) and \(W_{s+1}\), proving \eqref{eq:fh-past-visit-factorization}. Finally, \((\bar R_s,A_s)\indep W_{s+1}\mid F_s\) implies \(\bar R_s\indep U_{C,s+1}\mid F_s,A_s\). Since \(C_{s+1}\) is a measurable function of \((F_s,A_s,U_{C,s+1})\), \(\bar R_s\indep C_{s+1}\mid F_s,A_s\). This proves \eqref{eq:fh-visit-C-independence}; the stated finite-\(k\) consequence follows because \(J_s^{(k)}\) is a function of \(\bar R_s\).
\end{proof}

\begin{lemma}[Exchangeability of the auxiliary variable]\label{lem:fh-Z-joint-exchange}
Under \Cref{ass:fh-nuc-markov,ass:fh-measurement-exchange}, for every \(s\leq t\), fixed supported \(a_s\), and fixed full-data history \(f_s\),
\begin{equation}
Z_s(a_s,f_s)
\indep
(R_s,A_s)
\mid
J_{s-1}=1,F_s=f_s.
\label{eq:fh-Z-joint-independence}
\end{equation}
Consequently, for every supported \((r_s,a_s')\),
\begin{align}
&\E\{Z_s(a_s,f_s)\mid J_{s-1}=1,R_s=r_s,A_s=a_s',F_s=f_s\}
\nonumber\\
&\qquad=
\E\{Z_s(a_s,f_s)\mid J_s=1,A_s=a_s,F_s=f_s\}.
\label{eq:fh-Z-exchange-expectation}
\end{align}
\end{lemma}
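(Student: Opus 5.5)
The plan is to observe that, once the history is fixed at $f_s$, $Z_s(a_s,f_s)$ is a measurable function only of the exogenous variables dated after time $s$ together with the randomizers. Specifically, by \eqref{eq:fh-Z-definition} and the recursive construction, with $f_s$ and $a_s$ held fixed, $Z_s(a_s,f_s)$ depends only on $W_{s+1}=(\underbar U_{C,s+1},\underbar U_{R,s+1},\underbar U_{A,s+1})$ and on $(\epsilon_{s+1},\ldots,\epsilon_t)$. \Cref{ass:fh-measurement-exchange} is needed here: the structural functions $f_{C,j},f_{R,j},f_{A,j}$ do not take the earlier visit indicators as arguments, so the recursion never calls for the natural values $\bar R_s$. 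The randomizers are independent of every exogenous variable and of each other, so they can be adjoined to $W_{s+1}$ without affecting any of the conditional independences below.

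Next I would use \eqref{eq:fh-past-visit-factorization} from Lemma~\ref{lem:fh-past-visit-factorization}. It states that $U_s=(U_{R,s},U_{A,s})\indep W_{s+1}\mid J_{s-1}=1,F_s$. Conditional on $F_s$, \Cref{ass:fh-measurement-exchange} makes $(R_s,A_s)$ a measurable function of $U_s$, because $R_s=f_{R,s}(\bar A_{s-1},\bar C_s,U_{R,s})$ and $A_s=f_{A,s}(\bar A_{s-1},R_s,\bar C_s,U_{A,s})$. Therefore $(R_s,A_s)\indep W_{s+1}\mid J_{s-1}=1,F_s$. Since $Z_s(a_s,f_s)$ is a fixed function of $W_{s+1}$ once we evaluate at $F_s=f_s$, this gives \eqref{eq:fh-Z-joint-independence}. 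The one technical point is that $f_s$ is substituted as a constant rather than the random $F_s$. I would handle it in the usual way: for fixed $(a_s,f_s)$, $Z_s(a_s,f_s)=\psi(W_{s+1},\epsilon)$ with $\psi$ deterministic, so conditioning on $F_s=f_s$ affects only the law of $W_{s+1}$.

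For \eqref{eq:fh-Z-exchange-expectation}, I would apply the independence twice. The first application shows that the left-hand side equals $\E\{Z_s(a_s,f_s)\mid J_{s-1}=1,F_s=f_s\}$, whatever the value of $(r_s,a_s')$. The second application conditions on $(R_s,A_s)=(1,a_s)$, which is supported by hypothesis; this yields the same quantity. Because $J_s=J_{s-1}R_s$, the event $\{J_{s-1}=1,R_s=1\}$ is exactly $\{J_s=1\}$, so the second conditional expectation is the right-hand side.

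I expect the main obstacle to be the first step, namely confirming that the counterfactual recursion does not depend on $\bar R_{s-1}$ or on $R_s$. The policy $d_j$ does use the counterfactual $R_j^{[s]}$ for $j>s$, but those indicators are generated from $U_{R,j}$ with $j>s$, so they sit inside $W_{s+1}$. The no-visit branch $t_j(\alpha_j(H_j^{(k)}),\epsilon_j)$ requires more care, because $\alpha_j$ is evaluated on observed histories, which themselves involve visit indicators. I would argue that the relevant indicators are again the counterfactual ones with $j>s$, together with the retained-state convention that $\alpha_j$ is determined by $F_j^{[s],(k)}$, and that the recursion therefore never references the factual $R_s$.
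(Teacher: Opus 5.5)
Your proposal is correct and follows the paper's proof essentially step for step. You show that $Z_s(a_s,f_s)$ is a measurable function of $(\underbar U_{C,s+1},\underbar U_{R,s+1},\underbar U_{A,s+1})$ and the future randomizers, that $(R_s,A_s)$ is a function of $(U_{R,s},U_{A,s})$ given $F_s$, that Lemma~\ref{lem:fh-past-visit-factorization} then yields \eqref{eq:fh-Z-joint-independence}, and that applying this independence twice, together with $J_s=J_{s-1}R_s$, gives \eqref{eq:fh-Z-exchange-expectation}. Your closing concern about the no-visit branch is already handled by \Cref{ass:fh-measurement-exchange}: the recursion defining $Z_s$ generates the natural treatment through $f_{A,j}$, which cannot take $\bar R_{j-1}$, and in particular the factual $R_s$, as an argument.
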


\begin{proof}
For fixed \((a_s,f_s)\), \(Z_s(a_s,f_s)\) is a measurable function of \((\underbar U_{C,s+1},\underbar U_{R,s+1},\underbar U_{A,s+1})\) and the future intervention randomizers \((\epsilon_{s+1},\ldots,\epsilon_t)\). Conditional on \(F_s\), the natural pair \((R_s,A_s)\) is a measurable function of \((U_{R,s},U_{A,s})\). Since the intervention randomizers are independent of all exogenous variables, Equation \eqref{eq:fh-Z-joint-independence} therefore follows from Lemma~\ref{lem:fh-past-visit-factorization}. The first conditional expectation in \eqref{eq:fh-Z-exchange-expectation} equals \(\E\{Z_s(a_s,f_s)\mid J_{s-1}=1,F_s=f_s\}\), and the same is true after conditioning on \((R_s,A_s)=(1,a_s)\). Since \(J_s=J_{s-1}R_s\), the latter event is \(\{J_s=1,A_s=a_s,F_s=f_s\}\). The support condition is supplied by \Cref{ass:fh-support-markov}.
\end{proof}

\begin{lemma}[Recursive continuation of the auxiliary variable]\label{lem:fh-Z-recursion}
Fix \(s=2,\ldots,t\). Suppose the recursion defining \(Z_{s-1}(a_{s-1}^d,f_{s-1}^d)\) generates \(C_s=c_s\), \(R_s=r_s\), and natural treatment \(A_s=a_s\). Let \(f_s^d=(f_{s-1}^d,a_{s-1}^d,c_s)\) and \(a_s^d=d_s\{a_s,r_s,(f_s^d)^{(k)},\epsilon_s\}\). Then
\begin{equation}
Z_{s-1}(a_{s-1}^d,f_{s-1}^d)
=
Z_s(a_s^d,f_s^d).
\label{eq:fh-Z-recursive-consistency}
\end{equation}
\end{lemma}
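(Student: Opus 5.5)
The plan is to unfold both sides of the identity in terms of the structural equations and the same exogenous variables, and to check that they coincide coordinate by coordinate. By definition, \(Z_{s-1}(a_{s-1}^d,f_{s-1}^d)\) is built by fixing \(C_j^{[s-1]}=c_j\) for \(j\leq s-1\), \(A_j^{[s-1]}=a_j^d\) for \(j\leq s-1\), and then generating the variables at times \(j=s,\ldots,t\) recursively from \(f_{C,j},f_{R,j},f_{A,j}\), the intervention \(d_j\), the exogenous inputs \((U_{C,j},U_{R,j},U_{A,j})\) and the randomizers \(\epsilon_j\). Likewise, \(Z_s(a_s^d,f_s^d)\) fixes the history through \(c_s\) and sets \(A_s^{[s]}=a_s^d\), and then runs the same recursion from time \(s+1\) onward. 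Both quantities end with \(h\) applied to \(f_{C,t+1}\) evaluated on the generated history. It therefore suffices to show that the two generated trajectories agree at every time \(j\geq s\).

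I will proceed by induction on \(j\). At \(j=s\), the hypothesis says that the \(Z_{s-1}\) recursion produces \(C_s^{[s-1]}=c_s\), \(R_s^{[s-1]}=r_s\) and natural treatment \(A_s^{\natural,[s-1]}=a_s\). Its intervened value is then \(A_s^{[s-1]}=d_s\{a_s,r_s,F_s^{[s-1],(k)},\epsilon_s\}\). Since \(F_s^{[s-1]}=(f_{s-1}^d,a_{s-1}^d,c_s)=f_s^d\), this equals \(a_s^d\), which is exactly \(A_s^{[s]}\). Hence the histories \(\bar C_s,\bar A_s\) coincide in the two constructions.

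For \(j=s+1,\ldots,t\), both constructions apply the identical maps \(f_{C,j},f_{R,j},f_{A,j},d_j\) to identical inputs: the common past (by the inductive hypothesis), the same \(U\)'s, and the same \(\epsilon_j\). They therefore produce identical \(C_j,R_j,A_j^\natural,A_j\) and \(F_j^{(k)}\). One point needs checking here. Under \Cref{ass:fh-measurement-exchange} the structural functions do not take \(\bar R_{j-1}\) as an argument, so the fact that the \(Z_s\) construction never records \(R_s\) causes no discrepancy. The only place \(R\) enters is the current-time \(R_j\) fed into \(f_{A,j}\) and \(d_j\), and these agree by the inductive step. The final evaluation of \(f_{C,t+1}\) therefore agrees as well, which gives \eqref{eq:fh-Z-recursive-consistency}.

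The step I expect to be delicate is the interpretation of the hypothesis and of the randomizer at the boundary time \(s\). Equality should be read pointwise in the exogenous variables and in \(\epsilon_s\). This means \(a_s^d\) must use the same draw \(\epsilon_s\) as the \(Z_{s-1}\) recursion (so the identity is almost sure rather than distributional). It also requires that the \(k\)-lag truncation \((f_s^d)^{(k)}\) coincide with \(F_s^{[s-1],(k)}\), which is immediate because the two full histories are equal. Finally, the no-visit branch \(t_s\{\alpha_s(H_s^{(k)}),\epsilon_s\}\) must depend only on quantities determined by the common history, which the text following \Cref{ass:fh-markov} guarantees.
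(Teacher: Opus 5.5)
Your proposal is correct and follows essentially the same route as the paper's proof. Like the paper, you check that the time-\(s\) policy treatment inside the \(Z_{s-1}\) recursion equals \(a_s^d\), and then use \Cref{ass:fh-measurement-exchange} (no dependence on past visit indicators) to conclude that, with the same exogenous inputs, the recursion from \(s+1\) onward is exactly the one defining \(Z_s(a_s^d,f_s^d)\). Your explicit induction over \(j\) and your remark that the identity holds pointwise for a common draw of \(\epsilon_s\) simply spell out the same argument in more detail.
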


\begin{proof}
Inside the recursion for \(Z_{s-1}\), the time-\(s\) policy treatment is
\[
A_s^{[s-1]}
=d_s\{A_s^{\natural,[s-1]},R_s^{[s-1]},F_s^{[s-1],(k)},\epsilon_s\}
=d_s\{a_s,r_s,(f_s^d)^{(k)},\epsilon_s\}
=a_s^d.
\]
After this assignment, the time-\((s+1)\) structural equations receive the fixed histories \((\bar A_s^d,\bar C_s)\). By \Cref{ass:fh-measurement-exchange}, \(r_s\) is not an additional argument of those equations. Thus the recursion from time \(s+1\) onward is exactly the recursion defining \(Z_s(a_s^d,f_s^d)\), with the same future exogenous variables, proving \eqref{eq:fh-Z-recursive-consistency}.
\end{proof}

\subsection{Identification through sequential regressions}

\begin{lemma}[Sequential regression formulation of estimand]\label{lem:fh-reference-recursion}
Take \(k=\infty\) for the conditioning history, while allowing the policy itself to depend on its stated \(k\)-lag argument. Define
\begin{align}
\mu_{t+1,t}(a_t,f_t)
&:=\E(Y_{t+1}\mid J_t=1,A_t=a_t,F_t=f_t),
\notag\\
\mu_{t+1,s}(a_s,f_s)
&:=\E\!\left[
\mu_{t+1,s+1}\{d_{s+1}(A_{s+1},R_{s+1},F_{s+1}^{(k)},\epsilon_{s+1}),F_{s+1}\}
\mid J_s=1,A_s=a_s,F_s=f_s
\right]
\label{eq:fh-reference-recursion}
\end{align}
for \(s=t-1,\ldots,1\), where the expectation also averages independently over \(\epsilon_{s+1}\sim Q_{s+1}\). Under \Cref{ass:fh-nuc-markov,ass:fh-measurement-exchange,ass:fh-support-markov},
\begin{equation}
\theta_{t+1}(\bar d_t)
=
\E\!\left[
\mu_{t+1,1}\{d_1(A_1,1,C_1),C_1\}
\right].
\label{eq:fh-reference-target}
\end{equation}
\end{lemma}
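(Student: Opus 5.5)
The argument is a backward induction on $s$ built from the auxiliary variables $Z_s$. The claim to establish, for $s=t,t-1,\ldots,1$, is
\[
\mu_{t+1,s}(a_s,f_s)=\E\{Z_s(a_s,f_s)\mid J_s=1,A_s=a_s,F_s=f_s\}
\]
for every supported pair $(a_s,f_s)$ with $f_s$ of the policy form $f_s^d$. Given this claim at $s=1$, the target follows from the definition $Y_{t+1}(\bar d_t)=Z_1\{d_1(A_1,1,C_1),C_1\}$, since $R_1\equiv 1$ and there is no past visit history. Indeed, by \Cref{lem:fh-Z-joint-exchange} at $s=1$, $Z_1(a_1,c_1)\indep A_1\mid C_1$. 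We then have
\[
\E\{Y_{t+1}(\bar d_t)\}=\E\bigl[\E\{Z_1(a_1^d,C_1)\mid C_1\}\big|_{a_1^d=d_1(A_1,1,C_1)}\bigr].
\]
For this one needs $Z_1(a,c)\indep A_1\mid C_1$ jointly in the evaluated $A_1$. This holds because $Z_1(\cdot,c_1)$ is a function of future exogenous variables only, while $A_1$ is a function of $U_{A,1}$, so the plug-in is legitimate. Finally, \eqref{eq:fh-Z-exchange-expectation} replaces $\E\{Z_1(a,c)\mid C_1=c\}$ by $\E\{Z_1(a,c)\mid J_1=1,A_1=a,C_1=c\}=\mu_{t+1,1}(a,c)$. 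Support is guaranteed by \Cref{ass:fh-support-markov}.

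The base case $s=t$ is consistency. On the event $\{J_t=1,A_t=a_t,F_t=f_t\}$, the structural recursion defining $Z_t(a_t,f_t)$ reproduces the factual $C_{t+1}$. It uses the same $U_{C,t+1}$, and by \Cref{ass:fh-measurement-exchange} $f_{C,t+1}$ does not take $\bar R_t$ as an argument, so $Z_t(a_t,f_t)=Y_{t+1}$ there. Hence $\mu_{t+1,t}(a_t,f_t)=\E\{Z_t\mid J_t=1,A_t=a_t,F_t=f_t\}$.

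For the inductive step, assume the claim holds at $s+1$. Condition $\E\{Z_s(a_s,f_s)\mid J_s=1,A_s=a_s,F_s=f_s\}$ on the next-step variables $(C_{s+1},R_{s+1},A_{s+1},\epsilon_{s+1})$ generated inside the recursion for $Z_s$. On the conditioning event, these coincide with the factual $C_{s+1},R_{s+1},A_{s+1}$, again by consistency and \Cref{ass:fh-measurement-exchange}. By \Cref{lem:fh-Z-recursion}, $Z_s(a_s,f_s)=Z_{s+1}(a_{s+1}^d,f_{s+1})$ with $a_{s+1}^d=d_{s+1}(A_{s+1},R_{s+1},F_{s+1}^{(k)},\epsilon_{s+1})$. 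Iterating expectations reduces the problem to evaluating
\[
\E\{Z_{s+1}(a_{s+1}^d,f_{s+1})\mid J_s=1,R_{s+1}=r,A_{s+1}=a',F_{s+1}=f_{s+1},\epsilon_{s+1}\},
\]
since $J_s=1$ and $A_s=a_s$ are absorbed into $F_{s+1}$. Here $\epsilon_{s+1}$ is independent of everything, and $a^d_{s+1}$ is a fixed function of the conditioning variables. Applying \eqref{eq:fh-Z-exchange-expectation} at time $s+1$, which conditions on $J_s=1$ and the realized $(R_{s+1},A_{s+1})$, turns this into $\E\{Z_{s+1}(a^d_{s+1},f_{s+1})\mid J_{s+1}=1,A_{s+1}=a^d_{s+1},F_{s+1}=f_{s+1}\}$. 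By the inductive hypothesis this equals $\mu_{t+1,s+1}(a^d_{s+1},f_{s+1})$. Taking the outer expectation over $(C_{s+1},R_{s+1},A_{s+1})$ given $\{J_s=1,A_s=a_s,F_s=f_s\}$, and over $\epsilon_{s+1}\sim Q_{s+1}$, gives exactly the recursion \eqref{eq:fh-reference-recursion}.

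The main obstacle is the first conditioning step in the induction. There, the conditional law of $(C_{s+1},R_{s+1},A_{s+1})$ inside the counterfactual recursion must match the factual conditional law given $J_s=1,A_s=a_s,F_s=f_s$. The same subtlety arises when the random argument $a^d_{s+1}$, which depends on the natural $(R_{s+1},A_{s+1})$, is plugged into \Cref{lem:fh-Z-joint-exchange}, a lemma stated for fixed $(a,f)$. I would handle both points using the independence of $(U_{R,s+1},U_{A,s+1})$ from the future exogenous variables given $J_s=1,F_{s+1}$, which is \eqref{eq:fh-past-visit-factorization}. This makes $Z_{s+1}(\cdot,f_{s+1})$, as a random function, independent of $(R_{s+1},A_{s+1})$, so the substitution is valid. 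Positivity, \Cref{ass:fh-support-markov}, ensures that every conditional expectation is taken on supported events.
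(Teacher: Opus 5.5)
Your proposal is correct and follows essentially the paper's route: you use the same ingredients (consistency at the terminal step, Lemma~\ref{lem:fh-Z-recursion}, Lemma~\ref{lem:fh-Z-joint-exchange} with \Cref{ass:fh-support-markov}, and the tower property), packaged as a backward induction on $\mu_{t+1,s}(a_s,f_s)=\E\{Z_s(a_s,f_s)\mid J_s=1,A_s=a_s,F_s=f_s\}$, whereas the paper iterates forward from the target and collapses the iterated integral at the end. You also spell out steps the paper leaves implicit, namely substituting the random argument $a^d_{s+1}$ and checking consistency on the conditioning event; one small fix is to drop the ``policy form'' qualifier on $f_s$ in your inductive claim, because the step from $s$ to $s+1$ applies the claim to $f_{s+1}=(f_s,a_s,c_{s+1})$ with arbitrary supported $a_s$, and the construction of $Z_s$ never uses that form.
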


\begin{proof}
At baseline \(R_1=1\). For fixed \((a_1,c_1)\), write \(a_1^d=d_1(a_1,1,c_1)\). By the definition of the intervention and \eqref{eq:fh-Z-definition},
\[
Y_{t+1}(\bar d_t)=Z_1\{d_1(A_1,1,C_1),C_1\}.
\]
Therefore
\begin{align}
\theta_{t+1}(\bar d_t)
&=
\int
\E\{Z_1(a_1^d,c_1)\mid A_1=a_1,C_1=c_1\}
\,dP(a_1,c_1)
\nonumber\\
&=
\int
\E\{Z_1(a_1^d,c_1)\mid R_1=1,A_1=a_1^d,C_1=c_1\}
\,dP(a_1,c_1),
\notag
\end{align}
where the second equality is Lemma~\ref{lem:fh-Z-joint-exchange} and \Cref{ass:fh-support-markov}. Consider a generic step \(s=2,\ldots,t\). Suppose the current conditional expectation is
\[
\E\{Z_{s-1}(a_{s-1}^d,f_{s-1}^d)\mid J_{s-1}=1,A_{s-1}=a_{s-1}^d,F_{s-1}=f_{s-1}^d\}.
\]
Applying the tower property to \((C_s,R_s,A_s)\) gives an integral of
\begin{align*}
&\E\{Z_{s-1}(a_{s-1}^d,f_{s-1}^d)
\mid J_{s-1}=1,A_{s-1}=a_{s-1}^d,F_{s-1}=f_{s-1}^d,
C_s=c_s,R_s=r_s,A_s=a_s\}
\end{align*}
with respect to the corresponding conditional law of \((C_s,R_s,A_s)\) and independently over \(\epsilon_s\sim Q_s\). For every value in this integral, set \(f_s^d=(f_{s-1}^d,a_{s-1}^d,c_s)\) and \(a_s^d=d_s\{a_s,r_s,(f_s^d)^{(k)},\epsilon_s\}\). Lemma~\ref{lem:fh-Z-recursion} changes the variable inside the conditional expectation to \(Z_s(a_s^d,f_s^d)\), and Lemma~\ref{lem:fh-Z-joint-exchange} then gives
\begin{align*}
&\E\{Z_s(a_s^d,f_s^d)
\mid J_{s-1}=1,R_s=r_s,A_s=a_s,F_s=f_s^d\}\\
&\qquad=
\E\{Z_s(a_s^d,f_s^d)
\mid J_s=1,A_s=a_s^d,F_s=f_s^d\}.
\end{align*}
Thus the same argument can be repeated at time \(s+1\).

After iterating through time \(t\), the innermost conditional expectation is
\[
\E\{Z_t(a_t^d,f_t^d)\mid J_t=1,A_t=a_t^d,F_t=f_t^d\}.
\]
There is no subsequent treatment modification in \(Z_t\). By structural consistency this equals
\[
\E(Y_{t+1}\mid J_t=1,A_t=a_t^d,F_t=f_t^d)
=\mu_{t+1,t}(a_t^d,f_t^d).
\]
Collapsing the resulting iterated integral backward gives exactly \eqref{eq:fh-reference-recursion} and then \eqref{eq:fh-reference-target}.
\end{proof}

\subsection{Bridge recovery of the reference-state recursion}

\begin{lemma}[One-step bridge identity]\label{lem:fh-one-step-recovery}
Take \(k=\infty\). Fix \(s\leq t\), and write \(\mathcal H_s=\{J_s=1,A_s,F_s\}\). Let \(v^1\) be integrable and let \(v^0\) be integrable and measurable with respect to \((F_s,A_s,C_{s+1},\epsilon_{s+1})\), where \(\epsilon_{s+1}\sim Q_{s+1}\) is independent of the observed-data variables. Then
\begin{align}
&\E\{R_{s+1}v^1+(1-R_{s+1})v^0\mid\mathcal H_s\}
\nonumber\\
&\qquad=
\E\!\left[R_{s+1}\{v^1+(\beta_s^0(F_s,C_{s+1})-1)v^0\}\mid\mathcal H_s\right].
\label{eq:fh-one-step-recovery}
\end{align}
\end{lemma}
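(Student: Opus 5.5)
The plan is to rewrite the left-hand side of \eqref{eq:fh-one-step-recovery} term by term. Subtracting the common term $\E\{R_{s+1}v^0\mid\mathcal H_s\}$ from both sides reduces the claim to two identities: the $v^1$ term appears identically on both sides, and it remains to show
\[
\E\{v^0\mid\mathcal H_s\}=\E\{R_{s+1}\beta_s^0(F_s,C_{s+1})v^0\mid\mathcal H_s\}.
\]
This is a conditional version of Lemma~\ref{lem:nor_identity}. Integrability of the products follows from the integrability of $v^0$ together with boundedness or integrability of $\beta_s^0$ on the relevant support, which I would take as implicit in the statement.

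To prove the displayed identity, I first condition on the finer sigma-field generated by $(J_s=1,A_s,F_s,C_{s+1},\epsilon_{s+1})$. Since $v^0$ and $\beta_s^0$ are measurable with respect to this sigma-field, the tower property gives
\[
\E\{R_{s+1}\beta_s^0 v^0\mid\mathcal H_s\}
=\E\!\left[\beta_s^0 v^0\,\P(R_{s+1}=1\mid J_s=1,A_s,F_s,C_{s+1},\epsilon_{s+1})\mid\mathcal H_s\right].
\]
The key step is to replace the inner probability by $1/\beta_s^0$. I will do this in three moves.

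\emph{Dropping the randomizer.} Because $\epsilon_{s+1}$ is independent of all observed-data variables, it can be removed from the conditioning set.

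\emph{Dropping the shadow variable.} \Cref{ass:fh-shadow-markov} then removes $A_s$, leaving $\P(R_{s+1}=1\mid J_s=1,F_s,C_{s+1})$.

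\emph{Matching the bridge.} With $k=\infty$ this quantity is exactly the denominator of $\beta_s^0$, because $F_s^{(k)}=F_s$ and $J_s^{(k)}=J_s$. (For finite $k$, one would appeal to \eqref{eq:fh-reduced-shadow}, but the lemma is stated for $k=\infty$.) \Cref{ass:fh-positive-markov} guarantees that this probability is positive, so $\beta_s^0$ is well defined and the product $\beta_s^0\cdot\P(\cdot)$ equals one almost surely.

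Substituting back yields $\E\{v^0\mid\mathcal H_s\}$, which completes the proof.

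The main obstacle is the measurability bookkeeping. I need $v^0$ to be a function only of $(F_s,A_s,C_{s+1},\epsilon_{s+1})$, and not of $R_{s+1}$ or of later variables, so that it can be pulled out of the inner conditional expectation. The hypothesis provides exactly this. By contrast, $v^1$ need not satisfy any such condition, which is why it is handled by cancellation rather than by the inner conditioning.
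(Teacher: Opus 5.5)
Your proposal is correct and follows the paper's proof essentially step for step. You reduce the claim to $\E\{v^0\mid\mathcal H_s\}=\E\{R_{s+1}\beta_s^0(F_s,C_{s+1})v^0\mid\mathcal H_s\}$, condition additionally on $(C_{s+1},\epsilon_{s+1})$, and use the independence of the randomizer together with the full-history shadow-variable exclusion to replace the inner response probability by $1/\beta_s^0$. The only difference is the order: the paper proves the $v^0$ identity first and then adds $\E\{R_{s+1}v^1\mid\mathcal H_s\}$. Like the paper, you drop $\epsilon_{s+1}$ using its independence from the full-data variables, including the possibly unobserved $C_{s+1}$, which the construction of the randomizers guarantees.
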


\begin{proof}
By iterated expectation, the independence of \(\epsilon_{s+1}\), and the shadow-variable restriction,
\begin{align*}
&\E\{R_{s+1}\beta_s^0(F_s,C_{s+1})v^0\mid\mathcal H_s\}\\
&\quad=
\E\!\left[
\beta_s^0(F_s,C_{s+1})v^0
\P(R_{s+1}=1\mid J_s=1,A_s,F_s,C_{s+1},\epsilon_{s+1})
\mid\mathcal H_s
\right]\\
&\quad=
\E(v^0\mid\mathcal H_s).
\end{align*}
Hence
\[
\E\{(1-R_{s+1})v^0\mid\mathcal H_s\}
=
\E\{R_{s+1}(\beta_s^0-1)v^0\mid\mathcal H_s\}.
\]
Adding \(\E(R_{s+1}v^1\mid\mathcal H_s)\) proves \eqref{eq:fh-one-step-recovery}.
\end{proof}

\begin{lemma}[Identification using the true bridges]\label{lem:fh-true-bridge-anchor}
Under \Cref{ass:fh-nuc-markov,ass:fh-measurement-exchange,ass:fh-known-novisit,ass:fh-support-markov,ass:fh-shadow-markov,ass:fh-positive-markov,ass:fh-markov}, evaluate \eqref{eq:fh-m-recursion}--\eqref{eq:fh-q-recursion} with \(\beta_s=\beta_s^0\) for every \(s\leq t\). Then
\begin{equation}
\theta_{t+1}(\bar d_t)
=
\E\!\left[
 m_{t+1,1}\{d_1(A_1,1,C_1),C_1\}
\right].
\label{eq:fh-true-bridge-anchor}
\end{equation}
\end{lemma}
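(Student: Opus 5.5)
The plan is to prove the claim first with the complete history ($k=\infty$) and then transfer it to finite $k$ using \Cref{ass:fh-markov}. For $k=\infty$, start from Lemma~\ref{lem:fh-reference-recursion}, which writes $\theta_{t+1}(\bar d_t)=\E[\mu_{t+1,1}\{d_1(A_1,1,C_1),C_1\}]$. The recursion for $\mu$ conditions on $J_{s+1}$-type events only implicitly. We then show by backward induction on $s=t,\ldots,1$ that $\mu_{t+1,s}=m_{t+1,s}$, where $m_{t+1,s}$ is computed with the true bridges $\beta_s^0$.

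\textbf{Terminal step.} Here $\mu_{t+1,t}(a_t,f_t)=\E(Y_{t+1}\mid J_t=1,A_t=a_t,F_t=f_t)$. Apply Lemma~\ref{lem:fh-one-step-recovery} with $v^1=v^0=Y_{t+1}$, which is a function of $C_{t+1}$. This gives $\E\{R_{t+1}\beta_t^0Y_{t+1}\mid J_t=1,A_t,F_t\}$, which is $m_{t+1,t}$ with $q_{t+1,t+1}$ as in \eqref{eq:fh-terminal-q}.

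\textbf{Inductive step.} Assume $\mu_{t+1,s+1}=m_{t+1,s+1}$. Inside the conditional expectation defining $\mu_{t+1,s}$, split on $R_{s+1}$.
\begin{itemize}
\item On $\{R_{s+1}=1\}$, the policy value is $d_{s+1}(A_{s+1},1,F_{s+1}^{(k)})$, with $F_{s+1}$ fully observed. This is $v^1$.
\item On $\{R_{s+1}=0\}$, \Cref{ass:fh-known-novisit} gives $A_{s+1}=\alpha_{s+1}(H_{s+1})$, so the policy value is $t_{s+1}(\alpha_{s+1},\epsilon_{s+1})$. Take $v^0=m_{t+1,s+1}\{t_{s+1}(\alpha_{s+1},\epsilon_{s+1}),F_{s+1}\}$.
\end{itemize}
The function $v^0$ must be measurable in $(F_s,A_s,C_{s+1},\epsilon_{s+1})$. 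This holds because $F_{s+1}=(F_s,A_s,C_{s+1})$ and, as stipulated after \Cref{ass:fh-markov}, $\alpha_{s+1}$ is determined by the retained state. Lemma~\ref{lem:fh-one-step-recovery} then turns the split expectation into $\E[R_{s+1}\{v^1+(\beta_s^0-1)v^0\}\mid J_s=1,A_s,F_s]$. By \eqref{eq:fh-q-recursion} this equals $\E[q_{t+1,s+1}\mid J_s=1,A_s,F_s]=m_{t+1,s}$. The subtle point is that $\mu_{t+1,s+1}$ is defined only through its arguments and conditions on $J_{s+1}=1$. Evaluating it on $R_{s+1}=0$ rows is therefore legitimate, since $F_{s+1}$ is a latent but well-defined quantity. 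Afterward the bridge replaces that latent evaluation with one on complete cases. Conclude with $s=1$.

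\textbf{Reduction to finite $k$.} This is the main obstacle. We must show that each conditional expectation given $\{J_s=1,A_s,F_s\}$ equals the one given $\{J_s^{(k)}=1,A_s,F_s^{(k)}\}$, and that $\beta_s^0$ computed with full history equals the $k$-lag version. For the bridge, the second claim is already contained in the proof of Lemma~\ref{lem:fh-bridge-equation}: by (M3) the full-history visit probability reduces to the $k$-lag one. For the regressions, argue backward by induction that $m_{t+1,s}$ depends on $F_s$ only through $F_s^{(k)}$ and that conditioning on the remote history $\mathcal R_s^{(k)}$ is irrelevant. The conditional law of $(C_{s+1},R_{s+1},A_{s+1})$ given $J_s=1,A_s,F_s$ factors as follows.
\begin{itemize}
\item $C_{s+1}$ contributes through (M1). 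Combined with Lemma~\ref{lem:fh-past-visit-factorization}, this shows that conditioning on $J_s$ versus $J_s^{(k)}$ does not change the law of $C_{s+1}$.
\item $R_{s+1}$ contributes through (M3).
\item $A_{s+1}$ given $J_{s+1}^{(k)}=1$ contributes through (M2).
\end{itemize}
The pseudo-outcome also depends only on $F_{s+1}^{(k)}$ by the induction hypothesis. Hence the full-history and $k$-lag conditional expectations coincide. I expect the delicate part to be aligning the conditioning events, $J_s=1$ versus $J_s^{(k)}=1$, with the remote history integrated out, and verifying (M1) under the $J$-conditioning via \eqref{eq:fh-visit-C-independence}. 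This yields \eqref{eq:fh-true-bridge-anchor}.
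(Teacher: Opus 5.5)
Your proposal is correct and follows the paper's proof essentially step for step. You run the same $k=\infty$ backward induction $m_{t+1,s}=\mu_{t+1,s}$, anchored by Lemma~\ref{lem:fh-reference-recursion}, with Lemma~\ref{lem:fh-one-step-recovery} applied at the terminal step to $Y_{t+1}$ and at each inductive step after splitting on $R_{s+1}$ via \Cref{ass:fh-known-novisit}; you then make the same finite-$k$ reduction of the transition laws through (M1) with \eqref{eq:fh-visit-C-independence}, through (M3) with the bridge reduction from the proof of Lemma~\ref{lem:fh-bridge-equation}, and through (M2), at the same somewhat informal level of detail as the paper (your explicit check that $v^0$ is measurable in $(F_s,A_s,C_{s+1},\epsilon_{s+1})$ is a small point the paper leaves implicit).
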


\begin{proof}
We first take \(k=\infty\). At time \(t\), Lemma~\ref{lem:fh-one-step-recovery} with \(v^0=Y_{t+1}\) gives
\begin{align*}
m_{t+1,t}(a_t,f_t)
&=\E\{R_{t+1}\beta_t^0(F_t,C_{t+1})Y_{t+1}\mid J_t=1,A_t=a_t,F_t=f_t\}\\
&=\E(Y_{t+1}\mid J_t=1,A_t=a_t,F_t=f_t)
=\mu_{t+1,t}(a_t,f_t).
\end{align*}
Suppose \(m_{t+1,s}=\mu_{t+1,s}\) for some \(s\geq2\). Equation \eqref{eq:fh-reference-recursion} at time \(s-1\) can be written
\begin{align*}
&\E\!\left[
R_s\mu_{t+1,s}\{d_s(A_s,1,F_s^{(k)}),F_s\}
+(1-R_s)\mu_{t+1,s}\{t_s(\alpha_s(H_s^{(k)}),\epsilon_s),F_s\}
\mid J_{s-1}=1,A_{s-1},F_{s-1}
\right].
\end{align*}
Indeed, on \(R_s=0\), \Cref{ass:fh-known-novisit} gives \(A_s=\alpha_s(H_s^{(k)})\), and the policy assigns \(t_s\{\alpha_s(H_s^{(k)}),\epsilon_s\}\). The outer expectation also averages independently over \(\epsilon_s\sim Q_s\). Applying Lemma~\ref{lem:fh-one-step-recovery} at time \(s-1\), with \(\mu_{t+1,s}\{t_s(\alpha_s(H_s^{(k)}),\epsilon_s),F_s\}\) as \(v^0\), changes this expression to
\begin{align*}
&\E\!\left[
R_s\left[
\mu_{t+1,s}\{d_s(A_s,1,F_s^{(k)}),F_s\}
\right.\right.\\[-1mm]
&\hspace{22mm}\left.\left.
+\{\beta_{s-1}^0(F_{s-1},C_s)-1\}
\mu_{t+1,s}\{t_s(\alpha_s(H_s^{(k)}),\epsilon_s),F_s\}
\right]
\middle|
\substack{J_{s-1}=1,\ A_{s-1},\\F_{s-1}}
\right].
\end{align*}
By the induction hypothesis and \eqref{eq:fh-q-recursion}, the quantity inside the outer expectation is \(q_{t+1,s}\). Therefore \eqref{eq:fh-m-recursion} gives \(m_{t+1,s-1}=\mu_{t+1,s-1}\). Backward induction and Lemma~\ref{lem:fh-reference-recursion} prove \eqref{eq:fh-true-bridge-anchor} for the complete-history regression state.

For finite \(k\), the same argument applies after reducing the transition law to the retained state. We give the reduction explicitly at one generic time. Lemma~\ref{lem:fh-past-visit-factorization} and \eqref{ass:fh-markov-M1} imply
\begin{equation}
\begin{split}
p(c_{s+1}\mid J_s=1,A_s=a_s,F_s=f_s)
&=p(c_{s+1}\mid A_s=a_s,F_s=f_s)\\
&=p(c_{s+1}\mid A_s=a_s,F_s^{(k)}=f_s^{(k)})\\
&=p(c_{s+1}\mid J_s^{(k)}=1,A_s=a_s,F_s^{(k)}=f_s^{(k)}).
\end{split}
\notag
\end{equation}
The last equality follows by conditioning on the remote history and using \eqref{ass:fh-markov-M1} together with \eqref{eq:fh-visit-C-independence}. \eqref{ass:fh-markov-M3} gives
\begin{align}
&p(r_{s+1}\mid J_s=1,A_s=a_s,F_s=f_s,C_{s+1}=c_{s+1})
\nonumber\\
&\qquad=
p(r_{s+1}\mid J_s^{(k)}=1,A_s=a_s,F_s^{(k)}=f_s^{(k)},C_{s+1}=c_{s+1}),
\notag
\end{align}
and the proof of Lemma~\ref{lem:fh-bridge-equation} shows that \Cref{ass:fh-shadow-markov,ass:fh-positive-markov} reduce to the corresponding \(k\)-lag shadow restriction and positivity. When \(r_{s+1}=1\), \eqref{ass:fh-markov-M2} at time \(s+1\) gives
\begin{align}
&p(a_{s+1}\mid J_{s+1}=1,F_{s+1}=f_{s+1})
\nonumber\\
&\qquad=
p(a_{s+1}\mid J_{s+1}^{(k)}=1,F_{s+1}^{(k)}=f_{s+1}^{(k)}).
\notag
\end{align}
When \(r_{s+1}=0\), the natural treatment is the known value \(\alpha_{s+1}(H_{s+1}^0)\), and the intervention treatment is \(t_{s+1}\{\alpha_{s+1}(H_{s+1}^0),\epsilon_{s+1}\}\). 
\end{proof}

\subsection{Bridge replacement and the adjoint condition}

We first verify the adjoint formula in \eqref{eq:fh-adjoint}. For square-integrable \(h\) and \(u\),
\begin{align*}
\langle T_sh,u\rangle_{A,s}
&=
\E\!\left[
J_s^{(k)}u(A_s,F_s^{(k)})
\E\{R_{s+1}h(F_s^{(k)},C_{s+1})\mid J_s^{(k)}=1,A_s,F_s^{(k)}\}
\right]\\
&=
\E\!\left[
K_{s+1}^{(k)}h(F_s^{(k)},C_{s+1})u(A_s,F_s^{(k)})
\right]\\
&=
\E\!\left[
K_{s+1}^{(k)}h(F_s^{(k)},C_{s+1})
(T_s^\star u)(F_s^{(k)},C_{s+1})
\right]\\
&=\langle h,T_s^\star u\rangle_{C,s}.
\end{align*}

\begin{lemma}[Effect of replacing one bridge solution]\label{lem:fh-bridge-replacement}
Fix \(s\leq t\). Let \(\beta_s\) and \(\widetilde\beta_s\) be two square-integrable solutions of \eqref{eq:fh-bridge-equation}, keep all other bridge functions fixed, and let \((m_{t+1,j},q_{t+1,j})\) and \((\widetilde m_{t+1,j},\widetilde q_{t+1,j})\) be the corresponding recursions. Write \(r_j=m_{t+1,j}-\widetilde m_{t+1,j}\). If \(s<t\), then
\begin{align}
r_s(a_s,f_s)
&=
\E\!\left[
R_{s+1}\{\beta_s(F_s^{(k)},C_{s+1})-\widetilde\beta_s(F_s^{(k)},C_{s+1})\}
\right.\nonumber\\[-1mm]
&\hspace{26mm}\left.
\times
m_{t+1,s+1}\{t_{s+1}(\alpha_{s+1}(H_{s+1}^{(k)}),\epsilon_{s+1}),F_{s+1}^{(k)}\}
\mid J_s^{(k)}=1,A_s=a_s,F_s^{(k)}=f_s
\right].
\label{eq:fh-local-bridge-replacement}
\end{align}
For \(s=t\), the same display holds with the last factor replaced by \(Y_{t+1}\). For \(j=s-1,\ldots,1\),
\begin{align}
r_j(a_j,f_j)
&=
\E\!\left[
R_{j+1}\left[
 r_{j+1}\{d_{j+1}(A_{j+1},1,F_{j+1}^{(k)}),F_{j+1}^{(k)}\}
\right.\right.\nonumber\\[-1mm]
&\hspace{20mm}\left.\left.
+\{\beta_j(F_j^{(k)},C_{j+1})-1\}
 r_{j+1}\{t_{j+1}(\alpha_{j+1}(H_{j+1}^{(k)}),\epsilon_{j+1}),F_{j+1}^{(k)}\}
\right]
\middle|
\substack{J_j^{(k)}=1,\ A_j=a_j,\\F_j^{(k)}=f_j}
\right].
\label{eq:fh-upstream-bridge-replacement}
\end{align}
Consequently,
\begin{align}
&\E\!\left[
 m_{t+1,1}\{d_1(A_1,1,C_1),C_1\}
-\widetilde m_{t+1,1}\{d_1(A_1,1,C_1),C_1\}
\right]
\nonumber\\
&\qquad=
\E\!\left[
\Omega_sR_{s+1}\nu_{t+1,s+1}
\{\beta_s(F_s^{(k)},C_{s+1})-\widetilde\beta_s(F_s^{(k)},C_{s+1})\}
\right]
\nonumber\\
&\qquad=
\E\!\left[
K_{s+1}^{(k)}\phi_{t+1,s}(F_s^{(k)},C_{s+1})
\{\beta_s(F_s^{(k)},C_{s+1})-\widetilde\beta_s(F_s^{(k)},C_{s+1})\}
\right]
\nonumber\\
&\qquad=
\langle\beta_s-\widetilde\beta_s,\phi_{t+1,s}\rangle_{C,s}.
\label{eq:fh-loading-representation}
\end{align}
\end{lemma}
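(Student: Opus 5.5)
The proof has two stages. First we establish the local identities \eqref{eq:fh-local-bridge-replacement} and \eqref{eq:fh-upstream-bridge-replacement} by comparing the two recursions step by step. Then we unroll the upstream identity into a weighted expectation, which produces the cumulative weight $\Omega_s$, and finish with iterated expectations.

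\textbf{Local step at time $s$.} Since only $\beta_s$ changes, all regressions at indices $j>s$ coincide: $m_{t+1,j}=\widetilde m_{t+1,j}$ for $j\ge s+1$.

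When $s<t$, look at \eqref{eq:fh-q-recursion} at index $s+1$. The first term $m_{t+1,s+1}\{d_{s+1}(A_{s+1},1,\cdot),\cdot\}$ is the same in both recursions. The second term has coefficient $\beta_s-1$ in one and $\widetilde\beta_s-1$ in the other. Hence
\[
q_{t+1,s+1}-\widetilde q_{t+1,s+1}=R_{s+1}(\beta_s-\widetilde\beta_s)\,m_{t+1,s+1}\{t_{s+1}(\alpha_{s+1},\epsilon_{s+1}),F_{s+1}^{(k)}\}.
\]
Taking the conditional expectation in \eqref{eq:fh-m-recursion} gives \eqref{eq:fh-local-bridge-replacement}. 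When $s=t$, the difference comes directly from the terminal pseudo-outcome \eqref{eq:fh-terminal-q}, namely $R_{t+1}(\beta_t-\widetilde\beta_t)Y_{t+1}$.

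\textbf{Upstream steps $j<s$.} Here $\beta_j$ is the same in both recursions, so differencing \eqref{eq:fh-q-recursion} at index $j+1$ is linear in the regressions and yields \eqref{eq:fh-upstream-bridge-replacement} immediately.

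\textbf{Unrolling into $\Omega_s$.} Start from the baseline difference $\E[r_1\{d_1(A_1,1,C_1),C_1\}]$. At each upstream step we trade the evaluation of $r_{j+1}$ at a policy treatment for a density ratio, using Lemma~\ref{lem:conditional-change-of-measure}. The conditioning is on $\{J_{j+1}^{(k)}=1,F_{j+1}^{(k)}\}$, with $L$ taken as the accumulated weight so far, which depends only on the past.
\begin{itemize}
\item The first term $r_{j+1}\{d_{j+1}(A_{j+1},1,\cdot)\}$ becomes $\omega^d_{j+1}\,r_{j+1}(A_{j+1},\cdot)$.
\item The second term $r_{j+1}\{t_{j+1}(\alpha_{j+1},\epsilon_{j+1})\}$ becomes $\omega^t_{j+1}\,r_{j+1}(A_{j+1},\cdot)$, using the absolute continuity in \Cref{ass:fh-support-markov}.
\end{itemize}
Together with the factor $R_{j+1}$ and the coefficient $\beta_j-1$, this reproduces exactly one factor of the recursion \eqref{eq:fh-Omega}. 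The base case, $\E[r_1\{d_1\}]=\E[\omega_1^d r_1(A_1,C_1)]=\E[\Omega_1 r_1]$, is the first part of the same lemma.

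By induction we obtain $\E[\Omega_s r_s(A_s,F_s^{(k)})]$. Substituting \eqref{eq:fh-local-bridge-replacement} and removing the inner conditional expectation by the tower property (valid because $\Omega_s$ is measurable given $(A_s,F_s^{(k)})$ and contains the factor $J_s^{(k)}$) gives the first line of \eqref{eq:fh-loading-representation}, with $\nu_{t+1,s+1}$ as in \eqref{eq:fh-nu}.

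\textbf{Final equalities.} Note that $R_{s+1}J_s^{(k)}$ restricted to the retained window equals $K_{s+1}^{(k)}$. Conditioning on $(K_{s+1}^{(k)}=1,F_s^{(k)},C_{s+1})$ then turns $\Omega_s\nu_{t+1,s+1}$ into $\phi_{t+1,s}$ by \eqref{eq:fh-phi-loading}. The last expression is the definition of $\langle\cdot,\cdot\rangle_{C,s}$.

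\textbf{Main obstacle.} The delicate step is the change of measure in the unrolling. The conditioning sets shift between $J_j^{(k)}$ at consecutive times, and the accumulated weight must satisfy the condition $\E(L\mid X,W)=\E(L\mid W)$ of Lemma~\ref{lem:conditional-change-of-measure}. Under the finite-$k$ state this needs \eqref{ass:fh-markov-M2}, so that the law of $A_{j+1}$ given the full past and $J_{j+1}=1$ reduces to $g_{j+1}(\cdot\mid F^{(k)}_{j+1})$. We handle this as in the reduction at the end of the proof of Lemma~\ref{lem:fh-true-bridge-anchor}. If the reduction becomes awkward, we would first prove the statement for $k=\infty$ and then transfer it to finite $k$ using \eqref{ass:fh-markov-M1}--\eqref{ass:fh-markov-M3}.
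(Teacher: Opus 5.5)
Your proposal follows the paper's proof step for step. You difference the pseudo-outcome at time $s+1$ (or the terminal one), difference the upstream recursion by linearity, unroll with the $\omega^d$/$\omega^t$ changes of measure to build $\Omega_s$, and finish with iterated expectations to reach $\phi_{t+1,s}$. The paper's version is just terser.

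One justification is wrong, and it is exactly where the finite-memory assumptions are needed. You say the final tower step holds because $\Omega_s$ is measurable with respect to $(A_s,F_s^{(k)})$. For $s>k$ this is false. By \eqref{eq:fh-Omega}, $\Omega_s$ contains $\omega_1^d(A_1,C_1)$ and the factor $J_s=\prod_{j\leq s}R_j$ (not only $J_s^{(k)}$). It also contains coefficients $\beta_{j-1}(F_{j-1}^{(k)},C_j)$ whose window starts one period before that of $F_j^{(k)}$. Hence two identities are not automatic. The first is $\E\{\Omega_s r_s(A_s,F_s^{(k)})\}=\E\{\Omega_sR_{s+1}(\beta_s-\widetilde\beta_s)\nu_{t+1,s+1}\}$. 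The second is $\E\{\Omega_j r_j(A_j,F_j^{(k)})\}=\E\{\Omega_j(q_{t+1,j+1}-\widetilde q_{t+1,j+1})\}$, which your unrolling uses implicitly at every stage. Both require that, on $\{J_j=1\}$, the regression of the pseudo-outcome difference on the whole past coincide with its regression on $\{J_j^{(k)}=1,A_j,F_j^{(k)}\}$.

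That is the reduction carried out at the end of the proof of Lemma~\ref{lem:fh-true-bridge-anchor}:
\begin{itemize}
\item \eqref{ass:fh-markov-M1} together with \eqref{eq:fh-visit-C-independence} for $C_{j+1}$;
\item \eqref{ass:fh-markov-M3} for $R_{j+1}$;
\item \eqref{ass:fh-markov-M2} at time $j+1$, at upstream stages where the pseudo-outcome depends on $A_{j+1}$.
\end{itemize}
So \eqref{ass:fh-markov-M2}, used for the condition $\E(L\mid X,W)=\E(L\mid W)$ in the change of measure, is not the only Markov input. \eqref{ass:fh-markov-M1} and \eqref{ass:fh-markov-M3} are what carry the tower steps.

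Your measurability claim does hold on $\{J_s=1\}$ when $k=\infty$ (or $s\leq k$). Your fallback of proving the $k=\infty$ case first is therefore viable, but the transfer to finite $k$ is precisely this reduction. The paper's own proof also leaves this point implicit.
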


\begin{proof}
The two recursions agree after time \(s\), so \(m_{t+1,s+1}=\widetilde m_{t+1,s+1}\). Subtracting the two definitions of \(q_{t+1,s+1}\) and then applying \eqref{eq:fh-m-recursion} gives \eqref{eq:fh-local-bridge-replacement}; when \(s=t\), subtracting the two terminal pseudo-outcomes gives the stated terminal version. At every earlier time \(j\), the bridge \(\beta_j\) is the same in both recursions. Subtracting \eqref{eq:fh-q-recursion} and then \eqref{eq:fh-m-recursion} therefore gives \eqref{eq:fh-upstream-bridge-replacement}. Applying the density ratios \eqref{eq:fh-omega-t} at each preceding step and iterating to \(j=1\) gives the first equality in \eqref{eq:fh-loading-representation}. The second equality follows from \eqref{eq:fh-phi-loading} and the definition of \(K_{s+1}^{(k)}\), and the last equality is the definition of the \(\langle\cdot,\cdot\rangle_{C,s}\) inner product.
\end{proof}

\subsection{Proof of Theorem~\ref{thm:fh-target-identification}}

\begin{proof}
Lemma~\ref{lem:fh-true-bridge-anchor} proves the theorem when every selected bridge is the true inverse-probability bridge. It remains to show that replacing a selected bridge solution by the true solution does not change the right-hand side of \eqref{eq:fh-theta}. Let
\[
\bar\beta
=
(\beta_1,\ldots,\beta_t)
\qquad\text{and}\qquad
\bar\beta^0
=
(\beta_1^0,\ldots,\beta_t^0).
\]
For \(r=0,\ldots,t\), define the hybrid bridge collection
\[
\bar\beta^{[r]}
=
(\beta_1^0,\ldots,\beta_r^0,
\beta_{r+1},\ldots,\beta_t),
\]
with the conventions
\[
\bar\beta^{[0]}=\bar\beta,
\qquad
\bar\beta^{[t]}=\bar\beta^0.
\]
Every component of every \(\bar\beta^{[r]}\) is a square-integrable solution
of \eqref{eq:fh-bridge-equation}. Let
\((m_{t+1,j}^{[r]},q_{t+1,j}^{[r]})\) denote the recursion
\eqref{eq:fh-terminal-q}--\eqref{eq:fh-q-recursion} evaluated using
\(\bar\beta^{[r]}\), and define
\[
\vartheta_r
:=
\E\!\left[
m_{t+1,1}^{[r]}
\{d_1(A_1,1,C_1),C_1\}
\right].
\]
By Lemma~\ref{lem:fh-true-bridge-anchor},
\begin{equation}
\vartheta_t
=
\theta_{t+1}(\bar d_t),
\label{eq:fh-invariance-true-anchor}
\end{equation}
because \(\bar\beta^{[t]}=\bar\beta^0\). It therefore suffices to show that
\(\vartheta_0=\vartheta_t\). Telescoping over the hybrid bridge collections gives
\begin{equation}
\vartheta_0-\vartheta_t
=
\sum_{s=1}^t
\{\vartheta_{s-1}-\vartheta_s\}.
\label{eq:fh-invariance-telescope}
\end{equation}
We show that every summand on the right-hand side is zero. Fix \(s\leq t\). The bridge collections
\(\bar\beta^{[s-1]}\) and \(\bar\beta^{[s]}\) differ only at time \(s\):
the former uses \(\beta_s\), whereas the latter uses \(\beta_s^0\).
All bridges before time \(s\) are equal to their true values in both
collections, and all bridges after time \(s\) are equal to the originally
selected values in both collections.

This also implies that the loading used to compare these two recursions is
the same under both hybrid collections. Indeed, by
\eqref{eq:fh-Omega}, \(\Omega_s\) depends on the
bridge collection only through
\(\beta_1,\ldots,\beta_{s-1}\), which are identical under
\(\bar\beta^{[s-1]}\) and \(\bar\beta^{[s]}\). Likewise, for \(s<t\),
\(\nu_{t+1,s+1}\) in \eqref{eq:fh-nu} depends on
\(m_{t+1,s+1}\), and the backward recursion defining
\(m_{t+1,s+1}\) involves only the bridges
\(\beta_{s+1},\ldots,\beta_t\). These bridges are also identical under the
two hybrid collections. When \(s=t\),
\(\nu_{t+1,t+1}=Y_{t+1}\) and hence does not depend on any bridge.
Consequently, the quantity
\begin{equation}
\phi_{t+1,s}^{[s]}(F_s^{(k)},C_{s+1})
:=
\E\!\left[
\Omega_s^{[s]}
\nu_{t+1,s+1}^{[s]}
\mid
K_{s+1}^{(k)}=1,F_s^{(k)},C_{s+1}
\right]
\label{eq:fh-invariance-hybrid-loading}
\end{equation}
is the same whether it is evaluated using
\(\bar\beta^{[s-1]}\) or \(\bar\beta^{[s]}\).

Lemma~\ref{lem:fh-bridge-replacement} may therefore be applied to these two
adjacent hybrid recursions, with
\(\widetilde\beta_s=\beta_s^0\). It gives
\begin{equation}
\vartheta_{s-1}-\vartheta_s
=
\left\langle
\beta_s-\beta_s^0,
\phi_{t+1,s}^{[s]}
\right\rangle_{C,s}.
\notag
\end{equation}
Because both \(\beta_s\) and \(\beta_s^0\) solve
\eqref{eq:fh-bridge-equation},
\[
T_s\beta_s=1
\qquad\text{and}\qquad
T_s\beta_s^0=1,
\]
and hence, by linearity of \(T_s\),
\begin{equation}
T_s(\beta_s-\beta_s^0)=0.
\label{eq:fh-null-bridge-difference}
\end{equation}

Moreover, \(\bar\beta^{[s]}\) is itself a square-integrable bridge collection
satisfying \eqref{eq:fh-bridge-equation}. Therefore
\Cref{ass:fh-target-dual-range}, which is imposed for every such bridge
collection, applies to the loading in
\eqref{eq:fh-invariance-hybrid-loading}. Thus there exists a
square-integrable function
\(\lambda_{t+1,s}^{[s]}(A_s,F_s^{(k)})\) such that
\begin{equation}
T_s^\star\lambda_{t+1,s}^{[s]}
=
\phi_{t+1,s}^{[s]}.
\label{eq:fh-invariance-hybrid-adjoint}
\end{equation}
Using \eqref{eq:fh-invariance-hybrid-adjoint}, the adjoint identity, and
\eqref{eq:fh-null-bridge-difference}, we obtain
\begin{align*}
\vartheta_{s-1}-\vartheta_s
&=
\left\langle
\beta_s-\beta_s^0,
\phi_{t+1,s}^{[s]}
\right\rangle_{C,s}\\
&=
\left\langle
\beta_s-\beta_s^0,
T_s^\star\lambda_{t+1,s}^{[s]}
\right\rangle_{C,s}\\
&=
\left\langle
T_s(\beta_s-\beta_s^0),
\lambda_{t+1,s}^{[s]}
\right\rangle_{A,s}\\
&=0.
\end{align*}
Since this holds for every \(s=1,\ldots,t\),
\eqref{eq:fh-invariance-telescope} yields
\[
\vartheta_0=\vartheta_t.
\]
Combining this equality with \eqref{eq:fh-invariance-true-anchor} gives
\[
\E\!\left[
m_{t+1,1}^{[0]}
\{d_1(A_1,1,C_1),C_1\}
\right]
=
\theta_{t+1}(\bar d_t).
\]
Because \(\bar\beta^{[0]}=\bar\beta\) was an arbitrary square-integrable
collection of solutions to \eqref{eq:fh-bridge-equation}, this proves
\eqref{eq:fh-theta} and shows that its right-hand side is invariant to the
selected bridge solutions.
\end{proof}

\subsection{Proof of Theorem~\ref{thm:fh-if}}\label{proof:thm:fh-if}

\begin{assumption}[Regularity of the selected full-history bridge paths]\label{ass:fh-beta-path-regularity}
Along every regular one-dimensional submodel \(\{P_\varepsilon\}\) through \(P\), select for every \(s\leq t\) a square-integrable solution \(\beta_{s,\varepsilon}\) of
\[
T_{s,\varepsilon}\beta_{s,\varepsilon}=1.
\]
Assume there is a square-integrable function
\(\dot\beta_s(F_s^{(k)},C_{s+1})\) such that
\[
\|\beta_{s,\varepsilon}-\beta_s-\varepsilon\dot\beta_s\|_{L^2(P)}
=
o(|\varepsilon|)
\]
for every \(s\leq t\), and assume the domination conditions required to interchange differentiation and the conditional expectations below.
\end{assumption}

\begin{proof}
Fix a regular parametric submodel
\(\{P_\varepsilon:\varepsilon\in(-\delta,\delta)\}\) through \(P=P_0\), and let
\[
S(O)
=
\left.
\frac{\partial}{\partial\varepsilon}
\log p_\varepsilon(O)
\right|_{\varepsilon=0}
\]
denote its score. A dot denotes differentiation at \(\varepsilon=0\). We decompose the derivative into the contribution obtained by holding the selected bridge functions fixed and the contribution induced by their pathwise derivatives.

\medskip
\noindent\textbf{Part I: derivative holding the bridge functions fixed.}

For \(s=1,\ldots,t\), write
\[
e_{t+1,s}(O)
=
\eta_{t+1,s+1}(O)
-
m_{t+1,s}(A_s,F_s^{(k)}).
\]
Holding the bridge functions fixed, differentiation of \eqref{eq:fh-m-recursion} gives
\begin{align}
\dot m_{t+1,s}(a_s,f_s)
={}&
\E\!\left[
e_{t+1,s}(O)S(O)
\mid
J_s^{(k)}=1,A_s=a_s,F_s^{(k)}=f_s
\right]
\nonumber\\
&+
\E\!\left[
\dot\eta_{t+1,s+1}(O)
\mid
J_s^{(k)}=1,A_s=a_s,F_s^{(k)}=f_s
\right].
\label{eq:fh-proof-mdot}
\end{align}
At the terminal step, \(\eta_{t+1,t+1}=R_{t+1}\beta_tY_{t+1}\) is held fixed as a function of the observed data, so
\[
\dot\eta_{t+1,t+1}=0
\]
in this part of the calculation.

Differentiating \eqref{eq:fh-theta} while holding the bridges fixed gives
\begin{align}
\dot\theta^{\mathrm{seq}}
={}&
\E\!\left[
\left\{
m_{t+1,1}\{d_1(A_1,1,C_1),C_1\}
-
\theta_{t+1}(\bar d_t)
\right\}
S(O)
\right]
\nonumber\\
&+
\E\!\left[
\dot m_{t+1,1}\{d_1(A_1,1,C_1),C_1\}
\right].
\notag
\end{align}
By the definition of the pushforward density \(g_1^d\),
\begin{align}
\E\!\left[
\dot m_{t+1,1}\{d_1(A_1,1,C_1),C_1\}
\right]
&=
\E\!\left[
\omega_1^d(A_1,C_1)
\dot m_{t+1,1}(A_1,C_1)
\right]
\nonumber\\
&=
\E\!\left[
\Omega_1
\dot m_{t+1,1}(A_1,C_1)
\right].
\label{eq:fh-proof-baseline-change}
\end{align}
Substituting \eqref{eq:fh-proof-mdot} at \(s=1\) into
\eqref{eq:fh-proof-baseline-change} yields
\begin{align}
\E\!\left[
\Omega_1\dot m_{t+1,1}(A_1,C_1)
\right]
=
\E\!\left[
\Omega_1e_{t+1,1}(O)S(O)
\right]
+
\E\!\left[
\Omega_1\dot\eta_{t+1,2}(O)
\right].
\label{eq:fh-proof-first-step}
\end{align}

For \(s=1,\ldots,t-1\), differentiation of the pseudo-outcome
\eqref{eq:fh-q-recursion}, still holding the bridge functions fixed, gives
\begin{align}
\dot\eta_{t+1,s+1}(O)
=
R_{s+1}\bigg[
&
\dot m_{t+1,s+1}
\{d_{s+1}(A_{s+1},1,F_{s+1}^{(k)}),F_{s+1}^{(k)}\}
\nonumber\\[-1mm]
&+
\{\beta_s(F_s^{(k)},C_{s+1})-1\}
\dot m_{t+1,s+1}
\{t_{s+1}(\alpha_{s+1}(H_{s+1}^{(k)}),\epsilon_{s+1}),F_{s+1}^{(k)}\}
\bigg].
\label{eq:fh-proof-etadot}
\end{align}
The two density-ratio identities in
\eqref{eq:fh-omega-t}, together with
\eqref{ass:fh-markov-M2}, give
\begin{align}
&\E\!\left[
\Omega_s\dot\eta_{t+1,s+1}(O)
\right]
\nonumber\\
&\qquad=
\E\!\left[
\Omega_sR_{s+1}
\left\{
\omega_{s+1}^d(A_{s+1},F_{s+1}^{(k)})
+
\{\beta_s(F_s^{(k)},C_{s+1})-1\}
\omega_{s+1}^t(A_{s+1},F_{s+1}^{(k)})
\right\}
\right.
\nonumber\\[-1mm]
&\hspace{53mm}\left.
\times
\dot m_{t+1,s+1}(A_{s+1},F_{s+1}^{(k)})
\right]
\nonumber\\
&\qquad=
\E\!\left[
\Omega_{s+1}
\dot m_{t+1,s+1}(A_{s+1},F_{s+1}^{(k)})
\right].
\label{eq:fh-proof-propagation}
\end{align}
To justify the first equality, condition on
\(\{J_{s+1}^{(k)}=1,F_{s+1}^{(k)}\}\). On this event, the two evaluations of
\(\dot m_{t+1,s+1}\) in \eqref{eq:fh-proof-etadot} are converted to evaluation at the observed \(A_{s+1}\) by the definitions of \(g_{s+1}^d\) and \(g_{s+1}^t\). Any part of \(\Omega_s\) determined by history preceding the retained state is conditionally independent of \(A_{s+1}\) by \eqref{ass:fh-markov-M2}, so the same change of measure remains valid after multiplication by \(\Omega_s\).

Applying \eqref{eq:fh-proof-mdot} to the right-hand side of
\eqref{eq:fh-proof-propagation} gives
\begin{align*}
&\E\!\left[
\Omega_{s+1}e_{t+1,s+1}(O)S(O)
\right]
+
\E\!\left[
\Omega_{s+1}\dot\eta_{t+1,s+2}(O)
\right].
\end{align*}
Starting from \eqref{eq:fh-proof-first-step}, applying this identity successively through time \(t\), and using
\(\dot\eta_{t+1,t+1}=0\) at the terminal step gives
\begin{align}
\dot\theta^{\mathrm{seq}}
=
\E\!\left[
\left\{
m_{t+1,1}\{d_1(A_1,1,C_1),C_1\}
-
\theta_{t+1}(\bar d_t)
+
\sum_{s=1}^t
\Omega_s e_{t+1,s}(O)
\right\}
S(O)
\right].
\label{eq:fh-proof-seq-gradient}
\end{align}
The expression multiplying \(S(O)\) is exactly
\(D_t^{\mathrm{seq}}(O)\) in \eqref{eq:fh-seq-gradient}. Hence
\[
\dot\theta^{\mathrm{seq}}
=
\E\{D_t^{\mathrm{seq}}(O)S(O)\}.
\]

\medskip
\noindent\textbf{Part II: derivative through the bridge functions.}

We now hold the observed-data law fixed and differentiate the dependence of the recursion on the selected bridge functions. By linearity of the first-order derivative, it is enough to perturb one bridge coordinate at a time and sum the resulting contributions.

Fix \(s\leq t\). The continuation after time \(s\) depends only on
\(\beta_{s+1},\ldots,\beta_t\), while \(\Omega_s\) depends only on
\(\beta_1,\ldots,\beta_{s-1}\). Hence neither changes in the coordinate derivative with respect to \(\beta_s\). If \(s<t\), differentiation of
\eqref{eq:fh-q-recursion} at time \(s+1\) gives
\[
\dot\eta_{t+1,s+1}^{\beta_s}(O)
=
R_{s+1}
\nu_{t+1,s+1}
\dot\beta_s(F_s^{(k)},C_{s+1}),
\]
and the same display holds for \(s=t\) by
\eqref{eq:fh-terminal-q} and the convention
\(\nu_{t+1,t+1}=Y_{t+1}\).
Propagating this perturbation through the preceding regressions by the same calculation as in
\eqref{eq:fh-proof-baseline-change}--\eqref{eq:fh-proof-propagation} gives
\begin{align}
\dot\theta^{\beta_s}
&=
\E\!\left[
\Omega_s
R_{s+1}
\nu_{t+1,s+1}
\dot\beta_s(F_s^{(k)},C_{s+1})
\right]
\nonumber\\
&=
\E\!\left[
K_{s+1}^{(k)}
\phi_{t+1,s}(F_s^{(k)},C_{s+1})
\dot\beta_s(F_s^{(k)},C_{s+1})
\right]
\nonumber\\
&=
\left\langle
\dot\beta_s,
\phi_{t+1,s}
\right\rangle_{C,s}.
\notag
\end{align}
For the second equality, note that \(\Omega_s=0\) whenever
\(J_s^{(k)}=0\), and then apply the definition
\eqref{eq:fh-phi-loading} and iterated expectation.

By \Cref{ass:fh-target-dual-range}, choose
\(\lambda_{t+1,s}\) satisfying
\(T_s^\star\lambda_{t+1,s}=\phi_{t+1,s}\). The adjoint identity gives
\begin{align}
\dot\theta^{\beta_s}
&=
\left\langle
\dot\beta_s,T_s^\star\lambda_{t+1,s}
\right\rangle_{C,s}
\nonumber\\
&=
\left\langle
T_s\dot\beta_s,\lambda_{t+1,s}
\right\rangle_{A,s}
\nonumber\\
&=
\E\!\left[
J_s^{(k)}
\lambda_{t+1,s}(A_s,F_s^{(k)})
R_{s+1}
\dot\beta_s(F_s^{(k)},C_{s+1})
\right].
\label{eq:fh-proof-beta-adjoint}
\end{align}

The bridge equation \eqref{eq:fh-bridge-equation} is equivalent to
\begin{equation}
\E\!\left[
J_s^{(k)}
r_s(A_s,F_s^{(k)})
\{R_{s+1}\beta_s(F_s^{(k)},C_{s+1})-1\}
\right]
=
0
\label{eq:fh-proof-bridge-moment}
\end{equation}
for every square-integrable \(r_s(A_s,F_s^{(k)})\). Differentiating
\eqref{eq:fh-proof-bridge-moment} along the parametric submodel, while holding the test function fixed, gives
\begin{align}
0
={}&
\E\!\left[
J_s^{(k)}
r_s(A_s,F_s^{(k)})
R_{s+1}
\dot\beta_s(F_s^{(k)},C_{s+1})
\right]
\nonumber\\
&+
\E\!\left[
J_s^{(k)}
r_s(A_s,F_s^{(k)})
\{R_{s+1}\beta_s(F_s^{(k)},C_{s+1})-1\}
S(O)
\right].
\label{eq:fh-proof-linearized-bridge}
\end{align}
Taking \(r_s=\lambda_{t+1,s}\) in
\eqref{eq:fh-proof-linearized-bridge} and substituting into
\eqref{eq:fh-proof-beta-adjoint} yields
\begin{align}
\dot\theta^{\beta_s}
=
-\E\!\left[
J_s^{(k)}
\lambda_{t+1,s}(A_s,F_s^{(k)})
\{R_{s+1}\beta_s(F_s^{(k)},C_{s+1})-1\}
S(O)
\right].
\label{eq:fh-proof-beta-gradient}
\end{align}
Summing \eqref{eq:fh-proof-beta-gradient} over
\(s=1,\ldots,t\) gives
\[
\sum_{s=1}^t
\dot\theta^{\beta_s}
=
\E\{D_t^{\mathrm{br}}(O)S(O)\}.
\]
Combining this equality with \eqref{eq:fh-proof-seq-gradient} proves
\[
\left.
\frac{d}{d\varepsilon}
\theta_{t+1}(\bar d_t;P_\varepsilon)
\right|_{\varepsilon=0}
=
\E\{D_t(O)S(O)\}.
\]
Thus \(D_t\) is a gradient.

\end{proof}

\subsection{Proof of Proposition~\ref{prop:fh-second-order}}

\begin{proof}
For a distribution \(X\) and a fixed bridge collection
\(\bar\beta=(\beta_1,\ldots,\beta_t)\), let
\[
\theta_{t+1}(X,\bar\beta)
\]
denote the full-history sequential-regression functional under \(X\), with
the pseudo-outcomes in \eqref{eq:fh-terminal-q}--\eqref{eq:fh-q-recursion}
evaluated using \(\bar\beta\). Let
\(m_{t+1,s,X}^{\bar\beta}\) denote the corresponding sequential
regressions. In particular,
\[
\theta_{t+1}(\bar d_t;X)
=
\theta_{t+1}(X,\bar\beta_X).
\]

We use the same decomposition as in the proof of
Proposition~\ref{prop:vd-second-order}:
\begin{align}
&\theta_{t+1}(P,\bar\beta_P)
-
\theta_{t+1}(F,\bar\beta_F)
\nonumber\\
&\qquad=
\left\{
\theta_{t+1}(P,\bar\beta_P)
-
\theta_{t+1}(P,\bar\beta_F)
\right\}
+
\left\{
\theta_{t+1}(P,\bar\beta_F)
-
\theta_{t+1}(F,\bar\beta_F)
\right\}.
\label{eq:fh-rem-decomposition}
\end{align}
The second term changes the observed-data law while holding the bridge
collection fixed at \(\bar\beta_F\); the first changes only the bridge
collection while holding the law fixed at \(P\).

We first consider the second term. For \(X\in\{P,F\}\), define the
propagation factors associated with the fixed collection \(\bar\beta_F\) by
\[
\Omega_{1,X}^{\bar\beta_F}
=
\omega_{1,X}^d(A_1,C_1)
\]
and, for \(s\geq2\),
\[
\Omega_{s,X}^{\bar\beta_F}
=
\Omega_{s-1,X}^{\bar\beta_F}R_s
\left[
\omega_{s,X}^d(A_s,F_s^{(k)})
+
\{\beta_{s-1,F}(F_{s-1}^{(k)},C_s)-1\}
\omega_{s,X}^t(A_s,F_s^{(k)})
\right].
\]
Thus \(\Omega_{s,F}^{\bar\beta_F}=\Omega_{s,F}\). Applying the same
backward telescoping argument as in
\eqref{eq:vd-rem-standard-seq}--\eqref{eq:vd-rem-standard-seq-remainder},
with the two change-of-measure identities corresponding to
\(\omega_s^d\) and \(\omega_s^t\), gives
\begin{align}
&\theta_{t+1}(P,\bar\beta_F)
-
\theta_{t+1}(F,\bar\beta_F)
\nonumber\\
&\qquad=
\E_P\{D_t^{\mathrm{seq}}(O;F)\}
-
R_{t,2}^{\mathrm{seq}}(P,F;\bar\beta_F),
\label{eq:fh-rem-standard-seq}
\end{align}
where
\begin{align}
R_{t,2}^{\mathrm{seq}}(P,F;\bar\beta_F)
={}&
\E_P\!\left[
\{\omega_{1,P}^d-\omega_{1,F}^d\}
\{m_{t+1,1,F}^{\bar\beta_F}
-m_{t+1,1,P}^{\bar\beta_F}\}
\right]
\nonumber\\
&+
\sum_{s=2}^t
\E_P\!\left[
\Omega_{s-1,F}R_s
\left\{
\omega_{s,P}^d-\omega_{s,F}^d
\right.\right.
\nonumber\\[-1mm]
&\hspace{35mm}\left.\left.
+
(\beta_{s-1,F}-1)
(\omega_{s,P}^t-\omega_{s,F}^t)
\right\}
\right.
\nonumber\\[-1mm]
&\hspace{35mm}\left.
\times
\{m_{t+1,s,F}^{\bar\beta_F}
-m_{t+1,s,P}^{\bar\beta_F}\}
\right].
\notag
\end{align}
All functions in the last display are evaluated at their natural arguments.
The intervention randomizers are averaged over their known laws, as in the
definition of the sequential regressions. By uniform boundedness of the
propagation factors and bridge functions and Cauchy--Schwarz,
\begin{align}
|R_{t,2}^{\mathrm{seq}}(P,F;\bar\beta_F)|
&\lesssim
\sum_{s=1}^t
\|\omega^d_{s,P}-\omega^d_{s,F}\|_{2,P}
\|m^{\bar\beta_F}_{t+1,s,P}
-m^{\bar\beta_F}_{t+1,s,F}\|_{2,P}
\nonumber\\
&\quad+
\sum_{s=2}^t
\|\omega^t_{s,P}-\omega^t_{s,F}\|_{2,P}
\|m^{\bar\beta_F}_{t+1,s,P}
-m^{\bar\beta_F}_{t+1,s,F}\|_{2,P}.
\label{eq:fh-rem-seq-bound}
\end{align}

It remains to study the first term in \eqref{eq:fh-rem-decomposition}. This
is the only additional calculation relative to the observed-history proof,
because the identifying functional now depends on a collection of bridge
functions rather than on a single terminal bridge. Write
\[
\Delta_s
=
\beta_{s,P}-\beta_{s,F},
\qquad s=1,\ldots,t.
\]
For \(s=0,\ldots,t\), define the hybrid bridge collection
\[
\bar\beta^{[s]}
=
(\beta_{1,P},\ldots,\beta_{s,P},
\beta_{s+1,F},\ldots,\beta_{t,F}),
\]
with the conventions
\[
\bar\beta^{[0]}=\bar\beta_F,
\qquad
\bar\beta^{[t]}=\bar\beta_P.
\]
Holding the law fixed at \(P\), set
\[
\vartheta_s
:=
\theta_{t+1}(P,\bar\beta^{[s]}).
\]
The bridge difference therefore has the exact telescoping representation
\begin{align}
&\theta_{t+1}(P,\bar\beta_P)
-
\theta_{t+1}(P,\bar\beta_F)
\nonumber\\
&\qquad=
\vartheta_t-\vartheta_0
=
\sum_{s=1}^t
(\vartheta_s-\vartheta_{s-1}).
\label{eq:fh-rem-bridge-telescope}
\end{align}
Fix \(s\). The adjacent collections \(\bar\beta^{[s]}\) and
\(\bar\beta^{[s-1]}\) differ only in their \(s\)th coordinate. They use
\(\beta_{1,P},\ldots,\beta_{s-1,P}\) at all earlier times and
\(\beta_{s+1,F},\ldots,\beta_{t,F}\) at all later times. Consequently, the
loading in Lemma~\ref{lem:fh-bridge-replacement} is common to the two
recursions: \(\Omega_s\) depends only on bridges before \(s\), whereas
\(\nu_{t+1,s+1}\) depends only on bridges after \(s\). Denote this common
loading, evaluated under \(P\), by
\begin{equation}
\phi_{t+1,s,P}^{[s]}(F_s^{(k)},C_{s+1})
:=
\E_P\!\left[
\Omega_{s,P}\nu_{t+1,s+1,P}^{[s]}
\mid
K_{s+1}^{(k)}=1,F_s^{(k)},C_{s+1}
\right],
\notag
\end{equation}
where \(\nu_{t+1,s+1,P}^{[s]}\) is the continuation obtained from the later
bridges \((\beta_{s+1,F},\ldots,\beta_{t,F})\). Lemma~\ref{lem:fh-bridge-replacement}
now gives
\begin{equation}
\vartheta_s-\vartheta_{s-1}
=
\left\langle
\Delta_s,
\phi_{t+1,s,P}^{[s]}
\right\rangle_{C,s,P}.
\notag
\end{equation}
To separate the first-order contribution from interactions among bridge
errors, add and subtract the loading \(\phi_{t+1,s,P}\) generated by the
full collection \(\bar\beta_P\):
\begin{align}
\vartheta_s-\vartheta_{s-1}
={}&
\left\langle
\Delta_s,
\phi_{t+1,s,P}
\right\rangle_{C,s,P}
\nonumber\\
&+
\left\langle
\Delta_s,
\phi_{t+1,s,P}^{[s]}-\phi_{t+1,s,P}
\right\rangle_{C,s,P}.
\label{eq:fh-rem-hybrid-split}
\end{align}
Summing \eqref{eq:fh-rem-hybrid-split} in
\eqref{eq:fh-rem-bridge-telescope} yields the exact decomposition
\begin{equation}
\theta_{t+1}(P,\bar\beta_P)
-
\theta_{t+1}(P,\bar\beta_F)=
\sum_{s=1}^t
\left\langle
\Delta_s,
\phi_{t+1,s,P}
\right\rangle_{C,s,P}
+
R_{t,2}^{\mathrm{mix}}(P,F),
\label{eq:fh-rem-bridge-expansion}
\end{equation}
where the mixed bridge remainder is explicitly
\begin{equation}
R_{t,2}^{\mathrm{mix}}(P,F)
:=
\sum_{s=1}^t
\left\langle
\Delta_s,
\phi_{t+1,s,P}^{[s]}-\phi_{t+1,s,P}
\right\rangle_{C,s,P}.
\label{eq:fh-rem-mixed-bridge-definition}
\end{equation}
We next show why the last display is second order in the bridge errors. For a
fixed \(s\), let \(\phi_{t+1,s,P}^{[r]}\), \(r=s,\ldots,t\), denote the
time-\(s\) loading under the hybrid collection \(\bar\beta^{[r]}\). Because
the bridges before time \(s\) are equal to their \(P\)-values for every
\(r\geq s\), the propagation factor \(\Omega_{s,P}\) is the same in all of
these loadings. Moreover,
\[
\phi_{t+1,s,P}^{[t]}
=
\phi_{t+1,s,P}.
\]
A second telescope therefore gives
\begin{equation}
\phi_{t+1,s,P}^{[s]}-\phi_{t+1,s,P}
=
-\sum_{r=s+1}^t
\left\{
\phi_{t+1,s,P}^{[r]}
-
\phi_{t+1,s,P}^{[r-1]}
\right\}.
\label{eq:fh-rem-loading-telescope}
\end{equation}
For each \(r>s\), the two loadings in braces differ only because the
continuation after time \(s\) uses \(\beta_{r,P}\) rather than
\(\beta_{r,F}\). Starting from the local perturbation at time \(r\) and
propagating it backward to time \(s+1\) by the same conditional-expectation
and change-of-measure calculation used in
Lemma~\ref{lem:fh-bridge-replacement}, the difference is linear in
\(\Delta_r\). Under the stated uniform boundedness conditions, repeated
conditional expectation is an \(L^2(P)\) contraction and all intervening
multiplicative factors are uniformly bounded. Hence
\begin{equation}
\left\|
\phi_{t+1,s,P}^{[r]}
-
\phi_{t+1,s,P}^{[r-1]}
\right\|_{2,P}
\lesssim
\|\Delta_r\|_{2,P}.
\label{eq:fh-rem-loading-difference-bound}
\end{equation}
Combining \eqref{eq:fh-rem-loading-telescope} and
\eqref{eq:fh-rem-loading-difference-bound}, and then applying
Cauchy--Schwarz to \eqref{eq:fh-rem-mixed-bridge-definition}, gives
\begin{align}
|R_{t,2}^{\mathrm{mix}}(P,F)|
&\lesssim
\sum_{s=1}^t
\|\Delta_s\|_{2,P}
\sum_{r=s+1}^t
\|\Delta_r\|_{2,P}
\nonumber\\
&=
\sum_{1\leq s<r\leq t}
\|\beta_{s,P}-\beta_{s,F}\|_{2,P}
\|\beta_{r,P}-\beta_{r,F}\|_{2,P}.
\label{eq:fh-rem-mixed-bridge-bound}
\end{align}
It remains to identify the first term in
\eqref{eq:fh-rem-bridge-expansion}. By the adjoint range condition applied
to \(\bar\beta_P\), choose \(\lambda_{t+1,s,P}\) satisfying
\[
T_{s,P}^{\star}\lambda_{t+1,s,P}
=
\phi_{t+1,s,P}.
\]
The adjoint identity therefore gives
\begin{align}
\left\langle
\Delta_s,
\phi_{t+1,s,P}
\right\rangle_{C,s,P}
&=
\left\langle
T_{s,P}\Delta_s,
\lambda_{t+1,s,P}
\right\rangle_{A,s,P}
\nonumber\\
&=
\E_P\!\left[
J_s^{(k)}R_{s+1}
\lambda_{t+1,s,P}(A_s,F_s^{(k)})
\Delta_s(F_s^{(k)},C_{s+1})
\right].
\notag
\end{align}
Substituting this expression into \eqref{eq:fh-rem-bridge-expansion} gives
\begin{align}
&\theta_{t+1}(P,\bar\beta_P)
-
\theta_{t+1}(P,\bar\beta_F)
\nonumber\\
&\qquad=
\sum_{s=1}^t
\E_P\!\left[
J_s^{(k)}R_{s+1}
\lambda_{t+1,s,P}
(\beta_{s,P}-\beta_{s,F})
\right]
+
R_{t,2}^{\mathrm{mix}}(P,F).
\label{eq:fh-rem-bridge-P-loading}
\end{align}

We next evaluate the bridge component of the gradient at \(F\), exactly as
in the observed-history proof. Under \(P\), each \(\beta_{s,P}\) satisfies
\(T_{s,P}\beta_{s,P}=1\). Therefore
\begin{align}
\E_P\{D_t^{\mathrm{br}}(O;F)\}
&=
-\sum_{s=1}^t
\E_P\!\left[
J_s^{(k)}
\lambda_{t+1,s,F}
\{R_{s+1}\beta_{s,F}-1\}
\right]
\nonumber\\
&=
\sum_{s=1}^t
\E_P\!\left[
J_s^{(k)}
\lambda_{t+1,s,F}
T_{s,P}(\beta_{s,P}-\beta_{s,F})
\right]
\nonumber\\
&=
\sum_{s=1}^t
\E_P\!\left[
J_s^{(k)}R_{s+1}
\lambda_{t+1,s,F}
(\beta_{s,P}-\beta_{s,F})
\right].
\notag
\end{align}
Subtracting this expression from \eqref{eq:fh-rem-bridge-P-loading} yields
\begin{align}
&\theta_{t+1}(P,\bar\beta_P)
-
\theta_{t+1}(P,\bar\beta_F)
\nonumber\\
&\qquad=
\E_P\{D_t^{\mathrm{br}}(O;F)\}
\nonumber\\
&\qquad\quad+
\sum_{s=1}^t
\E_P\!\left[
J_s^{(k)}R_{s+1}
(\lambda_{t+1,s,P}-\lambda_{t+1,s,F})
(\beta_{s,P}-\beta_{s,F})
\right]
\nonumber\\
&\qquad\quad+
R_{t,2}^{\mathrm{mix}}(P,F).
\label{eq:fh-rem-bridge-exact}
\end{align}
By Cauchy--Schwarz, the middle term is bounded in absolute value by
\begin{equation}
\sum_{s=1}^t
\|\lambda_{t+1,s,P}-\lambda_{t+1,s,F}\|_{2,P}
\|\beta_{s,P}-\beta_{s,F}\|_{2,P}.
\label{eq:fh-rem-bridge-bound}
\end{equation}
Finally, combining \eqref{eq:fh-rem-decomposition},
\eqref{eq:fh-rem-standard-seq}, and \eqref{eq:fh-rem-bridge-exact} gives
\begin{align*}
&\theta_{t+1}(P,\bar\beta_P)
-
\theta_{t+1}(F,\bar\beta_F)
\\
&\qquad=
\E_P\{D_t(O;F)\}
-
R_{t,2}^{\mathrm{seq}}(P,F;\bar\beta_F)
\\
&\qquad\quad+
\sum_{s=1}^t
\E_P\!\left[
J_s^{(k)}R_{s+1}
(\lambda_{t+1,s,P}-\lambda_{t+1,s,F})
(\beta_{s,P}-\beta_{s,F})
\right]
+
R_{t,2}^{\mathrm{mix}}(P,F).
\end{align*}
Equivalently,
\[
\theta_{t+1}(\bar d_t;F)
-
\theta_{t+1}(\bar d_t;P)
=
-\E_P\{D_t(O;F)\}
+
R_{t,2}(P,F),
\]
where, explicitly,
\begin{align*}
R_{t,2}(P,F)
={}&
R_{t,2}^{\mathrm{seq}}(P,F;\bar\beta_F)
\\
&-
\sum_{s=1}^t
\E_P\!\left[
J_s^{(k)}R_{s+1}
(\lambda_{t+1,s,P}-\lambda_{t+1,s,F})
(\beta_{s,P}-\beta_{s,F})
\right]
-
R_{t,2}^{\mathrm{mix}}(P,F).
\end{align*}
The claimed bound follows from \eqref{eq:fh-rem-seq-bound},
\eqref{eq:fh-rem-bridge-bound}, and
\eqref{eq:fh-rem-mixed-bridge-bound}.
\end{proof}